\theoremstyle{plain}
\newtheorem{thm}{Theorem}[section]
\newtheorem{lem}[thm]{Lemma}
\newtheorem{prop}[thm]{Proposition}
\newtheorem{cor}{Corollary}
\theoremstyle{definition}
\newtheorem{defn}{Definition}[section]
\newtheorem{exmp}{Example}[section]
\newtheorem{numexmp}{Numerical Example}[section]
\def\sumn{\sum_{i=1}^n}
\title{Convergence Rate Bounds for Iterative Random Functions Using One-Shot Coupling}
\author{Sabrina Sixta and Jeffrey S. Rosenthal}
\def\mc{Markov chain }
\def\wrt{with respect to }
\def\vecx{\vec{X}}
\def\L{\mathcal{L}}
\def\X{\mathcal{X}}
\newcommand{\indep}{\perp \!\!\! \perp}
\def\tin{\tau^{-1}}
\def\ax{p/2}
\def\bx{C/2}
\def\ay{(k+p)/2}
\def\by{C/2}
\def\tinv{\tau^{-1}}
\def\aax{1/2}
\def\bbx{S/2}
\def\aay{(J+2)/2}
\def\bby{S/2}
\def\borel{\mathcal{B}}
\DeclarePairedDelimiter\abs{\lvert}{\rvert}
\DeclarePairedDelimiter\norm{\lVert}{\rVert}
\DeclarePairedDelimiter\Abs{\bigg\lvert}{\big\rvert}
\begin{document}
	\maketitle

\begin{abstract}
	One-shot coupling is a method of bounding the convergence rate between two copies of a Markov chain in total variation distance, which was first introduced in \cite{oneshot} and generalized in \cite{wasandoneshot}. The method is divided into two parts: the contraction phase, when the chains converge in expected distance and the coalescing phase, which occurs at the last iteration, when there is an attempt to couple. One-shot coupling does not require the use of any exogenous variables like a drift function or a minorization constant. In this paper, we summarize the one-shot coupling method into the One-Shot Coupling Theorem. We then apply the  theorem to two families of Markov chains: the random functional autoregressive process and the autoregressive conditional heteroscedastic (ARCH) process. We provide multiple examples of how the theorem can be used on various models including ones in high dimensions. These examples illustrate how the theorem's conditions can be verified in a straightforward way. The one-shot coupling method appears to generate tight geometric convergence rate bounds.
\end{abstract}

\tableofcontents

\section{Introduction}

The study of Markov chain convergence rates focuses on evaluating how fast a positive recurrent Markov chain converges to its stationary distribution. On one hand, a great deal of progress has been made in bounding the convergence rate for Markov chains defined in discrete state spaces \cite{discretemc, eigen, firstlook}. On the other hand, despite the major developments made in bounding Markov chains in continuous state space, many applications of continuous state space Markov chains do not have established convergence rate bounds. For example, convergence rate bounds applied to Markov chain Monte Carlo (MCMC) models, which are useful for deciding the size of the burn-in period \cite{honestexp, intromcmc}, do not have known upper bounds on their convergence rate \cite{intromcmc}. Users need to rely on ad-hoc convergence diagnostics (e.g., \cite{gelmanrubin}), which offer no guarantees.


Methods using the drift and minorization conditions (e.g., \cite{rose, baxendale}), which guarantee geometric ergodicity (definition \ref{def:gemoerg}), are the most studied techniques for bounding Markov chains in continuous state space \cite{robrose, honestexp}. The minorization condition is satisfied for a Markov chain $\{X_n\}_{n\geq 1}$ under the following circumstances: there exists a small set $K$, a probability measure $Q$ and a positive number $\epsilon>0$ such that $P(\cdot\mid X_n=x)\geq \epsilon Q(\cdot)$ for $x\in K$. The drift condition is satisfied if there exists a positive function $V$, and constants $\alpha>1$ such that $E[V(X_{n+1})\mid X_{n}=x]\leq V(x)/\alpha$ \cite{meyneandtweedie, robrose}. Bounds generated using the drift and minorization conditions have been applied to a wide array of problems such as \cite{stein, geomgibbs, honestexp}.

Despite the widespread use of bounds generated by the drift and minorization conditions, there are drawbacks. First, it can be a challenge to identify a small set $K$ and drift function $V$ \cite{wasandoneshot}. Second, it is shown in \cite{wasmeth} based on results from \cite{jerison} that bounds that use the minorization condition do not scale well in high dimensions. 

Alternatively, methods for finding Markov chain convergence rate bounds on the Wasserstein distance have been shown to scale well in high dimensions \cite{durmuswasconv, wasmeth}, so bounding the total variation distance by first bounding the Wasserstein distance is a common technique used in the literature \cite{wasmeth, wasandoneshot,waslinearmixedmodels}. 

One-shot coupling, which bounds the Wasserstein distance as an intermediate step \cite{wasandoneshot}, provides an upper bound on the convergence rate in total variation distance of a Markov chain. This method does not need to identify any exogenous sets or functions, like the drift and minorization conditions. Further, the one-shot coupling method has already been shown to scale well in certain high dimensional examples \cite{oneshot, oneshotnsphereexample} and will be shown in this paper to scale well in high dimensions for the Bayesian regression Gibbs sampler (Example \ref{ex:bayesianreggibbs}) and the Bayesian location Gibbs sampler (Example \ref{ex:anothergibbssampler}).

The one-shot coupling method described in \cite{oneshot} works by first converging the expected distance between two copies of a Markov chain. At the last iteration, the probability of coupling is evaluated when the expected distance between the copies is small. This contrasts with the drift and minorization technique, which attempts to couple the two Markov chain copies every time they enter some fixed small set $K$. 


In this paper, we introduce the One-Shot Coupling Theorem in Section \ref{subsec:oneshotcoupling}, which aims to summarize the method defined in \cite{oneshot} and \cite{wasandoneshot} for straightforward applications. The One-Shot Coupling Theorem is used as the foundation for bounding the convergence rate for all of the examples in this paper, which can be partitioned into two families: the random functional autoregressive process and the ARCH process. In Section \ref{sec:rfautoregproc} we introduce the Sideways Theorem \ref{thm:oneshotlipschitz}, which is new and is an application of the One-Shot Coupling Theorem. We apply it to various examples of random functional autoregressive processes (definition \ref{def:ranfunARP}). In Section \ref{sec:linproc} we provide convergence rate bounds using the One-Shot Coupling Theorem to various ARCH processes (definition \ref{def:ARCH}).

Proofs for the theorems presented in this paper are found in the appendix, Section \ref{sec:appendix}. The code used to generate all of the tables and calculations can be found on \href{https://github.com/sixter/OneShotCoupling}{github.com/sixter/OneShotCoupling}.
\section{Background and notation} \label{subsec:notation}

Let  $\{X_n\}_{n\geq 1}$ and $\{X'_n\}_{n\geq 1}$ be two copies of the Markov chain over the state space $\X$ and define $\L(X_n)$ to be the distribution of the random variable $X_n$. We define $\pi$ to be the stationary distribution of the Markov chain.
%
\subsection{Total variation distance}\label{sub:tv}
We are interested in measuring the distance between the distribution of two Markov chains. To measure this we use the total variation metric.

\begin{defn}[Total variation distance]
	The total variation distance between the laws of two random variables, $X$ and $X'$, defined on the state space $\X$ is 
	$$\norm{\L(X)-\L(X')} = \sup_{A \subseteq \X}\abs{P(X \in A)-P(X' \in A)}$$
	where $\L(X)$ represents the distribution of the random variable $X$ and $A$ is a measurable set.
\end{defn}
For random variables, $X,X'\in \mathbb{R}$ with defined density functions $f_{X}, f_{X'}$ and reference measure $\lambda$, 
\begin{equation}\label{eqn:tv}
	\norm{\L(X)-\L(X')} = \frac{1}{2}\int_{\mathbb{R}}\abs{f_{X}(x)-f_{X'}(x)}\lambda (dx)
\end{equation}

Total variation distance has natural probability interpretations. It is the maximum difference in probabilities of an event. It is the error in an expected bounded loss function when a given measure is used as a proxy for another \cite{gibbs}. Finally, it can be seen as the percentage of samples of $\L(X)$ which cannot be regarded as samples from $\L(X')$ (Proposition 3(g) \cite{robrose}). 

Historically, total variation distance was the common metric for measuring Markov chain convergence rates \cite{robrose, meyneandtweedie, clt, honestexp} and hence, there is a rich literature of attributes that can be deduced from finding convergence rates in total variation. For example, mixing times in total variation distance can be used to determine whether the Markov chain is asymptotically uncorrelated (Theorem 2 of \cite{clt}), uniformly integrable (Theorem 3 of \cite{clt}), whether the central limit theorem (CLT) applies (Theorem 9 of \cite{clt} or Section 5.2 of \cite{robrose}), or whether it is convergent based on the total variation mixing times of another Markov chain (Theorem 8 of \cite{comp}). 

The following are properties of total variation, which will be used in conjunction with the One-Shot Coupling Theorem \ref{thm:oneshot} to establish upper bounds on the convergence rate for the examples in this paper. 

Proposition \ref{prop:invertibletv} states that the total variation between two random variables is equal to the total variation of any invertible transform of the same random variables. This proposition resembles Lemma 4.13 of \cite{markovmixing} and Lemma 3 of \cite{wasandoneshot}.

\begin{prop}\label{prop:invertibletv}
	Let $X,X'\in \mathcal{X}$ be two random variables and let $g: \mathcal{X}\to \mathcal{Y}$ be an invertible and measurable function. 
	Then,
	\begin{equation}\label{eqn:invertibletv}
		\norm{\L(g(X))-\L(g(X'))}=\norm{\L(X)-\L(X')}
	\end{equation}
The proof is in Section \ref{proof:invertibletv}.
\end{prop}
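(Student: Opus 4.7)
The plan is to prove the equality by establishing the two inequalities $\leq$ and $\geq$ separately, using the fact that the supremum in the definition of total variation distance ranges over measurable events, combined with the observation that preimages under an invertible measurable map are in natural bijection with the underlying events.

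First I would unwind the definition of total variation distance on the left-hand side: for any measurable $B \subseteq \Y$, one has $\{g(X) \in B\} = \{X \in g^{-1}(B)\}$ and similarly for $X'$, so
\[
\abs{P(g(X) \in B) - P(g(X') \in B)} = \abs{P(X \in g^{-1}(B)) - P(X' \in g^{-1}(B))}.
\]
Since $g$ is measurable, $g^{-1}(B)$ is a measurable subset of $\X$, so taking the supremum over $B$ shows that the right-hand side is at most $\norm{\L(X) - \L(X')}$. This gives the inequality $\leq$.

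For the reverse inequality, I would use invertibility. Because $g$ is an invertible measurable map with measurable inverse (which is the natural reading of the hypothesis, and without which the statement would fail), for every measurable $A \subseteq \X$ the image $g(A) = (g^{-1})^{-1}(A)$ is measurable in $\Y$, and $A = g^{-1}(g(A))$. Consequently,
\[
\abs{P(X \in A) - P(X' \in A)} = \abs{P(g(X) \in g(A)) - P(g(X') \in g(A))},
\]
and taking the supremum over $A$ shows $\norm{\L(X) - \L(X')} \leq \norm{\L(g(X)) - \L(g(X'))}$. Combining both inequalities yields the claimed equality.

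The only subtle point, and what I would flag as the main obstacle, is the measurability of $g^{-1}$: the proposition needs $g$ to be a bimeasurable bijection for the reverse inclusion to go through, since otherwise the image $g(A)$ of a measurable set need not be measurable. In the applications envisioned in this paper (typically $\X = \Y = \mathbb{R}^d$ with Borel sets and $g$ continuous with continuous inverse, or an affine bijection) this is automatic, so I would briefly note this and then the proof proceeds exactly as sketched above.
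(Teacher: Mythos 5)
Your proof is correct and follows essentially the same approach as the paper's: both arguments reduce to showing that the measurable sets of $\mathcal{X}$ and the preimages $g^{-1}(B)$ of measurable sets $B\subseteq\mathcal{Y}$ coincide, with measurability of $g$ giving one direction and invertibility giving the other. You are right to flag that the reverse direction implicitly requires $g^{-1}$ to be measurable so that $g(A)$ is measurable --- the paper's proof asserts $f(A)\in\mathcal{B}$ without comment, and your explicit note on bimeasurability is a genuine improvement in rigor, though the substance of the argument is identical.
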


In general, for a measurable (not necessarily invertible) function $g$, $g^{-1}(f(\borel))\subset \borel$, so the third equality in the proof becomes $\leq$ and $$\norm{\L(g(X))-\L(g(X'))} \leq \norm{\L(X)-\L(X')}$$

Proposition \ref{prop:tvexp} states that the total variation distance between two random variables is bounded above by the expected value of the conditional random variable.

\begin{prop}\label{prop:tvexp}
	Let $X,X'$ be two random variables with corresponding $\sigma$-field $\borel$ and $Y\in \mathcal{Y}$ be some related random variable. Then
	\begin{equation*}
		\norm{\L(X)-\L(X')}\leq E\left[ \norm{\L(X\mid Y)-\L(X'\mid Y)} \right]
	\end{equation*}
The proof is in Section \ref{proof:tvexp}.
\end{prop}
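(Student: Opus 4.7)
The plan is to unfold the definition of total variation, condition on $Y$, and then pull the absolute value inside an expectation using the triangle inequality. Specifically, fix a measurable set $A \subseteq \mathcal{X}$. By the tower property, $P(X \in A) = E[P(X \in A \mid Y)]$ and similarly $P(X' \in A) = E[P(X' \in A \mid Y)]$, so
\begin{equation*}
\abs{P(X \in A) - P(X' \in A)} = \Abs{E\!\left[P(X\in A \mid Y) - P(X'\in A \mid Y)\right]} \leq E\!\left[\,\abs{P(X\in A \mid Y) - P(X'\in A \mid Y)}\,\right]
\end{equation*}
by the triangle inequality (equivalently, Jensen's inequality applied to $|\cdot|$).

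Next, for each fixed realization of $Y$, the integrand is at most the supremum over all measurable sets, so
\begin{equation*}
\abs{P(X\in A \mid Y) - P(X'\in A \mid Y)} \leq \sup_{A' \subseteq \mathcal{X}} \abs{P(X\in A' \mid Y) - P(X'\in A' \mid Y)} = \norm{\L(X\mid Y) - \L(X'\mid Y)}
\end{equation*}
almost surely. Taking expectation preserves the bound, yielding $\abs{P(X \in A) - P(X' \in A)} \leq E[\norm{\L(X\mid Y) - \L(X'\mid Y)}]$ for every measurable $A$. Taking the supremum over $A$ on the left hand side then gives the stated inequality.

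The only subtlety, which I expect to be the main point requiring care, is measurability of the map $y \mapsto \norm{\L(X\mid Y=y) - \L(X'\mid Y=y)}$, since the total variation is defined as a supremum over an uncountable collection of sets; however, on standard (Polish) state spaces this is handled by expressing total variation as $\tfrac{1}{2}\int |f_{X\mid Y} - f_{X'\mid Y}|\, d\lambda$ via \eqref{eqn:tv} (for a common dominating measure supplied by regular conditional distributions), which is jointly measurable by Fubini. Once that technicality is in place, the three-line argument above completes the proof.
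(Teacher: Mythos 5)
Your proof is correct and follows essentially the same route as the paper's: conditioning on $Y$ via the tower property, applying the triangle/Jensen inequality, and bounding the conditional difference by the conditional total variation before taking the supremum over sets (the paper simply interchanges the supremum and the integral rather than fixing $A$ first, which is the same argument). Your remark on measurability of $y \mapsto \norm{\L(X\mid Y=y)-\L(X'\mid Y=y)}$ is a reasonable extra precaution that the paper leaves implicit.
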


Proposition \ref{prop:indcoord} states that the convergence rate of a Markov chain in $\mathbb{R}^d$ with independent coordinates is $d$ times the maximum coordinate-wise convergence rate.
This proposition is an application of inequality 1.2 of \cite{prodmeasures}.

\begin{prop}\label{prop:indcoord}
	Let $\{\vecx_n\}_{n\geq 1}\in \mathbb{R}^d$ be a Markov chain such that each coordinate is independent of the other coordinates, $X_{i,n} \indep X_{j,n}, i\neq j$. Further suppose that for two copies of the Markov chain $\{\vecx_n\}_{n\geq 1}$ and $\{\vecx'_n\}_{n\geq 1}$,
	$\max_{1\leq i\leq d}\norm{\L(X_{i,n})-\L(X'_{i,n})}\leq Ar^n$
	for some $A\in \mathbb{R}_+$ and $r\in(0,1)$. Then, 
	\begin{equation}\label{eqn:indcoord}
		\norm{\L(\vec{X}_n)-\L(\vec{X}'_n)}\leq dAr^n
	\end{equation}
	The proof is in Section \ref{proof:indcoord}.
\end{prop}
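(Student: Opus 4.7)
The plan is to exploit the independence assumption so that the joint law of $\vecx_n$ factors as a product measure, and then to invoke the standard product-measure total variation inequality cited in the proposition statement. Since total variation distance is an $L^1$-type metric, it is subadditive across independent coordinates, which is exactly the content of inequality 1.2 of \cite{prodmeasures}.

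First I would observe that because $X_{i,n} \indep X_{j,n}$ for $i\neq j$, the joint distribution satisfies $\L(\vecx_n) = \bigotimes_{i=1}^d \L(X_{i,n})$, and similarly $\L(\vecx'_n) = \bigotimes_{i=1}^d \L(X'_{i,n})$. Applying inequality 1.2 of \cite{prodmeasures} to these two product measures yields
\begin{equation*}
\norm{\L(\vecx_n)-\L(\vecx'_n)} \;\leq\; \sum_{i=1}^d \norm{\L(X_{i,n})-\L(X'_{i,n})}.
\end{equation*}

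Next I would bound each summand on the right by the coordinate-wise maximum and use the hypothesis that $\max_{1\leq i\leq d}\norm{\L(X_{i,n})-\L(X'_{i,n})}\leq Ar^n$, producing $d$ copies of $Ar^n$ and hence the claimed bound $dAr^n$.

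There is no real obstacle here; the argument is essentially a one-line application of a known inequality together with the hypothesis. The only point worth being careful about is that the independence assumption is on the coordinates within a single chain at time $n$, which is exactly what is needed to write each joint law as a product measure; no joint coupling between the two chains $\vecx_n$ and $\vecx'_n$ is required. If one preferred a self-contained derivation instead of citing \cite{prodmeasures}, the product-measure inequality itself can be proved by a short induction on $d$ using a telescoping argument that swaps one coordinate at a time and applies Proposition \ref{prop:tvexp} at each step.
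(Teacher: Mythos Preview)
Your proposal is correct. It differs from the paper's proof in that you invoke inequality 1.2 of \cite{prodmeasures} as a black box, whereas the paper essentially reproves that inequality from scratch via maximal coupling: for each coordinate $i$ it picks a maximal coupling $(X_{i,n}^M,X_{i,n}^{'M})$ with $P(X_{i,n}^M\neq X_{i,n}^{'M})=\norm{\L(X_{i,n})-\L(X'_{i,n})}$, forms the product coupling (using independence to verify that the product of the coordinatewise couplings has the correct joint marginals), and then applies the coupling characterization of total variation together with a union bound $P(\vecx_n^M\neq\vecx_n^{'M})\leq\sum_i P(X_{i,n}^M\neq X_{i,n}^{'M})$. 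Your route is shorter and entirely legitimate given that the paper already points to \cite{prodmeasures}; the paper's route is more self-contained and makes the role of independence explicit through the product-coupling construction. Your closing remark about a telescoping/induction alternative is also a valid third route.
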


In this paper, we establish convergence bounds for Markov chains that are geometrically ergodic in total variation distance. 
\begin{defn}[Geometric ergodicity]\label{def:gemoerg}
	Let $\{X_n\}_{n\geq 1}$ be a Markov chain with a stationary distribution $\pi$. The Markov chain is geometrically ergodic if there exists a $\rho<1$ and a function $M(x)<\infty$, $\pi$-a.e. such that for $X_0=x$,
	\begin{equation}\label{eqn:geomerg}
		\norm{\L(X_n)-\pi}\leq M(x)\rho^n
	\end{equation}
\end{defn}
The geometric rate of convergence for $X_n$ is defined as $\rho^* = \inf\{\rho : \text{ equation \ref{eqn:geomerg} holds}\}$.

Proposition 4 of \cite{wasmeth} states that for any sequence of drift and minorization conditions, the geometric convergence rate $\rho$ established by the Rosenthal bound (Theorem 12 of \cite{rose}) will increase at an exponential rate for the autoregressive normal process in $\mathbb{R}^d$ as the dimension $d\to \infty$. This finding suggests that convergence bounds that use the drift and minorization condition do not scale well in dimension (see Lemma 3 and discussion in \cite{wasmeth}). However, Proposition \ref{prop:indcoord} shows that since each coordinate in this example is independent, the geometric convergence $\rho$ rate is indeed invariant to dimension, regardless of the bounding approach. Thus a drift and minorization bound, including the Rosenthal bound, can easily be applied to the autoregressive normal process in $\mathbb{R}$ and then extended to $\mathbb{R}^d$ using Proposition \ref{prop:indcoord}. To see Proposition \ref{prop:indcoord} applied to the autoregressive normal process in $\mathbb{R}^d$, see Example \ref{ex:arnormind}.

\subsection{Wasserstein distance}
Let $X,X'\in \mathbb{R}$ be two random variables equipped with the Euclidean distance. The Wasserstein distance is defined as follows, 
$$W(\L(X), \L(X')) = \inf\{E[\abs{Y-Y'}] : \L(X)=\L(Y)\text{ and }\L(X')=\L(Y')\}$$

In comparison to total variation distance, there is not as much literature dedicated to Markov properties that can be derived from the convergence in Wasserstein distance. However, this literature \textit{is} growing. For example Jin and Tan provide sufficient conditions in  \cite{Jin2020CentralLT} for the CLT based on convergence in Wasserstein distance (see Theorems 9 and 10). 

%

\subsection{Coupling} \label{subsec:coupling}

Total variation can also be defined in terms of the coupling characterization \cite{gibbs},
\begin{align*}
	\norm{\L(X)-\L(X')} = \inf \{P(Y\neq Y') \mid \L(X)=\L(Y) \text{ and } \L(X')\neq \L(Y')\}
\end{align*}

%

The total variation metric measures the distance between two distributions, but is invariant to \emph{how} these measures are jointly distributed. For example, let $X\sim N(0,1)$ and $X' \sim N(1,1)$ be two random variables. Regardless of whether $X$ and $X'$ are highly dependent, for example if $X=X'+1$ or if $X, X'$ are independent, their total variation distance would be the same. The Nummelin splitting technique makes use of this by constructing alternative random variables, $Y$ and $Y'$, such that the marginal distributions are the same $\L(X)=\L(Y)$, $\L(X')=\L(Y')$, and the probability that they are unequal is minimised. This technique was first shown in \cite{numsplitting}. See \cite{eigen} or \cite{meyneandtweedie} for an explanation. Finally, note that the theory on maximal coupling guarantees that there exists alternative random variables $Y,Y'$ as defined above, such that $	\norm{\L(X)-\L(X')} = P(Y\neq Y')$ \cite{maximalcoupling}.

Coupling techniques are widely used to calculate total variation upper bounds on Markov chains \cite{firstlook, rose,eigen,robrose, stein, jun}. Let $\{X_n\}_{n\geq 1}$ and $\{X'_n\}_{n\geq 1}$ be two copies of a Markov chain. If we want to use the coupling characterization for finding an upper bound on the total variation distance, we must also make sure that the faithful coupling condition holds (see Section 2 of \cite{faith}).
That is, for any measurable set $A \in \mathcal{X}$,
\begin{align*}
	P(X_{n+1} \in A : X_n = x\text{ and } X'_n=x') &= P(X_{n+1} \in A : X_n = x)\\
	P(X'_{n+1} \in A : X_n = x\text{ and } X'_n=x') &= P(X'_{n+1} \in A : X'_n = x')
\end{align*}
If the faithful coupling condition holds, then calculating the probability that two chains are unequal at iteration $n$ can be interpreted as the probability that the two chains have not yet coupled by iteration $n$. This is because once the two Markov chains couple, they can be structured so that they are equal forever and so $P(X_n \neq X'_n)=P(T\geq n)$ where $T=\min\{k : X_k=X'_k\}$ (Theorem 1 of \cite{faith}). If a minorization condition holds on the Markov chain, then the faithful coupling condition also holds. For one-shot coupling, we do not need faithful coupling, because we only try to couple the chains at the last iteration.

\section{One-Shot Coupling} \label{subsec:oneshotcoupling}


	One-shot coupling is an alternative way of applying coupling methods to bound the total variation of two copies of a Markov chain. To apply one-shot coupling, we define a Markov chain  in terms of iterated random functions \cite{iteratedranfunc}. That is, define a family of random functions $\{g_{\vec{\theta}} : \vec{\theta} \in \pmb{\Theta}\}$ such that $\vec{\theta}_n=\{\theta_{1,n},\ldots, \theta_{d,n}\}$ is a random vector and 
$$X_{n}=g_{\vec{\theta}_{n}}(X_{n-1})$$
The $n$th iteration of the Markov chain  can be written in terms of $X_0=x$ as follows,
$$X_n = (g_{\vec{\theta}_{n}}\circ g_{\vec{\theta}_{n-1}}\ldots \circ g_{\vec{\theta}_{1}})(x) = g_{\vec{\theta}_{n}}(g_{\vec{\theta}_{n-1}}(\ldots g_{\vec{\theta}_{1}}(x) \ldots))$$

Summarizing Section 3 of \cite{oneshot}, to find an upper bound on the total variation distance between $X_N$ and $X'_N=g_{\vec{\theta}'_{N}}(X'_{N-1})$ we do the following.
\begin{enumerate}
	\item \textbf{Contraction phase: }For $n<N$, set $\vec{\theta}_{n}=\vec{\theta}'_{n}$ so that the two chains get `closer' together. 
	\item \textbf{Coalescing phase: }For $n=N$, we set all but one coordinate of $\vec{\theta}_{n}$ and $\vec{\theta}'_{n}$ to equality and attempt to couple $X_n$ and $X'_n$. That is, specify coordinate $j\in\{1,\ldots, d\}$ and set $\theta_{i,n}=\theta'_{i,n}$ for all $i\neq j$. Then we attempt to jointly choose $\theta_{j,n}$ and $\theta'_{j,n}$, such that
	$$g_{(\theta_{1,n},\ldots,\theta_{j,n}, \ldots, \theta_{d,n})}(X_{n-1}) = g_{(\theta_{1,n},\ldots,\theta'_{j,n}, \ldots, \theta_{d,n})}(X'_{n-1})$$
\end{enumerate} 

The method used in the contraction phase is also known as the common random number method and is discussed in detail in Section 2.3.1 of \cite{jacob}. The contraction phase can also be used to directly generate upper bounds in Wasserstein distance \cite{jacob, geomconvrates, gibbswass} (it is also used to generate bounds on other types of distances like Monge–Kantorovich or Prokhorov \cite{jacob}). 

The one-shot coupling method has been applied over a variety of specific examples, namely, a nested gamma model in \cite{oneshotnestingexmple}, an image restoration model in \cite{oneshotimageexample}, and a random walk on the unit sphere in \cite{oneshotnsphereexample}.

The contraction and coalescing phase described above is how the one-shot coupling method was first defined in \cite{oneshot}. The following theorem summarizes the above method and serves as a general outline for bounding the total variation distance between two Markov chains. The coalescing condition below does not specify \textit{how} the two chains will couple, unlike the method described above.

\begin{thm}[One-Shot Coupling Theorem]\label{thm:oneshot}
	Let $\{X_n\}_{n\geq 1}, \{X'_n\}_{n\geq 1}$ be two copies of a Markov chain such that $X_{n}=g_{\theta_{n}}(X_{n-1})$ and $X'_{n}=g_{\theta'_{n}}(X'_{n-1})$, where $(\theta_n, \theta'_n)_{n\geq 1}$ are independent random variables \wrt $n$ and the marginal distribution of $\theta_n, \theta'_n \sim \mathcal{D}$, for some distribution $\mathcal{D}$. Suppose that the following two conditions hold for some non-negative integer $n_0$.
	
	\begin{enumerate}
		\item \textbf{Contraction condition:} There exists a $D\in (0,1)$ such that for any $n\geq n_0$ when $\theta_{n+1}=\theta'_{n+1}\sim \mathcal{D}$
		$$E[\abs{g_{\theta_{n+1}}(X_{n})-g_{\theta_{n+1}}(X'_{n})}]\leq DE[\abs{X_{n}-X'_{n}}]$$

		\item \textbf{Coalescing condition:} There exists a $C>0$ such that for any $n\geq n_0$
		$$\norm{\L(X_{n})-\L(X'_{n}))}\leq CE[\abs{X_{n-1}-X'_{n-1}}]$$
	\end{enumerate}
Then the total variation distance between the two Markov chains at iteration $n\geq n_0$ is 
$$ \norm{\L(X_{n})-\L(X'_{n})}\leq C D^{n-n_0-1} E[\abs{X_{n_0}-X'_{n_0}}] $$
\end{thm}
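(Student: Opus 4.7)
The plan is to build a specific joint coupling of the two paths that exploits the fact that total variation distance depends only on the marginal laws of $X_n$ and $X'_n$, not on how they are jointly distributed. First I would construct a coupled pair of trajectories on $[n_0, n]$ in which the driving variables agree, $\theta_k = \theta'_k \sim \mathcal{D}$, for every index $k$ with $n_0 < k \leq n-1$, while the joint law of $(\theta_n, \theta'_n)$ at the terminal step is left arbitrary (with the required $\mathcal{D}$ marginals). The theorem only fixes the marginal distribution of each $\theta_k$ and the inter-$k$ independence, so this choice is legitimate, and each $X_k$ and $X'_k$ still has its intended marginal law; hence $\norm{\L(X_n)-\L(X'_n)}$ is unchanged from the original setup.

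Second, I would invoke the coalescing condition at the final step to obtain
\begin{equation*}
\norm{\L(X_n) - \L(X'_n)} \leq C\, E[\abs{X_{n-1} - X'_{n-1}}].
\end{equation*}

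Third, I would iterate the contraction condition down the common-random-number segment of the coupling. For any $k$ with $n_0 < k \leq n-1$ the relations $X_k = g_{\theta_k}(X_{k-1})$ and $X'_k = g_{\theta_k}(X'_{k-1})$ hold with a shared $\theta_k \sim \mathcal{D}$, so the contraction condition (applied at time $k-1$) gives
\begin{equation*}
E[\abs{X_k - X'_k}] \leq D\, E[\abs{X_{k-1} - X'_{k-1}}].
\end{equation*}
A short induction on $k$ then yields $E[\abs{X_{n-1} - X'_{n-1}}] \leq D^{n-1-n_0}\, E[\abs{X_{n_0} - X'_{n_0}}]$, and combining with the coalescing bound above produces the claimed inequality.

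The only genuinely subtle point, and the real content of the argument, is the observation in the first step that the original joint law of the two chains may be freely replaced by the one-shot coupling described without affecting the total variation distance; without this replacement the contraction condition as stated would be unusable, since it explicitly requires the two chains to be driven by the same $\theta$. Once this coupling is in place, the coalescing step supplies the constant $C$, the contraction step supplies the geometric factor $D^{n-n_0-1}$, and no further estimates are needed.
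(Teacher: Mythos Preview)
Your proposal is correct and follows essentially the same route as the paper: construct a coupling that uses common driving variables $\theta_k=\theta'_k$ on the segment $n_0<k\le n-1$ and an arbitrary joint law at the terminal step, observe that this preserves the marginal laws (so the total variation distance is unchanged), then apply the coalescing bound at step $n$ and iterate the contraction bound back to $n_0$. The paper makes the coupling step slightly more explicit by introducing auxiliary variables $Y_m,Y'_m$ and checking $Y_m\overset{d}{=}X_m$, $Y'_m\overset{d}{=}X'_m$, but the argument is the same.
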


\begin{proof}[Proof of the One-Shot Coupling Theorem \ref{thm:oneshot}]
		Fix $n\geq n_0$. We are interested in finding an upper bound on $\norm{\L(X_{n})-\L(X'_{n})}$. To do so, we first generate alternative random variables, $Y_n, Y'_n$ such that
		\begin{enumerate}
				\item for $0\leq m \leq n_0$: $Y_{m}=X_{m}, Y'_{m}=X'_{m}$ 
				\item for $n_0 < m< n$: $\theta_{m}=\theta'_{m}\sim \mathcal{D}$ and $Y_{m}=g_{\theta_{m}}(Y_{m-1}), Y'_{m}=g_{\theta_{m}}(Y'_{m-1})$.
				\item for $m=n$:  $\theta_{m},\theta'_{m}\sim \mathcal{D}$ with an arbitrary joint distribution and $Y_{m}=g_{\theta_{m}}(Y_{m-1}), Y'_{m}=g_{\theta'_{m}}(Y'_{m-1})$
			\end{enumerate}
		By construction, $Y_m\overset{d}{=}X_m$ and $Y'_m\overset{d}{=}X'_m$ for $0\leq m\leq n$.
		
		Next we find an upper bound on the total variation distance between $Y_{n}$ and $ Y'_{n}$. 
		By the contraction condition for $n_0\leq m <n$, $E[\abs{g_{\theta_{m+1}}(Y_{m})-g_{\theta_{m+1}}(Y'_{m})}]\leq DE[\abs{Y_{m}-Y'_{m}}]$ and so,
		\begin{align*}
				E[\abs{Y_{n-1}-Y'_{n-1}}] = E[\abs{g_{\theta_{n-1}}(Y_{n-2})-g_{\theta_{n-1}}(Y'_{n-2})}] \leq D E[\abs{Y_{n-2}-Y'_{n-2}}]\leq D^{n-n_0-1} E[\abs{Y_{n_0}-Y'_{n_0}}]
			\end{align*}
		
		By the coalescing condition,
		\begin{align*}
				\norm{\L(Y_{n})-\L(Y'_{n}))}&\leq CE[\abs{Y_{n-1}-Y'_{n-1}}]\leq C D^{n-n_0-1} E[\abs{Y_{n_0}-Y'_{n_0}}]=C D^{n-n_0-1} E[\abs{X_{n_0}-X'_{n_0}}]
			\end{align*}
		Finally since $Y_n\overset{d}{=}X_n$ and $Y'_n\overset{d}{=}X'_n$,
		$$\norm{\L(X_{n})-\L(X'_{n})}=\norm{\L(Y_{n})-\L(Y'_{n})}\leq C D^{n-n_0-1} E[\abs{X_{n_0}-X'_{n_0}}]$$
\end{proof}

If $\L(X_n)$ has a density function with respect to $X_{n-1}=x$, $f(x,z)$, then Theorem \ref{thm:oneshot} can be proven with Wasserstein distance as an intermediary using the following lemma. 

\begin{lem}[Theorem 12 of \cite{wasandoneshot}]\label{lem:coalescwas}
	If $\frac{1}{2}\int_{\mathcal{X}} \abs{f(x,z)-f(x',z)}\lambda(dx)\leq C \abs{x-x'}$ holds, then for $n\geq 0$
	$$\norm{\L(X_n)-\L(X'_n)}\leq CW(\L(X_{n-1}), \L(X'_{n-1}))$$
\end{lem}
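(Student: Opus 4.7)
The plan is to condition on the previous step via a coupling of $X_{n-1}$ and $X'_{n-1}$, bound the resulting conditional total variation using the density hypothesis, and then optimize over couplings to produce the Wasserstein distance on the right-hand side.

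First, fix an arbitrary coupling $(Z,Z')$ of the laws $\L(X_{n-1})$ and $\L(X'_{n-1})$, i.e.\ jointly distributed random variables with $\L(Z)=\L(X_{n-1})$ and $\L(Z')=\L(X'_{n-1})$. Using independent additional randomness, draw $\tilde X_n$ from the conditional density $f(Z,\cdot)$ and $\tilde X'_n$ from $f(Z',\cdot)$, so that $\L(\tilde X_n)=\L(X_n)$ and $\L(\tilde X'_n)=\L(X'_n)$. By Proposition \ref{prop:invertibletv}-style invariance under identity relabelling, total variation depends only on marginals, so
$$\norm{\L(X_n)-\L(X'_n)} = \norm{\L(\tilde X_n)-\L(\tilde X'_n)}.$$

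Second, I apply Proposition \ref{prop:tvexp} with the pair $(Z,Z')$ in the role of the conditioning variable $Y$, which gives
$$\norm{\L(\tilde X_n)-\L(\tilde X'_n)} \leq E\!\left[\norm{\L(\tilde X_n\mid Z,Z') - \L(\tilde X'_n\mid Z,Z')}\right].$$
Conditional on $(Z,Z')=(x,x')$, the laws of $\tilde X_n$ and $\tilde X'_n$ have densities $f(x,\cdot)$ and $f(x',\cdot)$ respectively, so by the density formula \eqref{eqn:tv} for total variation,
$$\norm{\L(\tilde X_n\mid Z=x,Z'=x') - \L(\tilde X'_n\mid Z=x,Z'=x')} = \frac{1}{2}\int_{\X}\abs{f(x,z)-f(x',z)}\lambda(dz) \leq C\abs{x-x'},$$
where the final inequality is the hypothesis. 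Substituting and taking expectation yields
$$\norm{\L(X_n)-\L(X'_n)} \leq C\, E[\abs{Z-Z'}].$$

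Finally, since $(Z,Z')$ was an arbitrary coupling of $\L(X_{n-1})$ and $\L(X'_{n-1})$, I take the infimum of the right-hand side over all such couplings; by definition of the Wasserstein distance this infimum equals $C\,W(\L(X_{n-1}),\L(X'_{n-1}))$, which is the claim. The main obstacle, modest as it is, is justifying the conditioning step cleanly: one must set up the transition step using auxiliary independent randomness so that the marginals $\L(X_n),\L(X'_n)$ are preserved under any choice of coupling $(Z,Z')$, which is why I introduced the alternative variables $(\tilde X_n,\tilde X'_n)$ rather than trying to redefine $X_n,X'_n$ directly.
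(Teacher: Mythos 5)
Your proof is correct, and the argument is the standard one for this kind of statement: fix a coupling $(Z,Z')$ of the two time-$(n-1)$ laws, generate the next step from the transition density using fresh independent randomness so that the marginals of $\tilde X_n$ and $\tilde X'_n$ agree with those of $X_n$ and $X'_n$, bound the conditional total variation pointwise by the density-Lipschitz hypothesis, apply Proposition~\ref{prop:tvexp} to integrate, and finally infimize over couplings to recover the Wasserstein distance. Note that the paper does not give its own proof of this lemma --- it cites it as Theorem~12 of \cite{wasandoneshot} and uses it as a black box --- so there is nothing internal to compare against; but the surrounding discussion (the proof of Theorem~\ref{thm:oneshot} via the auxiliary variables $Y_m,Y'_m$, and the remark immediately after this lemma) proceeds in exactly the same spirit, so your argument is fully consistent with how the paper organizes the material.

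Two small remarks. First, the paper's displayed hypothesis writes $\lambda(dx)$ where it should read $\lambda(dz)$ (the integration is over the second argument of $f$, with $x,x'$ held fixed); you correctly read through this typo. Second, you do not actually need the ``Proposition~\ref{prop:invertibletv}-style'' invocation to justify $\norm{\L(X_n)-\L(X'_n)} = \norm{\L(\tilde X_n)-\L(\tilde X'_n)}$; this holds immediately because total variation is by definition a function of the marginal laws alone, and you have arranged $\L(\tilde X_n)=\L(X_n)$, $\L(\tilde X'_n)=\L(X'_n)$. The substantive move in your write-up --- constructing the auxiliary pair $(\tilde X_n,\tilde X'_n)$ on a common probability space with the chosen coupling $(Z,Z')$ so that the conditioning in Proposition~\ref{prop:tvexp} is well-posed --- is exactly the right way to make the argument rigorous.
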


If the contraction condition holds, then for $n\geq n_0$, $W(\L(X_{n-1}), \L(X'_{n-1})) \leq E[\abs{X_{n-1}-X'_{n-1}}] \leq D^{n-n_0-1}E[{X_{n_0}-X'_{n_0}}]$ and the proof of Theorem \ref{thm:oneshot} directly follows.

In most cases $n_0=0$. See the GARCH Example \ref{ex:garch} for an alternative case, $n_0=1$.

In general, the contraction condition can be weakened. Theorem 1.1 of \cite{iteratedranfunc} provides sufficient conditions to guarantee the existence of $D$ as defined in the above theorem. The conditions in Theorem 1 of \cite{loccont}, which are called local contractivity and are weaker, could also replace the contraction condition in the above theorem.

To bound the total variation between a Markov chain, $\{X_n\}_{n\geq 1}$, and the corresponding stationary distribution, $\pi$, we set $X'_0 \sim \pi$. This implies that $X'_n\sim \pi$ and 
$\norm{\L(X_n)-\pi}\leq C D^{n-n_0-1} E_{X_{\infty} \sim \pi}[\abs{X_{n_0}-X_{\infty}}]$ where $C, D$, and $n_0$ are satisfied according to the conditions above. 

To find an upper bound on $E_{X_{\infty} \sim \pi}[\abs{X_{n_0}-X_{\infty}}]$ we use the following Lemma \ref{lem:locmodelexpdist}, which uses a drift condition to bound the expected distance between the stationary distribution of a Markov chain and an initial value.

\begin{defn}[Drift condition]\label{def:drift}
	Let $\{X_n\}_{n\geq 1}$ be a Markov chain on $\mathcal{X}$. A drift condition is satisfied if there exists a function $V:\mathcal{X}\to \mathbb{R}$ and constants $\lambda\in (0,1)$ and $b<\infty$ such that 
	$E[V(X_n)\mid X_{n-1}]\leq \lambda V(X_{n-1}) +b $.
\end{defn}

\begin{lem}\label{lem:locmodelexpdist}
	Let $\{X_n\}_{n\geq 1}$ be a Markov chain such that a drift condition \ref{def:drift} holds with $V(x)=(x+h)^2, h\in \mathbb{R}$. The expected distance between $X_0$ and $X_{\infty}\sim \pi$ is bounded above as follows,
	$E[\abs{X_{\infty}-X_0}]\leq \sqrt{\frac{b}{1-\lambda}}+ E[\abs{X_0+h}]$.
	\begin{proof}
		$	E[\abs{X_{\infty}-X_0}] \leq E[\abs{X_{\infty}+h}]+E[\abs{X_0+h}]\leq \sqrt{\frac{b}{1-\lambda}}+E[\abs{X_0+h}]$. The last inequality holds by lemma \ref{lem:expstat}.
	\end{proof}
\end{lem}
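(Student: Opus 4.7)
The plan is to bound $E[\abs{X_{\infty}-X_0}]$ by splitting via the triangle inequality around the shift $-h$ and then using the drift condition at stationarity to control the stationary tail term.

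First I would write $X_\infty - X_0 = (X_\infty + h) - (X_0 + h)$ and apply the triangle inequality, followed by taking expectations, to obtain
\[
E[\abs{X_\infty - X_0}] \leq E[\abs{X_\infty + h}] + E[\abs{X_0 + h}].
\]
The second term already appears on the right-hand side of the claimed bound, so the real work is to control $E[\abs{X_\infty + h}]$.

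Next I would invoke the (presumably already-established) Lemma \ref{lem:expstat} to bound $E[\abs{X_\infty + h}] \leq \sqrt{b/(1-\lambda)}$. The underlying idea is standard: since $\pi$ is stationary and the drift condition $E[V(X_n)\mid X_{n-1}] \leq \lambda V(X_{n-1}) + b$ holds with $V(x) = (x+h)^2$, taking expectations under $\pi$ gives $E_\pi[V(X)] \leq \lambda E_\pi[V(X)] + b$, hence $E_\pi[(X+h)^2] \leq b/(1-\lambda)$. Then Jensen's inequality (or Cauchy--Schwarz) yields $E_\pi[\abs{X+h}] \leq \sqrt{E_\pi[(X+h)^2]} \leq \sqrt{b/(1-\lambda)}$. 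Combining with the triangle inequality step gives the claimed bound.

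The main obstacle, which is really a bookkeeping issue rather than a mathematical one, is justifying that $E_\pi[V(X)]$ is finite so that the stationary inequality $E_\pi[V(X)] \leq \lambda E_\pi[V(X)] + b$ can be rearranged; this is standard under the drift condition but technically requires either assuming $\pi$ integrates $V$ or approximating via $V_M = V \wedge M$, subtracting, and passing to the limit $M \to \infty$ by monotone convergence. Once this is handled (and presumably it is inside the cited Lemma \ref{lem:expstat}), the proof is essentially a one-line triangle inequality followed by the stationary drift bound.
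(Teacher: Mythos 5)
Your proposal is correct and follows essentially the same route as the paper: triangle inequality around the shift $h$, then Lemma \ref{lem:expstat} (with the implicit Jensen/Cauchy--Schwarz step $E_\pi[\abs{X+h}]\leq \sqrt{E_\pi[(X+h)^2]}$, which you rightly make explicit) to bound the stationary term by $\sqrt{b/(1-\lambda)}$. No gaps beyond the finiteness bookkeeping you already flag as belonging to the cited lemma.
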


\begin{lem}\label{lem:expstat}[Proposition 4.3 (i) of \cite{meyneandtweedie}]
	If the drift condition holds, then $E_{\pi}[V(X)]\leq \frac{b}{1-\lambda}$.
	See Section \ref{proof:expstat} for a proof.
\end{lem}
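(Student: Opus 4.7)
The plan is to exploit stationarity: if $X_{n-1}\sim\pi$, then by invariance $X_n\sim\pi$ as well, so taking $\pi$-expectations on both sides of the drift condition yields
\[
E_\pi[V(X)] \;=\; E_\pi[V(X_n)] \;=\; E_\pi\bigl[E[V(X_n)\mid X_{n-1}]\bigr] \;\leq\; \lambda\,E_\pi[V(X)] + b,
\]
and rearranging gives $(1-\lambda)E_\pi[V(X)]\leq b$. Since $\lambda\in(0,1)$, dividing through yields the claimed bound $E_\pi[V(X)]\leq b/(1-\lambda)$.

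The main obstacle is that this rearrangement is only valid once we know $E_\pi[V(X)]<\infty$; otherwise we cannot legitimately cancel the $\lambda E_\pi[V(X)]$ term. I would handle this by first iterating the drift condition from a fixed starting point $X_0=x$: by induction,
\[
E[V(X_n)\mid X_0=x] \;\leq\; \lambda^n V(x) + b\sum_{k=0}^{n-1}\lambda^k \;\leq\; \lambda^n V(x) + \frac{b}{1-\lambda}.
\]
This uniform-in-$n$ bound on the $n$-step expectation is the key estimate. To pass it to $\pi$, I would apply Fatou's lemma to the sequence $V(X_n)$ under a starting distribution $\mu$ satisfying $\mu(V)<\infty$ (e.g.\ a point mass at a point where $V$ is finite), using the fact that the distribution of $X_n$ converges weakly (or in total variation) to $\pi$. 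Then
\[
E_\pi[V(X)] \;\leq\; \liminf_{n\to\infty} E_\mu[V(X_n)] \;\leq\; \liminf_{n\to\infty}\Bigl(\lambda^n \mu(V) + \tfrac{b}{1-\lambda}\Bigr) \;=\; \frac{b}{1-\lambda},
\]
which simultaneously delivers finiteness and the desired bound. A cleaner alternative, if $V$ is bounded on enough sets, is to truncate by $V_M := V\wedge M$: $E_\pi[V_M(X)]\leq M<\infty$ trivially, and applying the drift inequality combined with monotone convergence as $M\to\infty$ recovers the same conclusion. Either route avoids the circular reliance on a priori $\pi$-integrability of $V$.
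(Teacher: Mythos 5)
Your first display is precisely the paper's entire proof: invoke invariance to write $E_\pi[V(X)] = E_\pi\bigl[E[V(X_n)\mid X_{n-1}]\bigr] \leq \lambda E_\pi[V(X)] + b$ and rearrange. You go further by flagging that the rearrangement presupposes $E_\pi[V]<\infty$, a gap the paper's two-line argument leaves implicit (it leans on the citation to Meyn and Tweedie for the rigorous version); your Fatou patch from a $V$-finite start is the standard way to close it, though it quietly assumes more than the bare drift condition (namely that $\L(X_n)\to\pi$), and the truncation alternative needs a bit more care than sketched since $(\lambda V+b)\wedge M$ is not dominated by $\lambda(V\wedge M)+b$ for large $M$.
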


See Numerical Example \ref{numex:locmodel} for an application of Lemma \ref{lem:locmodelexpdist}.


\section{Random-functional autoregressive processes}\label{sec:rfautoregproc}
The following section proposes the Sideways Theorem to generate 
upper bounds on the total variation distance for random-functional autoregressive processes.

\begin{defn}[Random functional autoregressive processes]\label{def:ranfunARP}
	The sequence $\{X_n\}_{n\geq 1}$ is a random functional autoregressive process if for $g:\mathbb{R}^2\to \mathbb{R}$
	\begin{equation*}
		X_{n}=g(\theta_{1,n},X_{n-1})+\theta_{2,n}
	\end{equation*}
	where $(\theta_{1,n},\theta_{2,n})\in \mathbb{R}^2$ are random variables and $(\theta_{1,n},\theta_{2,n})\indep (\theta_{1,m},\theta_{2,m})$ when $n\neq m$.
\end{defn}

\begin{thm}[Sideways Theorem]\label{thm:oneshotlipschitz}
	Let $\{X_n\}_{n\geq 1}\in \mathbb{R}$ be a random-functional autoregressive.
	Suppose that,
	\begin{enumerate}
		\item \textbf{Contraction condition:} There exists a $D\in (0,1)$ such that for $n\geq 0$, $$E[\abs{g(\theta_{1,n+1},X_{n})-g(\theta_{1,n+1},X'_{n})}]\leq DE[\abs{X_n - X'_n}]$$
		\item \textbf{Attributes of the conditional density $\theta_{2,n}\mid \theta_{1,n}$:} The conditional density of $\theta_{2,n}\mid \theta_{1,n}$
		\begin{enumerate}
			\item is bounded above: There exists a $K>0$ such that for all $(\theta_{1,n},\theta_{2,n})\in \mathbb{R}^2$, the conditional density function of $\theta_{2,n}$ is bounded above by $K$, $f_{\theta_{2,n}}(\theta_{2,n}\mid \theta_{1,n})\leq K$.
			\item has at most $M$ local extrema points that are at most $L>0$ distance apart: For any $\theta_{1,n}$, there are $M$ local maximas and minimas (local extrema points) within the conditional density. The local extrema points are at most $L$ distance apart.
			\item is continuous for any $\theta_{1,n}$
		\end{enumerate}
	\end{enumerate}
	Then an upper bound on the geometric rate of convergence of the \mc is $D$ and the total variation distance between the two copies of the Markov chain, $X_n, X'_n$, is bounded above as follows,
	\begin{equation}\label{eqn:oneshotlipschitz}
		\norm{\L(X_{n})-\L(X'_{n})}\leq \left(\frac{K(M+1)}{2} +\frac{I_{M>1}}{L}\right)D^{n-1} E[\abs{X_0-X'_0}]
	\end{equation}
\end{thm}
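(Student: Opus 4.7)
The plan is to reduce the Sideways Theorem to the One-Shot Coupling Theorem \ref{thm:oneshot} by verifying its two hypotheses with $n_0=0$. The contraction condition of the One-Shot theorem is identical to the contraction condition assumed here, so the real work lies in establishing the coalescing condition
$$\norm{\L(X_n) - \L(X'_n)} \leq C\, E[\abs{X_{n-1} - X'_{n-1}}]$$
with $C = \frac{K(M+1)}{2} + \frac{I_{M>1}}{L}$.

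To set this up, I fix $n\geq 1$ and exploit the decomposition $X_n = g(\theta_{1,n},X_{n-1}) + \theta_{2,n}$, $X'_n = g(\theta'_{1,n},X'_{n-1}) + \theta'_{2,n}$. Because $\norm{\L(X_n)-\L(X'_n)}$ depends only on marginals, I may couple $\theta_{1,n}=\theta'_{1,n}$ while keeping $\theta_{2,n}$ and $\theta'_{2,n}$ independent. Conditional on $X_{n-1}, X'_{n-1}, \theta_{1,n}$, the laws of $X_n$ and $X'_n$ are then translates of the conditional density $f(\cdot\mid\theta_{1,n})$ of $\theta_{2,n}$ by the amounts $g(\theta_{1,n},X_{n-1})$ and $g(\theta_{1,n},X'_{n-1})$ respectively. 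Proposition \ref{prop:tvexp} reduces the problem to bounding the conditional shift-TV and then taking expectations.

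The main obstacle is the shift-TV estimate for such a density: for a continuous $f$ with $\sup f\le K$ and at most $M$ local extrema at consecutive distance at most $L$,
$$\frac{1}{2}\int_{\mathbb{R}}\abs{f(z)-f(z-a)}\,dz \;\leq\; \left(\frac{K(M+1)}{2}+\frac{I_{M>1}}{L}\right)\abs{a}.$$
I would obtain the $\frac{K(M+1)}{2}\abs{a}$ summand by splitting $\mathbb{R}$ into the $M+1$ monotone pieces delimited by the extrema: on each piece $f$ oscillates by at most $K$, so its total variation as a function is bounded by $(M+1)K$, and the elementary inequality $\int\abs{f(z)-f(z-a)}\,dz \leq V(f)\abs{a}$ supplies the main term. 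The $\frac{I_{M>1}}{L}\abs{a}$ correction is only required when $M>1$; I would extract it by combining the trivial bound $\text{TV}\leq 1$ on the regime $\abs{a}\geq L$ (where $1 \leq \abs{a}/L$) with the linear bound on $\abs{a}<L$, and then merging the two regimes into a single linear-in-$\abs{a}$ estimate.

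With the shift-TV bound in hand, taking expectations in Proposition \ref{prop:tvexp} gives
$$\norm{\L(X_n)-\L(X'_n)} \;\leq\; C\,E\bigl[\abs{g(\theta_{1,n},X_{n-1})-g(\theta_{1,n},X'_{n-1})}\bigr] \;\leq\; C\,E\bigl[\abs{X_{n-1}-X'_{n-1}}\bigr],$$
where the final step applies the Sideways contraction condition and absorbs the resulting factor $D<1$. This is exactly the coalescing condition of Theorem \ref{thm:oneshot} with coefficient $C$ and $n_0=0$, so that theorem delivers the stated bound $C D^{n-1} E[\abs{X_0-X'_0}]$. The claim that $D$ is an upper bound on the geometric rate of convergence follows by taking $X'_0\sim\pi$ and noting that $E[\abs{X_0-X'_0}]$ is a finite constant independent of $n$.
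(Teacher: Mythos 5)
Your proposal is correct and follows the paper's high-level reduction: verify the contraction and coalescing conditions, then invoke the One-Shot Coupling Theorem. The interesting divergence is in the shift-TV estimate that feeds the coalescing condition. The paper establishes $\int_{\mathbb{R}}\abs{f(x-\Delta)-f(x)}\,dx\leq K(M+1)\abs{\Delta}$ only for $\abs{\Delta}$ smaller than the gap between consecutive extrema (Lemmas on shifted functions in the appendix, built from monotone pieces and a dominated-convergence approximation), and then treats the tail event $\abs{\Delta}>L$ separately via Markov's inequality, which is the source of the $I_{M>1}/L$ term. You instead invoke the $L^1$-modulus bound for BV functions, $\int\abs{f(z)-f(z-a)}\,dz\leq V(f)\abs{a}$, together with the elementary observation $V(f)\leq (M+1)K$ for a continuous density vanishing at $\pm\infty$ with $M$ local extrema. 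That is a genuinely cleaner route to the main term, and — notably — it holds for \emph{all} $a$, not just $\abs{a}<L$. This makes your subsequent discussion of splitting into the regimes $\abs{a}<L$ and $\abs{a}\geq L$ and ``merging'' them logically redundant: the BV bound already delivers $\frac{1}{2}\int\abs{f(z)-f(z-a)}\,dz\leq\frac{K(M+1)}{2}\abs{a}$ globally, so adding the nonnegative correction $\frac{I_{M>1}}{L}\abs{a}$ is harmless but unnecessary, and you could in fact prove a marginally sharper constant $C=\frac{K(M+1)}{2}$. Your handling of the final contraction step (absorbing the extra factor $D<1$ into $C$) is a small deviation from the paper's bookkeeping but is equally valid. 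One thing to make explicit if you write this up: you should cite or prove the BV $L^1$-modulus inequality, since the paper goes to some length (Lemmas on shifted invertible and piecewise-monotone functions) precisely to avoid assuming it.
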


The attributes of the conditional density of $\theta_{2,n}\mid \theta_{1,n}$ serve to prove, by integrating along the $y$-axis or flipping the density sideways, that the coalescing condition is satisfied. To prove the Sideways Theorem, we show that the contraction and coalescing conditions are satisfied and then apply the One-Shot Coupling Theorem \ref{thm:oneshot}.

\begin{lem}[Coalescing condition]\label{lem:coalescingcondsideways}
	If the density of $\theta_{2,n}|\theta_{1,n}$ for any $\theta_{1,n}$ is (1) bounded above, (2) has at most $M$ local extrema points that are at most $L$ distance apart, and (3) is continuous then for $n\geq 0$,
	$$\norm{\L(X_{n})-\L(X'_{n})}\leq C E[\abs{X_{n-1}-X'_{n-1}}]$$
	Where $C=\frac{K(M+1)}{2} +\frac{I_{M>1}}{L}$.
	See Section \ref{pf:coalescingcondsideways} for a proof.
\end{lem}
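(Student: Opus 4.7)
The plan is to reduce the total variation bound to a one-dimensional shift estimate on the conditional density $f(\cdot\mid\theta_{1,n})$ of $\theta_{2,n}$, and then control that estimate via the hypotheses (1)--(3) through a ``sideways'' analysis of $f$ on its monotonic pieces.

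First, because total variation depends only on marginal laws, I would couple the two chains so that $\theta_{1,n}=\theta'_{1,n}$ and apply Proposition \ref{prop:tvexp} to condition on $\sigma(\theta_{1,n},X_{n-1},X'_{n-1})$. The representation $X_n=g(\theta_{1,n},X_{n-1})+\theta_{2,n}$ shows that, conditionally, $X_n$ and $X'_n$ have densities that are two horizontal translates of $f(\cdot\mid\theta_{1,n})$; after the change of variables $v=u-g(\theta_{1,n},X'_{n-1})$ the conditional total variation becomes $\tfrac{1}{2}\int|f(v-\delta\mid\theta_{1,n})-f(v\mid\theta_{1,n})|\,dv$ with $\delta=g(\theta_{1,n},X_{n-1})-g(\theta_{1,n},X'_{n-1})$. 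The problem thus reduces to establishing, for every $\theta_{1,n}$, the density-shift inequality
$$\tfrac{1}{2}\int_{\mathbb{R}}|f(v-\delta)-f(v)|\,dv\;\le\;\left(\tfrac{K(M+1)}{2}+\tfrac{I_{M>1}}{L}\right)|\delta|$$
for any continuous density $f\le K$ with at most $M$ local extrema pairwise within distance $L$.

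This density-shift inequality is the core of the argument and where the geometric hypotheses on $f$ are used. I would partition $\mathbb{R}$ into the (at most) $M+1$ maximal intervals on which $f$ is monotonic, with the local extrema as their common endpoints. On each monotonic piece one can ``flip the graph sideways'' and interpret the $L^1$ contribution of the shift as the area of a thin horizontal strip swept out when the graph is translated by $\delta$; because $f$ takes values in $[0,K]$, each such piece contributes at most $K|\delta|$ to the integral, producing the $K(M+1)|\delta|/2$ term after the factor $1/2$ in the definition of TV. The correction $I_{M>1}/L$ arises only when $M>1$: adjacent monotonic pieces share an extremum, and a horizontal shift can exchange mass between a peak and a neighbouring valley by a quantity that is controlled because consecutive extrema are at most $L$ apart. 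A careful case analysis of how the shifted level sets $\{v : f(v) \ge y\}$ intersect the original level sets around each interior extremum produces this second term.

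Finally, since the hypothesized contraction constant $D$ is strictly less than $1$, specializing the contraction condition to deterministic $x, x'$ gives $E_{\theta_{1,n}}[|g(\theta_{1,n},x)-g(\theta_{1,n},x')|]\le D|x-x'|\le|x-x'|$. Integrating the conditional TV estimate against the joint law of $(\theta_{1,n},X_{n-1},X'_{n-1})$ and applying this bound yields the claimed $\norm{\L(X_n)-\L(X'_n)}\le C\,E[|X_{n-1}-X'_{n-1}|]$. The hardest step will be the density-shift inequality, and in particular isolating the precise form of the $I_{M>1}/L$ correction by the geometric analysis of the shifted level sets around the interior extrema.
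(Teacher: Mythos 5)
Your reduction is the same as the paper's first half: condition via Proposition \ref{prop:tvexp}, set $\theta_{1,n}=\theta'_{1,n}$, and observe (via Proposition \ref{prop:invertibletv}) that conditionally $X_n$ and $X'_n$ are translates of the conditional density of $\theta_{2,n}$ by $\delta=g(\theta_{1,n},X_{n-1})-g(\theta_{1,n},X'_{n-1})$. The gap is in the step you yourself flag as hardest. You assert the pointwise-in-$\delta$ inequality $\tfrac12\int\abs{f(v-\delta)-f(v)}\,dv\le\bigl(\tfrac{K(M+1)}{2}+\tfrac{I_{M>1}}{L}\bigr)\abs{\delta}$ and propose to prove it by letting each maximal monotone piece contribute at most $K\abs{\delta}$, with the $I_{M>1}/L$ term coming from a level-set analysis near interior extrema. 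Two problems. First, the per-piece ``sideways strip'' bound is unjustified once $\abs{\delta}$ exceeds the length of a monotone piece: for $v$ in a given piece, $f(v-\delta)$ need not lie on the same monotone branch, so the cross-piece contributions are not controlled; this is exactly why the paper's Lemma \ref{lem:shiftedfunc3} is proved only under a smallness restriction on the shift relative to the gaps between consecutive extrema. Second, the $I_{M>1}/L$ term is not a pointwise geometric correction at all in the paper: it arises because $\Delta$ is random --- the proof splits on the event $\{\abs{\Delta}>L\}$, bounds the total variation there by $1$, and applies Markov's inequality $P(\abs{\Delta}>L)\le E[\abs{\Delta}]/L$, applying Lemma \ref{lem:shiftedfunc3} only on the complementary event. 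Your sketch contains neither this splitting nor a substitute valid for all shift sizes (for instance the bounded-variation estimate $\int\abs{f(v-\delta)-f(v)}\,dv\le\abs{\delta}\,\mathrm{Var}(f)\le K(M+1)\abs{\delta}$, which would in fact render the second term superfluous), so the key inequality is left unproved and the mechanism you describe for the second term would not produce it.

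A smaller deviation is the last step: to pass from $E[\abs{\Delta}]$ to $E[\abs{X_{n-1}-X'_{n-1}}]$ you invoke the contraction condition with $D<1$, a hypothesis not stated in this lemma (though available in the Sideways Theorem where it is applied). The paper instead adds and subtracts the common $\theta_{2,n}$, identifying $\Delta$ with the difference of the common-noise-coupled chains, which is precisely the form consumed in the proof of Theorem \ref{thm:oneshot}; no contraction is needed. Your route is workable in context, but you should either state the extra hypothesis or use the paper's identification.
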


\begin{proof}[Proof of Theorem \ref{thm:oneshotlipschitz}]
	The following shows that the contraction condition holds for $D\in (0,1)$ and $n\geq 0$,
	\begin{align*}
		E[\abs{f_{\theta_{n}}(X_{n-1})-f_{\theta_{n}}(X'_{n-1})}]&=E[\abs{(g(\theta_{1,n},X_{n-1})+\theta_{2,n})-(g(\theta_{1,n},X'_{n-1}) +\theta_{2,n})}]\\ 
		&=E[\abs{g(\theta_{1,n},X_{n-1})-g(\theta_{1,n},X'_{n-1})}]\\
		&\leq DE[\abs{X_{n-1} - X'_{n-1}}] &\text{by contraction condition}
	\end{align*}
	Lemma \ref{lem:coalescingcondsideways}, which can be applied when condition 2 is satisfied (attributes of the conditional density of  $\theta_{2,n}\mid \theta_{1,n}$), shows that the coalescing condition holds. By the One-Shot Coupling Theorem \ref{thm:oneshot}, the total variation distance between two copies of the process can be bounded above using equation \ref{eqn:oneshotlipschitz}.
\end{proof}
%



In \cite{arconvrate}, it is shown that when the function $g$ is deterministic ($g$ is a function of $X_{n-1}$ only and not $\theta_{1,n}$) and given the same assumptions on $\theta_{2,n}$, the upper bound on the  geometric rate of convergence is $D$ (see Corollary 8 and Example 9 of \cite{arconvrate}). This matches the results from our theorem.

Note that the Sideways Theorem \ref{thm:oneshotlipschitz} provides an upper bound on total variation distance, but does not imply the existence of a stationary distribution for the Markov chain. To develop the intuition for this, first note that convergence in total variation distance implies convergence in distribution \cite{gibbs}. Suppose that $\L(X_n), \L(X'_n)$ have distribution functions, $F_n, F'_n$, then by Helly's Selection Theorem (see Lemma 11.1.8 of \cite{firstlook}), a right continuous function $F$ exists such that $F_n\to F$ and $F'_n\to F$ pointwise. However, the function $F$ may not necessarily be a distribution function. This is an illustration of why a stationary distribution may not exist. 

A simple counter example would be the process $X_n=\frac{1}{2}X_{n-1}+n+Z_n, Z_n\sim N(0,1)$ where $g(\theta_{1,n},X_n)=\frac{1}{2}X_{n-1}+n$ and $\theta_{2,n}=Z_n$. It is clear how the Sideways Theorem \ref{thm:oneshotlipschitz} could generate a geometric convergence bound over two iterations of the process if $E[X_0-X'_0]<\infty$, but $X_n,X'_n\to \infty$ almost surely and so there is no stationary distribution. See \cite{loccont} for more information on sufficient conditions for stationarity.

%
%

%


\subsection{An example of a non-linear autoregressive process}
\begin{exmp}[Non-linear autoregressive process]
	This example is discussed in Section 4 of \cite{geomconvrates}. Let $\{X_n\}_{n\geq 1}$ be a Markov chain such that 
	$$X_{n+1}=\frac{1}{2}(X_n -\sin X_n)+Z_{n+1}$$
	where $\{Z_n\}_{n\geq 1}\sim N(0,1)$ are independent and identically distributed (i.i.d.) random variables. In \cite{geomconvrates}, it is assumed that $\{Z_n\}_{n\geq 1}$ are i.i.d. random variables with a mean of 0 and a variance of 1.
	
	For $g(x)=\frac{1}{2}(x -\sin(x))$, the derivative is $g'(x)=\frac{1}{2}(1 -\cos(x))$ and so $\sup_{x \in \mathbb{R}}g'(x)=1$. This cannot be used. Instead, we can find a value for $D$ in terms of the second iteration. That is,
	\begin{equation*}
		D^2 = \sup_{x,y} \frac{E[\abs{X_{n+2}-X'_{n+2}} \mid X_n=x,X'_n=y]}{\abs{x-y}}
	\end{equation*}

\begin{lem}\label{lem:nonlinar}
	The value of $D$ as defined above can be written as $$D^2=\sup_{x,y} \frac{\sqrt{4h(x,y)^2 - 8e^{-1/2}h(x,y) \sin h(x,y)\cos k(x,y) + 2\sin^2 h(x,y) (1+e^{-2}(\cos^2k(x,y) - \sin^2 k(x,y)))}}{2\abs{x-y}}$$
	where
	\begin{equation*}
		h(x,y)=\frac{1}{4}(y-x+\sin x - \sin y) \hspace{1cm} k(x,y)=\frac{1}{4}(x+y-\sin y - \sin x)
	\end{equation*}
The proof can be found in Section \ref{proof:nonlinar}.
\end{lem}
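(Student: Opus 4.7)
My plan is to establish the identity by computing $X_{n+2}-X'_{n+2}$ in closed form under the common-random-number coupling and then evaluating a Gaussian second moment exactly. The central observation is that the polynomial inside the square root in the lemma is literally $4\,E[(X_{n+2}-X'_{n+2})^2\mid X_n=x, X'_n=y]$, so the claimed identity reads
\[
D^2 \;=\; \sup_{x,y}\;\frac{\sqrt{E[(X_{n+2}-X'_{n+2})^2]}}{\abs{x-y}},
\]
i.e.\ the exact two-step $L_2$ contraction constant. Computing this second moment \emph{directly}, rather than starting from $E[\abs{X_{n+2}-X'_{n+2}}]$ and invoking Cauchy--Schwarz, is what upgrades the previous attempt's upper bound to an equality.

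The bulk of the work is trigonometric. Conditional on $X_n=x, X'_n=y$, I would use the common noises $Z_{n+1}, Z_{n+2}$, so that $X_{n+1}-X'_{n+1} = \tfrac{1}{2}(x-y) - \tfrac{1}{2}(\sin x-\sin y) = -2h(x,y)$ is deterministic while $\tfrac{1}{2}(X_{n+1}+X'_{n+1}) = k(x,y) + Z_{n+1}$. The sum-to-product identity $\sin A-\sin B = 2\cos\bigl(\tfrac{A+B}{2}\bigr)\sin\bigl(\tfrac{A-B}{2}\bigr)$ then gives $\sin X_{n+1}-\sin X'_{n+1} = -2\sin h(x,y)\,\cos(k(x,y)+Z_{n+1})$, and the shared $Z_{n+2}$ cancels, leaving $X_{n+2}-X'_{n+2} = -h + \sin h\,\cos(k+Z_{n+1})$ with $h,k$ abbreviating $h(x,y),k(x,y)$. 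Squaring this and applying the Gaussian moment formulas $E[\cos(k+Z_{n+1})] = e^{-1/2}\cos k$ and $E[\cos^2(k+Z_{n+1})] = \tfrac{1}{2}(1+e^{-2}\cos 2k)$, both obtained from $E[e^{itZ}]=e^{-t^2/2}$, then yields $E[(X_{n+2}-X'_{n+2})^2]$ equal to one quarter of the polynomial inside the square root in the lemma.

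The main obstacle is sign bookkeeping in the sum-to-product step: since $\tfrac{1}{2}(X_{n+1}-X'_{n+1})=-h$ forces a $\sin(-h)=-\sin h$, this minus sign must be tracked through so that it combines correctly with the deterministic $-h$ contribution to $X_{n+2}-X'_{n+2}$. Once that is done, the Gaussian moment calculations and the identity $\cos 2k = \cos^2 k - \sin^2 k$ are routine. Taking the square root, dividing by $2\abs{x-y}$, and supping over $x,y$ completes the proof; crucially, the absence of any Cauchy--Schwarz step is precisely what delivers the equality in the lemma rather than an upper bound.
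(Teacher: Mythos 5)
Your computation is correct and essentially mirrors the paper's: under common noises the difference $X_{n+2}-X'_{n+2} = -h + \sin h\,\cos(k+Z_{n+1})$ emerges from the sum-to-product identity, and the Gaussian moments $E[\cos(k+Z)]=e^{-1/2}\cos k$ and $E[\cos^2(k+Z)]=\tfrac12(1+e^{-2}\cos 2k)$ turn the squared expectation into exactly one quarter of the polynomial under the square root. The paper arrives at the same polynomial but by writing $E[\abs{X_{n+2}-X'_{n+2}}] \le \sqrt{E[(X_{n+2}-X'_{n+2})^2]}$ first and then computing the right-hand side, so its proof only establishes an inequality $\le$ even though the lemma is phrased as an equality.

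The one place your account overreaches is the claim that computing the second moment directly ``upgrades the upper bound to an equality.'' The quantity $D^2$ defined just above the lemma is $\sup_{x,y} E[\abs{X_{n+2}-X'_{n+2}}]/\abs{x-y}$, an $L_1$ ratio. What your calculation shows is that the displayed formula equals $\sup_{x,y}\sqrt{E[(X_{n+2}-X'_{n+2})^2]}/\abs{x-y}$, the $L_2$ ratio, which is an \emph{upper} bound for the $L_1$ quantity by Jensen, not the same thing. Put differently, you have proved an equality for a different (slightly larger) constant, not the equality the lemma nominally asserts. This discrepancy is already latent in the paper, whose proof ends with $\le$ rather than $=$, and it is harmless downstream: the Sideways Theorem's contraction condition only needs $D$ to be an upper bound, so both readings produce a valid contraction coefficient. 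It would be cleaner to present the lemma as an upper bound on $D^2$, or to explicitly redefine $D$ via the $L_2$ ratio, before asserting equality.
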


Using simulation, we can deduce that $D^2\approx 0.813^2=0.661$, which closely matches the geometric convergence rate found in \cite{geomconvrates} for the Wasserstein distance of $D=0.814$.

Using the Sideways Theorem \ref{thm:oneshotlipschitz} notation, $K=\frac{1}{\sqrt{2\pi}}$ and $M=1$. An upper bound on the total variation distance is 
\begin{equation*}
	\norm{\L(X_{n+1})-\L(X'_{n+1})}\leq \frac{1}{\sqrt{2\pi}} E[\abs{X_0-X'_0}] 0.661^{\left \lfloor{N/2}\right \rfloor }
\end{equation*}

Thus if $X_0=1$ and $X'_0=2$, then after 20 iterations, the total variation distance between the two processes will be less than 0.01.

\end{exmp}

\subsection{Random-coefficient autoregressive models}
\begin{cor}\label{cor:oneshotlipschitz}
	Let $\{X_n\}_{n\geq 1}\in \mathbb{R}$ be a random-coefficient autoregressive model. That is, $X_n$ is of the following form
	$$X_{n}=\theta_{1,n}X_{n-1}+\theta_{2,n}$$
	where $(\theta_{1,n},\theta_{2,n})\indep (\theta_{1,m},\theta_{2,m})$ when $n\neq m$. If we replace the contraction condition of the Sideways Theorem \ref{thm:oneshotlipschitz} with  
	\begin{enumerate}
		\item $E[\abs{\theta_{1,n}}]<1$
	\end{enumerate}
	Then equation \ref{eqn:oneshotlipschitz} holds for $D=E[\abs{\theta_{1,n}}]$.
\end{cor}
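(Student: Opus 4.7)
The plan is to recognize the random-coefficient autoregressive model as a special case of the random-functional autoregressive process (Definition \ref{def:ranfunARP}) with $g(\theta_{1,n}, X_{n-1}) = \theta_{1,n} X_{n-1}$, and then simply verify that the new hypothesis $E[|\theta_{1,n}|] < 1$ implies the contraction condition of the Sideways Theorem \ref{thm:oneshotlipschitz} with constant $D = E[|\theta_{1,n}|]$. The attributes-of-the-conditional-density hypothesis (condition 2 of Theorem \ref{thm:oneshotlipschitz}) on $\theta_{2,n} \mid \theta_{1,n}$ is unchanged, so the coalescing side of the argument carries over verbatim.

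First I would compute, using $g(\theta_1, x) = \theta_1 x$,
\begin{equation*}
E\bigl[\abs{g(\theta_{1,n+1}, X_n) - g(\theta_{1,n+1}, X'_n)}\bigr] = E\bigl[\abs{\theta_{1,n+1}}\cdot \abs{X_n - X'_n}\bigr].
\end{equation*}
The key step is invoking independence: by the hypothesis $(\theta_{1,n+1}, \theta_{2,n+1}) \indep (\theta_{1,m}, \theta_{2,m})$ for $m \neq n+1$, the coefficient $\theta_{1,n+1}$ is independent of $(X_n, X'_n)$ (both of which are measurable with respect to the $\sigma$-field generated by $\{(\theta_{1,m}, \theta_{2,m}) : m \leq n\}$ together with the initial conditions $X_0, X'_0$, which are themselves independent of future innovations). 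Hence the last display factors as $E[\abs{\theta_{1,n+1}}]\, E[\abs{X_n - X'_n}] = D\, E[\abs{X_n - X'_n}]$ with $D = E[\abs{\theta_{1,n+1}}] \in (0,1)$.

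Next I would simply observe that the contraction condition of the Sideways Theorem \ref{thm:oneshotlipschitz} is thereby satisfied with this value of $D$, while conditions 2(a)--(c) on $\theta_{2,n}\mid \theta_{1,n}$ continue to hold by assumption. Applying the Sideways Theorem directly yields
\begin{equation*}
\norm{\L(X_n) - \L(X'_n)} \leq \left(\frac{K(M+1)}{2} + \frac{I_{M>1}}{L}\right) D^{n-1} E[\abs{X_0 - X'_0}]
\end{equation*}
with $D = E[\abs{\theta_{1,n}}]$, which is exactly equation \ref{eqn:oneshotlipschitz}.

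There is essentially no main obstacle; the proof is a one-line reduction to Theorem \ref{thm:oneshotlipschitz}, and the only content is the independence-based factorization of the expectation. The only thing worth being explicit about is that $E[\abs{\theta_{1,n}}]$ does not depend on $n$ since the innovation vectors are identically distributed (this is implicit in the one-shot setup, where $\theta_n \sim \mathcal{D}$ for a fixed $\mathcal{D}$), which legitimizes writing a single constant $D$.
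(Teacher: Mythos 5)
Your proof is correct and takes essentially the same route as the paper: set $D = E[\abs{\theta_{1,n}}]$, verify the contraction condition of the Sideways Theorem via the factorization $E[\abs{\theta_{1,n+1}}\abs{X_n - X'_n}] = E[\abs{\theta_{1,n+1}}]\,E[\abs{X_n - X'_n}]$, and apply the theorem. You are slightly more explicit than the paper in justifying the independence of $\theta_{1,n+1}$ from $(X_n, X'_n)$, which the paper leaves implicit.
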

\begin{proof}
	If $E[\abs{\theta_{1,n}}]<1$ then set $D=E[\abs{\theta_{1,n}}]$ and so the contraction condition in Theorem \ref{thm:oneshotlipschitz} holds,
	$$E[\abs{g(\theta_{1,n+1},X_{n})-g(\theta_{1,n+1},X'_{n})}]=E[\abs{\theta_{1,n+1}X_{n}-\theta_{1,n+1}X'_{n}}]\leq D E[\abs{X_n - X'_n}]$$
	Since all of the conditions in Theorem \ref{thm:oneshotlipschitz} are satisfied, equation \ref{eqn:oneshotlipschitz} holds. 
\end{proof}


\subsection{Bayesian regression Gibbs sampler}
\begin{exmp}[Bayesian regression Gibbs sampler]\label{ex:bayesianreggibbs}
	Suppose we have the following observed data $Y\in \mathbb{R}^k$ and $X\in \mathbb{R}^{k\times p}$ where
	$$Y \mid \beta, \sigma^2 \sim N_{k}(X\beta, \sigma^2I_k)$$
	for unknown parameters $\beta\in \mathbb{R}^p, \sigma^2\in \mathbb{R}$. Suppose we apply the prior distributions on the unknown parameters,
	\begin{itemize}
		\item $\beta \mid \sigma^2 \sim N_p(0_p, \frac{\sigma^2}{\lambda}I_p)$, where $\lambda>0$ is known 
		\item $\pi(\sigma^2)\propto 1/\sigma^2$
	\end{itemize}
The Bayesian regression Gibbs sampler is based on the conditional posterior distributions of $\beta_n, \sigma^2_n$ and is defined as follows.
	\begin{itemize}
		\item $\beta_n\mid \sigma_{n-1}^2,Y \sim N_p(\tilde{\beta}, \sigma^2_{n-1} A^{-1})$
		\item $\sigma^2_n\mid \beta_n,Y \sim \Gamma^{-1}\left(\frac{k+p}{2},\frac{1}{2}\left[(\beta_n-\tilde{\beta})^T A(\beta_n-\tilde{\beta})+C\right]\right)$. $\Gamma^{-1}(\alpha, \beta)$ represents the inverse gamma distribution.
	\end{itemize}
	Where $A=X^TX+\lambda I_p$ is positive semi-definite, $\tilde{\beta}=A^{-1}X^TY$, and $C=Y^T(I_k-XA^{-1}X^T)Y$.
\end{exmp}


The following theorem gives an upper bound on the convergence rate of the Bayesian regression Gibbs sampler.
\begin{thm}\label{thm:bayesianreggibbs}
	For two copies of the Bayesian regression Gibbs sampler, $(\beta_n,\sigma_n)$ and $(\beta'_n,\sigma^{'2}_n)$, defined in Example \ref{ex:bayesianreggibbs}, 
	\begin{align}\label{eqn:bayesianreggibbseqn}
		\norm{\L(\beta_{n},\sigma_{n})-\L(\beta'_{n},\sigma^{'2}_{n})}\leq K E[\abs{\sigma^2_0-\sigma^{'2}_0}] \left(\frac{p}{k+p-2}\right)^{n-1}
	\end{align}
	where $K= \frac{(C/2)^{\frac{k+2p}{2}}}{\Gamma(\frac{k+2p}{2})}\left(\frac{k+2p+2}{C}\right)^{\frac{k+2p}{2}+1}e^{-\frac{k+2p+2}{2}}$.
\end{thm}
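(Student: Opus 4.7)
The plan is to reduce the problem to a one-dimensional random-coefficient autoregression on $\{\sigma^2_n\}$, apply Corollary \ref{cor:oneshotlipschitz}, and then lift the resulting scalar total-variation bound to the joint state $(\beta_n, \sigma^2_n)$.

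First I would write the Gibbs updates in iterated random-function form. Using $\beta_n = \tilde\beta + \sqrt{\sigma^2_{n-1}}\,A^{-1/2}Z_n$ with $Z_n\sim N_p(0,I_p)$, so that $(\beta_n-\tilde\beta)^T A(\beta_n-\tilde\beta) = \sigma^2_{n-1}\|Z_n\|^2$, and expressing the inverse-gamma draw through an independent $G_n \sim \Gamma((k+p)/2,1)$, the $\sigma^2$-update becomes
\[
\sigma^2_n \;=\; \alpha_n\,\sigma^2_{n-1} + \eta_n, \qquad \alpha_n := \frac{W_n}{2G_n}, \quad \eta_n := \frac{C}{2G_n},
\]
where $W_n := \|Z_n\|^2 \sim \chi^2_p$. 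Standard moment identities then give the contraction constant at once: $E[\alpha_n] = E[W_n]\,E[1/(2G_n)] = p\cdot\frac{1}{k+p-2} = \frac{p}{k+p-2} =: D$, valid as long as $k + p > 2$.

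The core calculation is the conditional law of $\eta_n$ given $\alpha_n$. Changing variables from $(W_n,G_n)$ to $(\alpha_n,\eta_n)$ via $W_n = C\alpha_n/\eta_n$, $G_n = C/(2\eta_n)$ with Jacobian $C^2/(2\eta_n^3)$, substituting into the $\chi^2_p$ and $\Gamma((k+p)/2,1)$ densities and simplifying collapses the joint density to $\frac{(C/2)^{(k+2p)/2}}{\Gamma(p/2)\Gamma((k+p)/2)}\alpha^{p/2-1}\eta^{-(k+2p)/2-1}e^{-C(\alpha+1)/(2\eta)}$. Integrating out $\eta$ recovers a beta-prime marginal $f_{\alpha_n}(\alpha) \propto \alpha^{p/2-1}(\alpha+1)^{-(k+2p)/2}$, and dividing yields
\[
\eta_n \mid \alpha_n \;\sim\; \Gamma^{-1}\!\bigl((k+2p)/2,\; C(\alpha_n+1)/2\bigr).
\]
This conditional density is continuous and unimodal ($M = 1$), so the Corollary's shape hypothesis is met; its mode value equals $K/(\alpha_n+1)$, and the supremum over $\alpha_n \geq 0$ is attained at $\alpha_n = 0$ and equals precisely the constant $K$ in the theorem statement. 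Identifying $K$ as the maximum of the $\Gamma^{-1}((k+2p)/2,C/2)$ density is the heart of the proof and the step I expect to be the main technical obstacle.

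Corollary \ref{cor:oneshotlipschitz} then delivers $\norm{\L(\sigma^2_n) - \L(\sigma^{'2}_n)} \leq K\,D^{n-1}\,E[\abs{\sigma^2_0 - \sigma^{'2}_0}]$. To pass to the joint, observe that $(\beta_n, \sigma^2_n)$ is a deterministic function of $\sigma^2_{n-1}$ together with the shared auxiliary randomness $(Z_n, G_n)$, so applying the one-shot scheme to the joint chain with the induced semi-distance $\abs{\sigma^2 - \sigma^{'2}}$---common $(W_m, G_m)$ and direction of $Z_m$ in the contraction phase and Sideways-style coalescing on $\eta_n$ at the final step---produces coupling of the full pair $(\beta_n, \sigma^2_n)$ whenever the scalar chain couples. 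Hence the scalar bound transfers directly to the joint and yields (\ref{eqn:bayesianreggibbseqn}).
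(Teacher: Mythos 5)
Your proposal matches the paper's proof essentially step for step: the same random-coefficient AR(1) representation $\sigma^2_n = X_nY_n\sigma^2_{n-1}+Y_n$ (Lemma \ref{lem:bayesregdeinit}), the same contraction constant $D=\frac{p}{k+p-2}$ via $E[X_n]E[Y_n]$, the same change of variables giving $\theta_{2}\mid\theta_{1}\sim \Gamma^{-1}\left(\frac{k+2p}{2},(\theta_1+1)C/2\right)$ with $M=1$ and the supremum of the mode value at $\theta_1=0$ yielding exactly $K$, followed by Corollary \ref{cor:oneshotlipschitz}. The only difference is cosmetic: for the lift to the joint chain the paper formally invokes de-initialization (Proposition \ref{prop:deinit}), while you give the informal coupling interpretation that the paper itself offers as an alternative reading of that same lemma.
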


In Theorem 3.1 of \cite{raj}, it was shown than for the equivalent example and some $0<M_1\leq M_2$, which are not specified, $$M_1\left(\frac{p}{k+p-2}\right)^n\leq \norm{\L(\beta_n,\sigma_n)-\pi}\leq M_2\left(\frac{p}{k+p-2}\right)^n$$
This means that the bound derived from the Corollary \ref{cor:oneshotlipschitz} is sharp up to a constant. The primary difference between Theorem 3.1 in \cite{raj} and the bound in Theorem \ref{thm:bayesianreggibbs} is that the latter provides explicit values for the constant, $M_2$ and as a result, numerical upper bounds can be calculated.

Before proving the Theorem \ref{thm:bayesianreggibbs}, we present some lemmas.

\begin{lem}\label{lem:bayesregdeinit}
	The variable $\sigma^2_{n}$ can be written as a random-coefficient autoregressive process, $\sigma^2_{n}=X_nY_n\sigma^2_{n-1}+Y_n$ where \label{eqn:xy}
	$X_n\sim \Gamma\left(\frac{p}{2}, \frac{C}{2}\right)$ and $Y_n\sim \Gamma^{-1}\left(\frac{k+p}{2}, \frac{C}{2}\right)$.
	And so, $\norm{\L(\beta_n,\sigma^2_n)-\L(\beta'_n, \sigma^{'2}_n)}\leq \norm{\L(\sigma^2_n)-\L(\sigma^{'2}_n)}$.
	
	The proof can be found in \ref{proof:bayesregdeinit}. Note that $\Gamma(\alpha,\beta)$ represents the gamma distribution and $\Gamma^{-1}(\alpha,\beta)$ represents the inverse gamma distribution.
\end{lem}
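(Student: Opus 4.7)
The plan is to handle the two separate claims in the lemma: the explicit recursion $\sigma^2_n=X_nY_n\sigma^2_{n-1}+Y_n$, and the reduction of the joint total variation to the marginal total variation on the $\sigma^2$-coordinate. For the recursion, first reparameterize the Gaussian full conditional as $\beta_n=\tilde\beta+\sqrt{\sigma^2_{n-1}}\,W_n$ with $W_n\sim N_p(0,A^{-1})$, so that $(\beta_n-\tilde\beta)^TA(\beta_n-\tilde\beta)=\sigma^2_{n-1}\,U_n$ where $U_n:=W_n^TAW_n$. Because $A^{1/2}W_n\sim N_p(0,I_p)$, the scalar $U_n=\|A^{1/2}W_n\|^2$ is $\chi^2_p$, i.e.\ $\Gamma(p/2,1/2)$. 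Substituting into the inverse-gamma conditional for $\sigma^2_n\mid\beta_n$ and invoking the scale identity $c\cdot\Gamma^{-1}(\alpha,\beta)\sim\Gamma^{-1}(\alpha,c\beta)$, write $\sigma^2_n=(\sigma^2_{n-1}U_n+C)/(2G_n)$ with $G_n\sim\Gamma((k+p)/2,1)$ independent of $W_n$. Setting $X_n:=U_n/C$ and $Y_n:=C/(2G_n)$, the standard gamma and inverse-gamma scale manipulations give $X_n\sim\Gamma(p/2,C/2)$ and $Y_n\sim\Gamma^{-1}((k+p)/2,C/2)$, and algebraically $\sigma^2_n=Y_n(X_n\sigma^2_{n-1}+1)=X_nY_n\sigma^2_{n-1}+Y_n$. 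Independence $X_n\indep Y_n$ is inherited from $W_n\indep G_n$ in the Gibbs update.

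For the total variation inequality, couple both chains using common random numbers $(W_m,G_m)$ at every iteration. Under this coupling, $(\beta_n,\sigma^2_n)$ is a deterministic function of $\sigma^2_{n-1}$ and the common shocks, so equality of the two $\sigma^2$-coordinates at any iteration forces equality of the joint pair at the next iteration. Combining this with the Markov property of $\{\sigma^2_n\}$ (whose transition is exactly the recursion derived above) and Proposition \ref{prop:tvexp} with the $\sigma^2$-coordinate as the conditioning random variable, the $\beta$-coordinate contribution integrates out because $\beta_n$ is generated from $\sigma^2_{n-1}$ plus an independent Gaussian shock, leaving only the one-dimensional total variation on $\sigma^2_n$ on the right-hand side.

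The main obstacle is this second part. The common-random-number coupling most naturally yields a bound in terms of $\norm{\L(\sigma^2_{n-1})-\L(\sigma^{'2}_{n-1})}$, and sharpening this to a bound in terms of $\norm{\L(\sigma^2_n)-\L(\sigma^{'2}_n)}$ requires a careful application of Proposition \ref{prop:tvexp} that isolates the Gaussian shock from the marginal $\sigma^2_n$-law. In contrast, the first part is essentially bookkeeping: the only substantive identifications are that $W_n^TAW_n\sim\chi^2_p$ and that the scale manipulations for the gamma and inverse-gamma families produce exactly the parameters stated for $X_n$ and $Y_n$.
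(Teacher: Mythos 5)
Your derivation of the recursion is correct and matches the paper's route exactly: writing $\beta_n = \tilde\beta + \sigma_{n-1}W_n$ with $W_n\sim N_p(0,A^{-1})$, identifying $W_n^TAW_n\sim\chi^2_p$, representing the inverse-gamma full conditional as $(\sigma^2_{n-1}U_n+C)/(2G_n)$ with $G_n\sim\Gamma((k+p)/2,1)$, and then reading off the stated gamma and inverse-gamma marginals for $X_n=U_n/C$ and $Y_n=C/(2G_n)$ is exactly what the paper does (it writes $Z_n^2$ where you write $U_n$).

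For the total variation inequality, however, there is a genuine gap, and you essentially flag it yourself. Your common-random-number coupling plus Proposition~\ref{prop:tvexp} naturally conditions on the \emph{previous} scalar and does not, as written, produce the claimed one-dimensional bound; you defer the fix to an unspecified ``careful application.'' The missing ingredient is not a refinement of Proposition~\ref{prop:tvexp} at all. The paper instead invokes the concept of \emph{de-initialization} and the accompanying comparison result, Proposition~\ref{prop:deinit} (Theorem~1 of \cite{deinit}, stated in the appendix): because the pair $(\beta_n,\sigma^2_n)$ is a random function of the one-dimensional $\sigma^2$-variable alone, with innovations independent of $(\beta_0,\sigma^2_0)$, the scalar subchain is a de-initialization of the joint chain, and the proposition immediately yields
$\norm{\L(\beta_n,\sigma^2_n)-\L(\beta'_n,\sigma^{'2}_n)}\le\norm{\L(\sigma^2_n)-\L(\sigma^{'2}_n)}$.
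This is a structurally different tool from the coupling-plus-conditioning argument you sketch: de-initialization needs no coupling construction and no attempt to integrate out the Gaussian shock; it reduces the joint chain to its de-initializing subchain in one step. Without that result (or an equivalent explicit factorization argument showing that, given the relevant $\sigma^2$-coordinate, the conditional law of the pair does not depend on the initial condition), your second step does not close, so the proposal as written proves only the algebraic recursion, not the total variation reduction.
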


%


\begin{lem}[Contraction condition]\label{lem:bayesregcontr}
	The Bayesian regression Gibbs sampler satisfies the contraction condition with $D=\left(\frac{p}{k+p-2}\right)$. The proof can be found in \ref{proof:bayesregcontr}.
\end{lem}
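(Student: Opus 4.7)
The plan is to leverage the random-coefficient autoregressive representation provided by Lemma \ref{lem:bayesregdeinit} and apply Corollary \ref{cor:oneshotlipschitz}. Lemma \ref{lem:bayesregdeinit} writes $\sigma^2_n = X_n Y_n \sigma^2_{n-1} + Y_n$, where $X_n \sim \Gamma(p/2, C/2)$ and $Y_n \sim \Gamma^{-1}((k+p)/2, C/2)$ are independent of each other and of the past of the chain. Relative to the notation of Corollary \ref{cor:oneshotlipschitz}, the random coefficient is $\theta_{1,n} = X_n Y_n$, so the contraction constant we must compute is $D = E[\abs{X_n Y_n}]$, and the corollary requires us to verify $D < 1$.

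Since $X_n$ and $Y_n$ are almost surely positive, $\abs{X_n Y_n} = X_n Y_n$, and by independence,
$$E[X_n Y_n] = E[X_n]\, E[Y_n].$$
The mean of a $\Gamma(p/2, C/2)$ random variable is $p/C$, and the mean of a $\Gamma^{-1}((k+p)/2, C/2)$ random variable is $C/(k+p-2)$, provided $k+p > 2$. Multiplying the two gives
$$D = \frac{p}{C} \cdot \frac{C}{k+p-2} = \frac{p}{k+p-2},$$
which is strictly less than one exactly when $k > 2$, the natural condition for the sampler to be geometrically ergodic at this rate.

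The bulk of the real work has already been done in Lemma \ref{lem:bayesregdeinit}, which reveals the autoregressive structure and establishes the independence of $X_n$ and $Y_n$ (traceable to the independent inverse-gamma draw at each Gibbs step). I do not anticipate additional obstacles: once those ingredients are in hand, verifying the contraction condition reduces to the short mean calculation above, and the conclusion follows by directly invoking Corollary \ref{cor:oneshotlipschitz} with $\theta_{1,n+1} = X_{n+1} Y_{n+1}$ independent of $(\sigma^2_n, \sigma'^2_n)$.
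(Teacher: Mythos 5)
Your proof is correct and matches the paper's argument almost verbatim: both proofs use the representation $\sigma^2_n = X_n Y_n \sigma^2_{n-1} + Y_n$ from Lemma \ref{lem:bayesregdeinit}, identify $\theta_{1,n} = X_n Y_n$, use independence of $X_n$ and $Y_n$ to factor the expectation, and plug in the means $E[X_n] = p/C$ and $E[Y_n] = C/(k+p-2)$. Your version is slightly more careful in that it explicitly records the condition $k+p>2$ needed for the inverse-gamma mean and $k>2$ for $D<1$, which the paper leaves implicit.
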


\begin{lem}[Attributes of the conditional density $\theta_{2,n}\mid \theta_{1,n}$]\label{lem:bayesregconddens}
	For the Bayesian regression Gibbs sampler, $\theta_{2,n}\mid \theta_{1,n}$ has a continuous density, $M=1$ and $K = \frac{(C/2)^{\frac{k+2p}{2}}}{\Gamma(\frac{k+2p}{2})}\left(\frac{k+2p+2}{C}\right)^{\frac{k+2p}{2}+1}e^{-\frac{k+2p+2}{2}}$.
The proof can be found in \ref{proof:bayesregconddens}.
\end{lem}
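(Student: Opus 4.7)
The plan is to use Lemma \ref{lem:bayesregdeinit}, which rewrites $\sigma^2_n$ as the random-coefficient autoregressive process $\sigma^2_n = X_n Y_n \sigma^2_{n-1} + Y_n$ with $X_n \sim \Gamma(p/2, C/2)$ and $Y_n \sim \Gamma^{-1}((k+p)/2, C/2)$ independent. In the notation of the Sideways Theorem applied via Corollary \ref{cor:oneshotlipschitz}, this identifies $\theta_{1,n} = X_n Y_n$ and $\theta_{2,n} = Y_n$. The task therefore reduces to analyzing the conditional density of $Y_n$ given the product $X_n Y_n$, verifying that it is continuous, has a single local extremum (giving $M=1$), and is bounded above by the stated constant $K$.

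First, I would compute the joint density of $(U, V) := (X_n Y_n, Y_n)$ by applying the change-of-variables formula to the transformation $(X_n, Y_n) \mapsto (U, V)$, whose inverse has Jacobian $1/v$. Substituting $x = u/v$ and $y = v$ into the product of the Gamma and inverse-Gamma densities, the joint density on $(0,\infty)^2$ is proportional to
$$u^{p/2-1}\, v^{-(k+2p)/2-1}\, \exp\!\left(-\tfrac{C(u+1)}{2v}\right).$$
Dividing by the marginal of $U$, the conditional density of $V \mid U=u$ is recognized as an inverse gamma distribution $\Gamma^{-1}\!\left(\tfrac{k+2p}{2},\, \tfrac{C(u+1)}{2}\right)$. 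Continuity in $v$ for each fixed $u > 0$ is immediate from this explicit formula.

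For $M=1$, I would observe that the general inverse gamma density $f(v;\alpha,\beta) = \tfrac{\beta^\alpha}{\Gamma(\alpha)} v^{-\alpha-1} e^{-\beta/v}$ has $\tfrac{d}{dv}\log f(v) = -(\alpha+1)/v + \beta/v^2$, which vanishes uniquely at $v^* = \beta/(\alpha+1)$, a maximum. Hence there is exactly one local extremum, so $M=1$ and the $I_{M>1}/L$ term in the Sideways Theorem vanishes, explaining why no bound on $L$ is needed.

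Finally, for $K$, evaluating the density at its mode gives a maximum value of $\tfrac{(\alpha+1)^{\alpha+1}}{\Gamma(\alpha)\,\beta} e^{-(\alpha+1)}$. Substituting $\alpha = (k+2p)/2$ and $\beta = C(u+1)/2$, the maximum is proportional to $1/(u+1)$, hence strictly decreasing in $u$ on $[0,\infty)$. The supremum over the support of $\theta_{1,n} = X_n Y_n \ge 0$ is therefore attained in the limit $u \to 0^+$, and collecting constants yields
$$\sup_{u,v} f_{V\mid U}(v\mid u) = \frac{2\left((k+2p+2)/2\right)^{(k+2p+2)/2}}{\Gamma((k+2p)/2)\, C}\, e^{-(k+2p+2)/2},$$
which after rewriting $((k+2p+2)/2)^{(k+2p+2)/2} = 2^{-(k+2p+2)/2}(k+2p+2)^{(k+2p+2)/2}$ and inserting a factor $(C/2)^{(k+2p)/2}/(C/2)^{(k+2p)/2}$ matches the claimed expression for $K$. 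The main obstacle is the bookkeeping in this final exponent manipulation; the rest is routine transformation of densities.
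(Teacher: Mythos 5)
Your proposal is correct and follows essentially the same route as the paper: change of variables $(X_n,Y_n)\mapsto(X_nY_n,Y_n)$ with Jacobian $\theta_2^{-1}$, identification of the conditional law as $\Gamma^{-1}\left(\frac{k+2p}{2},(\theta_1+1)C/2\right)$ (giving continuity and $M=1$), and evaluation at the mode with the supremum over $\theta_1$ taken at $\theta_1\to 0$, which reproduces the stated $K$. If anything, your explicit observation that the modal value is proportional to $1/(\theta_1+1)$ is a cleaner justification of the maximization over $\theta_1$ than the paper's appeal to the normalizing constant and a figure.
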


Given the above lemmas, the proof of Theorem \ref{thm:bayesianreggibbs} is straightforward when the Sideways Theorem is applied.

\begin{proof}[Proof of Theorem \ref{thm:bayesianreggibbs}]
	Let $n\geq 0$.
	$$\norm{\L(\beta_{n},\sigma^2_{n})-\L(\beta'_{n}, \sigma^{'2}_{n})} \leq \norm{\L(\sigma^2_{n})-\L(\sigma^{'2}_{n})} \leq K E[\abs{\sigma^2_0-\sigma^{'2}_0}] \left(\frac{p}{k+p-2}\right)^{n-1}$$
where $K$ is defined in Lemma \ref{lem:bayesregconddens}. Lemma \ref{lem:bayesregdeinit} implies the first inequality. The second inequality is a result of Corollary \ref{cor:oneshotlipschitz}, which is satisfied because of the contraction condition (Lemma \ref{lem:bayesregcontr}) and the properties of the conditional density $\theta_{2,n}\mid \theta_{1,n}$ (Lemma \ref{lem:bayesregconddens}).

%
\end{proof}

\begin{numexmp}[Application of the Bayesian regression Gibbs sampler]
	Suppose that we are interested in evaluating the delay in getting a PhD ($Y$), based on age, age squared, sex and whether the student has a child at home ($X$). For more information on this problem see \cite{phddelay, phddelaydata}. We want to find the upper bound on the total variation distance for a Bayesian regression Gibbs sampler fitted to this model. In this case, there are 333 observed values ($k=333$) and 4 covariates ($p=4$). Using the notation from Theorem \ref{thm:bayesianreggibbs}, $K=0.0682$. Further suppose we are interested in evaluating the upper bound between two copies of the Markov chain $X_n,X'_n$ such that $\sigma_0^2=1$ and $\sigma^{'2}_0=1001$. Then,
	\begin{align}
		\norm{\L(\beta_{n},\sigma_{n})-\L(\beta '_{n},\sigma '_{n})}\leq 68.16454 \left(0.0119403\right)^{n-1}
	\end{align}
	After 3 iterations, the total variation distance between the two chains will be less than 0.01.
\end{numexmp}

\subsection{Bayesian location model Gibbs sampler}
\begin{exmp}[Bayesian location model Gibbs sampler]\label{ex:anothergibbssampler}
Suppose that we are given data points $Y_1,\ldots, Y_J \sim N(\mu, \tau^{-1})$ where $\mu, \tau^{-1}$ are unknown and $J\geq 3$. Let $\mu, \tau^{-1}$ have flat priors on $\mathbb{R}$ and $\mathbb{R}_+$. The Gibbs algorithm is based on the conditional posterior distributions of $\mu, \tin$, which are defined as follows.

\begin{itemize}
	\item $\mu_{n+1} = \bar{y} + Z_{n+1}/\sqrt{J \tau_{n}}$
	\item $\tau^{-1}_{n+1} = \frac{\frac{S}{2}+\frac{J}{2}(\bar{y}-\mu_{n+1})^2}{G_{n+1}}$
\end{itemize}
Where $Z_{n}\sim N(0,1)$ and $G_{n}\sim \Gamma(\frac{J+2}{2},1)$ are independent and $S=\sumn (y_i-\bar{y})^2$. 
\end{exmp}

The following theorem gives an upper bound on the convergence rate of the Bayesian location model Gibbs sampler. 
\begin{thm}\label{thm:anothergibbssampler}
	For two copies of the Bayesian location model Gibbs sampler Example \ref{ex:anothergibbssampler}, 
	\begin{align}\label{eqn:anothergibbssamplereqn}
		\norm{\L(\mu_{n},\tinv_{n})-\L(\mu'_{n}, \tau^{'-1}_{n})}&\leq K E[\abs{\tinv_0-\tau^{'-1}_0}] \left(\frac{1}{J}\right)^{n-1}
	\end{align}
	where $K= \frac{(S/2)^{\frac{J-1}{2}}}{\Gamma(\frac{J-1}{2})}\left(\frac{S}{J+1}\right)^{-\frac{J-3}{2}}e^{-\frac{J+1}{2}}$.
\end{thm}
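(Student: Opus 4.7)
The argument parallels that of Theorem \ref{thm:bayesianreggibbs}, and the plan is to verify the hypotheses of Corollary \ref{cor:oneshotlipschitz} for a random-coefficient autoregressive representation of $\tau^{-1}_n$, and then lift from $\tau^{-1}_n$ back to the joint chain $(\mu_n,\tau^{-1}_n)$. Concretely, I will establish three lemmas analogous to Lemmas \ref{lem:bayesregdeinit}--\ref{lem:bayesregconddens}: a de-initialization/marginal-reduction lemma, a contraction lemma producing $D=1/J$, and an \emph{attributes-of-the-conditional-density} lemma producing the constant $K$.

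First, substituting $\mu_{n+1}-\bar y=Z_{n+1}/\sqrt{J\tau_n}$ back into the update for $\tau^{-1}_{n+1}$ yields
\begin{equation*}
\tau^{-1}_{n+1}=\frac{Z_{n+1}^{2}}{2G_{n+1}}\,\tau^{-1}_n+\frac{S}{2G_{n+1}},
\end{equation*}
which is a random-coefficient autoregressive process in $\tau^{-1}_n$ alone, with $\theta_{1,n+1}=Z_{n+1}^{2}/(2G_{n+1})$ and $\theta_{2,n+1}=S/(2G_{n+1})$. Since, for $n\geq 1$, the pair $(\mu_n,\tau^{-1}_n)$ is generated from $\tau^{-1}_{n-1}$ through a common Markov kernel driven by the i.i.d.\ noise $(Z_n,G_n)$, the coupling-characterization argument used in Lemma \ref{lem:bayesregdeinit} (via Proposition \ref{prop:invertibletv}) reduces the joint TV bound to the marginal one, $\norm{\L(\mu_n,\tau^{-1}_n)-\L(\mu'_n,\tau^{'-1}_n)}\leq\norm{\L(\tau^{-1}_n)-\L(\tau^{'-1}_n)}$. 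The contraction constant is then just the mean absolute coefficient: using $Z_{n+1}\indep G_{n+1}$, $E[Z_{n+1}^{2}]=1$, and $E[1/G]=2/J$ for $G\sim\Gamma((J+2)/2,1)$ (finite because $J\geq 3$), we obtain $D=E[\abs{\theta_{1,n+1}}]=\tfrac{1}{J}\in(0,1)$, which matches the geometric rate in \eqref{eqn:anothergibbssamplereqn}.

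The main obstacle is the third lemma. I would handle it by reparametrizing with the independent variables $X=Z^{2}/S\sim\Gamma(1/2,S/2)$ and $Y=S/(2G)\sim\Gamma^{-1}((J+2)/2,S/2)$, so that $\theta_1=XY$ and $\theta_2=Y$ in exactly the form that appears in the Bayesian regression Gibbs sampler. A standard Jacobian computation on the joint density of $(\theta_1,\theta_2)$, followed by integration to obtain $f_{\theta_1}$ and division, identifies the conditional law as $\theta_2\mid\theta_1=t\sim\Gamma^{-1}$ with shape independent of $t$ and scale affine in $t+1$. In particular the conditional density is continuous and unimodal, so $M=1$ and the $I_{M>1}/L$ term in the Sideways Theorem vanishes. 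Using the closed-form expression for the maximum of an inverse-gamma density, the supremum of $f_{\theta_2\mid\theta_1}(\cdot\mid t)$ is monotone decreasing in $t$, hence attained at $t=0$ and evaluates to the stated $K$. Assembling the three ingredients through Corollary \ref{cor:oneshotlipschitz} and the marginal reduction from the first step then gives \eqref{eqn:anothergibbssamplereqn}.
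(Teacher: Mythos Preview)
Your proposal is correct and follows essentially the same three-lemma structure as the paper's proof (Lemmas \ref{lem:locmodeldeinit}--\ref{lem:locmodelconddens}): the same random-coefficient autoregressive representation of $\tau^{-1}_n$, the same contraction constant $D=E[\theta_{1,n}]=1/J$, and the same inverse-gamma identification of $\theta_2\mid\theta_1$ with $M=1$ and $K$ obtained at $\theta_1=0$ at the mode. One small correction: the marginal reduction step in Lemma \ref{lem:bayesregdeinit} (and its analogue Lemma \ref{lem:locmodeldeinit}) invokes the de-initialization result Proposition \ref{prop:deinit}, not Proposition \ref{prop:invertibletv}, since the relevant map is neither deterministic nor invertible.
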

This bound compares to the one derived in Section 6 of \cite{oneshot} which states that,
\begin{align*}
	\norm{\L(\mu_n,\tinv_n)-\L(\mu'_n, \tau^{'-1}_n)}
	&\leq\left(\frac{J}{2}+1\right) E[\abs{\tinv_0-\tau^{'-1}_0}] \left(\frac{1}{J}\right)^n
\end{align*}
Both bounds return the same geometric rate of convergence. However, the magnitude of constant $K$ is difficult to compare against $\left(\frac{J}{2}+1\right)$ without knowing $S$. Note that the bound derived from Corollary \ref{cor:oneshotlipschitz} is calculated in a systematic way.

Before proving Theorem \ref{thm:anothergibbssampler}, we present some lemmas.

\begin{lem}\label{lem:locmodeldeinit}
	The variable $\tinv_{n}$ can be written as a random-coefficient autoregressive process, $\tinv_{n}=X_nY_n\tinv_{n-1}+Y_n$,
	where $X_n\sim \Gamma\left(\frac{1}{2}, \frac{S}{2}\right)$ and $Y_n\sim \Gamma^{-1}\left(\frac{J+2}{2}, \frac{S}{2}\right)$. And so, $\norm{\L(\mu_n,\tau^{-1}_n)-\L(\mu'_n, \tau^{'-1}_n)}\leq \norm{\L(\tinv_{n})-\L(\tau^{'-1}_n)}$.
	The proof can be found in \ref{proof:locmodeldeinit}.
\end{lem}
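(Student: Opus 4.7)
The plan is to verify the two claims of Lemma \ref{lem:locmodeldeinit} in sequence. First, I would confirm the random-coefficient autoregressive representation for $\tinv_n$ by direct substitution. Second, I would establish the total variation inequality via a coupling argument that exploits a bijection between $(\mu_n,\tinv_n)$ and $(\tinv_n, G_n, \text{sign}(Z_n))$.

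For the AR representation, substituting $\mu_n = \bar y + Z_n\sqrt{\tinv_{n-1}/J}$ into the update for $\tinv_n$ gives $(\bar y - \mu_n)^2 = Z_n^2\tinv_{n-1}/J$, and hence
\[
\tinv_n = \frac{S + Z_n^2\tinv_{n-1}}{2G_n} = \frac{S}{2G_n} + \left(\frac{Z_n^2}{S}\right)\left(\frac{S}{2G_n}\right)\tinv_{n-1} = Y_n + X_n Y_n \tinv_{n-1},
\]
with $X_n = Z_n^2/S$ and $Y_n = S/(2G_n)$. The distributional identifications then follow from standard scaling facts: $Z_n^2\sim\chi^2_1=\Gamma(1/2,1/2)$, so multiplying by $1/S$ yields $X_n\sim\Gamma(1/2,S/2)$; and since $G_n\sim\Gamma((J+2)/2,1)$ with $c/G\sim\Gamma^{-1}(\alpha,c)$ whenever $G\sim\Gamma(\alpha,1)$, we obtain $Y_n\sim\Gamma^{-1}((J+2)/2,S/2)$. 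Independence $X_n\indep Y_n$ and joint independence from $\tinv_{n-1}$ follow from $Z_n\indep G_n$ and the independence of $(Z_n,G_n)$ from the chain history.

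For the TV inequality, the key observation is that the map $\phi:(\tinv_n, G_n, \text{sign}(Z_n))\mapsto(\mu_n,\tinv_n)$ is a measurable bijection on the relevant support. Given $(\mu_n,\tinv_n)$ one recovers $G_n = (S+J(\mu_n-\bar y)^2)/(2\tinv_n)$ and $\text{sign}(Z_n) = \text{sign}(\mu_n-\bar y)$, and conversely $\mu_n = \bar y + \text{sign}(Z_n)\sqrt{(2G_n\tinv_n - S)/J}$. By Proposition \ref{prop:invertibletv}, the total variation distance is preserved under $\phi$, so $\norm{\L(\mu_n,\tinv_n)-\L(\mu'_n,\tau^{'-1}_n)}$ equals the TV distance between the joint laws of $(\tinv_n, G_n, \text{sign}(Z_n))$ in the two chains. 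I would then construct a coupling that shares the auxiliary variables $G_n = G'_n$ and $\text{sign}(Z_n) = \text{sign}(Z'_n)$ across the chains---permissible because these have identical marginal laws in both chains---so that equality of $\tinv_n$ and $\tau^{'-1}_n$ automatically propagates through $\phi$ to equality of the full joints.

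The main obstacle will be reducing the resulting joint-TV expression to $\norm{\L(\tinv_n) - \L(\tau^{'-1}_n)}$ rather than some strictly larger conditional quantity. While $\text{sign}(Z_n)$ is independent of $\tinv_n$ by the symmetry of $Z_n$, the variable $G_n$ enters the update for $\tinv_n$ multiplicatively, so the conditional law $G_n\mid\tinv_n$ differs between the two chains through the marginal of $\tinv_{n-1}$. Resolving this dependence cleanly---most plausibly by applying Proposition \ref{prop:tvexp} conditioned on $G_n$ together with the invertible-function identity from Proposition \ref{prop:invertibletv}---is the technically delicate step on which the inequality ultimately hinges.
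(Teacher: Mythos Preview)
Your derivation of the random-coefficient AR representation $\tau^{-1}_n = X_n Y_n \tau^{-1}_{n-1} + Y_n$ is correct and identical to the paper's. For the TV inequality, however, the paper takes a different and much shorter route: it observes that the transition of the full chain $(\mu_n,\tau^{-1}_n)$ depends on the past only through $\tau^{-1}_{n-1}$, so $\{\tau^{-1}_n\}$ is a \emph{de-initialization} of $\{(\mu_n,\tau^{-1}_n)\}$, and the inequality follows directly from Proposition~\ref{prop:deinit} (Theorem~1 of \cite{deinit}). No bijection or explicit coupling construction is needed.

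Your bijection approach has a genuine gap at exactly the point you flag. Because $G_n$ appears in the denominator of $\tau^{-1}_n$, the pair $(\tau^{-1}_n,G_n)$ is not a product of independent components, and the conditional law $\L(G_n\mid \tau^{-1}_n)$ differs between the two chains through the law of $\tau^{-1}_{n-1}$. Consequently the joint TV $\|\L(\tau^{-1}_n,G_n,\operatorname{sign}(Z_n))-\L(\tau'^{-1}_n,G'_n,\operatorname{sign}(Z'_n))\|$ is \emph{not} in general bounded by $\|\L(\tau^{-1}_n)-\L(\tau'^{-1}_n)\|$. Your proposed fix---Proposition~\ref{prop:tvexp} conditioned on $G_n$---produces $E_{G_n}\big[\|\L(\tau^{-1}_n\mid G_n)-\L(\tau'^{-1}_n\mid G_n)\|\big]$, and that conditional average is at least the marginal TV, not at most, so the inequality goes the wrong way. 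If you want to avoid citing de-initialization, the workable choice of auxiliary is one step earlier: write $(\mu_n,\tau^{-1}_n)$ as a measurable function of $(\tau^{-1}_{n-1},Z_n,G_n)$, share $(Z_n,G_n)$ across the chains (legitimate because they are independent of $\tau^{-1}_{n-1}$), and obtain a bound by $\|\L(\tau^{-1}_{n-1})-\L(\tau'^{-1}_{n-1})\|$.
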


\begin{lem}[Contraction condition]\label{lem:locmodelcontr}
	The Bayesian location model Gibbs sampler satisfies the contraction condition with $D=\frac{1}{J}$. The proof can be found in \ref{proof:locmodelcontr}.
\end{lem}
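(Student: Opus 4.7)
The plan is to reduce the claim to Corollary \ref{cor:oneshotlipschitz} via the representation supplied by Lemma \ref{lem:locmodeldeinit}. Since Lemma \ref{lem:locmodeldeinit} rewrites the update as $\tau^{-1}_n = X_n Y_n \tau^{-1}_{n-1} + Y_n$, the process $\{\tau^{-1}_n\}$ fits the random-coefficient autoregressive form of Corollary \ref{cor:oneshotlipschitz} with $\theta_{1,n}=X_n Y_n$ and $\theta_{2,n}=Y_n$. It therefore suffices to show $E[|X_n Y_n|]\leq 1/J$; since both factors are positive, the absolute value is cosmetic, and I only need to compute $E[X_n Y_n]$ directly.

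The next step is to argue that $X_n$ and $Y_n$ are independent, so that $E[X_n Y_n] = E[X_n]\,E[Y_n]$. Unwinding the Gibbs definitions, $\mu_n - \bar y = Z_n/\sqrt{J\tau_{n-1}}$ and hence
\[
\tau^{-1}_n \;=\; \frac{S/2}{G_n} \;+\; \frac{Z_n^2/(2G_n)}{\tau_{n-1}^{-1}{}^{-1}}\cdot\tau^{-1}_{n-1}\quad (\text{properly: }\tau^{-1}_n=\tfrac{Z_n^2}{2G_n}\tau^{-1}_{n-1}+\tfrac{S}{2G_n}),
\]
so the representation chosen in Lemma \ref{lem:locmodeldeinit} must be $Y_n = (S/2)/G_n$ and $X_n = Z_n^2/S$. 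Because $Z_n$ and $G_n$ are independent by construction of the Gibbs sampler, $X_n$ is a function of $Z_n$ alone while $Y_n$ is a function of $G_n$ alone, and independence follows immediately.

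Given independence, I then compute the two expectations from the standard moment formulas: using $E[\Gamma(\alpha,\beta)]=\alpha/\beta$ gives $E[X_n]=(1/2)/(S/2)=1/S$, and using $E[\Gamma^{-1}(\alpha,\beta)]=\beta/(\alpha-1)$ (valid since $(J+2)/2 > 1$ under $J\geq 3$) gives $E[Y_n]=(S/2)/(J/2)=S/J$. Multiplying yields $E[X_n Y_n]=1/J$, which is strictly less than $1$ for $J\geq 3$. Corollary \ref{cor:oneshotlipschitz} then applies with $D=1/J$, establishing the contraction condition as claimed.

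The only subtle point, and the one I expect to be the minor obstacle, is the independence of $X_n$ and $Y_n$: without the concrete decomposition $X_n=Z_n^2/S$, $Y_n=(S/2)/G_n$ inherited from the innovations $Z_n, G_n$ of the Gibbs sampler, one could not split the expectation of a product of dependent gamma and inverse-gamma variables. Once that decomposition is in hand (it is already used in the proof of Lemma \ref{lem:locmodeldeinit}), the remainder is routine moment arithmetic and an invocation of the corollary.
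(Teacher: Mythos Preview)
Your proposal is correct and follows essentially the same route as the paper: invoke Lemma \ref{lem:locmodeldeinit} to get the random-coefficient form $\tau^{-1}_n = X_n Y_n \tau^{-1}_{n-1} + Y_n$, use independence of $X_n$ and $Y_n$ (inherited from the independent innovations $Z_n, G_n$) to split $E[X_nY_n]=E[X_n]E[Y_n]=\tfrac{1}{S}\cdot\tfrac{S}{J}=\tfrac{1}{J}$, and apply Corollary \ref{cor:oneshotlipschitz}. The paper's own proof is the same computation, only more tersely stated.
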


\begin{lem}[Attributes of the conditional density $\theta_{2,n}\mid \theta_{1,n}$]\label{lem:locmodelconddens}
	For the Bayesian location model Gibbs sampler, $\theta_{2,n}\mid \theta_{1,n}$ has a continuous density, $M=1$ and 
	\begin{align}\label{eqn:K}
		K &= \frac{(S/2)^{\frac{J-1}{2}}}{\Gamma(\frac{J-1}{2})}\left(\frac{S}{J+1}\right)^{-\frac{J-3}{2}}e^{-\frac{J+1}{2}}
	\end{align}
	The proof can be found in \ref{proof:locmodelconddens}.
\end{lem}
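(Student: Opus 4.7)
The approach mirrors that of Lemma~\ref{lem:bayesregconddens}: I would derive the conditional density of $\theta_{2,n}\mid\theta_{1,n}$ in closed form, read off continuity and unimodality, and then maximize it to extract $K$.

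By Lemma~\ref{lem:locmodeldeinit}, in the representation $\tinv_n = \theta_{1,n}\tinv_{n-1} + \theta_{2,n}$ we have $\theta_{1,n} = X_n Y_n$ and $\theta_{2,n} = Y_n$, where $X_n \sim \Gamma(1/2, S/2)$ is independent of $Y_n \sim \Gamma^{-1}((J+2)/2, S/2)$. First I would change variables from $(X_n, Y_n)$ to $(U, V) = (X_n Y_n, Y_n)$ via $x = u/v$, $y = v$, with Jacobian $|1/v|$. Multiplying the gamma and inverse-gamma densities and substituting yields a joint density of the form $f_{U,V}(u,v) \propto u^{-1/2}\, v^{-(J+3)/2 - 1}\, e^{-S(u+1)/(2v)}$ on $(0,\infty)^2$, with an explicit normalizing constant $(S/2)^{(J+3)/2}/[\Gamma(1/2)\Gamma((J+2)/2)]$.

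Next I would integrate out $v$ using the substitution $t = S(u+1)/(2v)$, which turns the $v$-integral into a standard gamma integral and recovers the marginal $f_U(u)$, a density of Beta-prime type. Dividing joint by marginal, the conditional law $\theta_{2,n}\mid\theta_{1,n} = u$ is revealed to be inverse gamma with shape $(J+3)/2$ and scale $S(u+1)/2$. This closed form is manifestly continuous in $v$, and since the inverse gamma density is strictly unimodal on $(0,\infty)$, there is exactly one local extremum, giving $M = 1$.

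Finally, for $K$, I would differentiate $\log f_{\theta_{2,n}\mid\theta_{1,n}}(v\mid u)$ in $v$, solve for the mode $v^{\ast}(u) = S(u+1)/(J+5)$, and substitute back into the density. The resulting peak value depends on $u$ only through a factor of $1/(u+1)$, so the supremum over $u > 0$ is attained in the limit $u \to 0^+$ and provides a universal upper bound. The main obstacle will be the final algebraic clean-up to recast this supremum in the specific form of equation~(\ref{eqn:K}); this requires carefully tracking the powers of $S$ and $2$ and using the recursion $\Gamma(\alpha+1) = \alpha \Gamma(\alpha)$ to translate between gamma functions whose arguments differ by integers.
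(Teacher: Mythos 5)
Your route is the same as the paper's: transform $(X_n, Y_n) \mapsto (\theta_{1,n},\theta_{2,n}) = (X_n Y_n, Y_n)$, collect the joint density, recognize the conditional as inverse gamma, and evaluate at the mode with $\theta_1 = 0$. Up to and including the shape parameter, your calculation is correct --- and that is precisely the problem: your answer will not ``recast'' into equation~(\ref{eqn:K}), because equation~(\ref{eqn:K}) contains an arithmetic slip in the paper, not an algebraic identity waiting to be unwound.

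Concretely, the exponent of $\theta_2$ in the joint density is $-(\aax + \aay) - 1$ with $\aax = 1/2$ and $\aay = (J+2)/2$, so $\aax + \aay = (J+3)/2$; the paper's proof asserts this equals $(J-1)/2$, which is off by $2$. Your conclusion that $\theta_{2,n}\mid\theta_{1,n} \sim \Gamma^{-1}\bigl((J+3)/2,\ S(\theta_1+1)/2\bigr)$ is the correct one, and it agrees with the paper's own (correct) regression-case formula in Lemma~\ref{lem:bayesregconddens} upon substituting $p=1$, $k=J+1$, $C=S$. Consequently the mode at $\theta_1=0$ is $S/(J+5)$, not $S/(J+1)$, and the resulting supremum is
\begin{equation*}
K = \frac{(S/2)^{\frac{J+3}{2}}}{\Gamma\!\left(\frac{J+3}{2}\right)}\left(\frac{S}{J+5}\right)^{-\frac{J+5}{2}} e^{-\frac{J+5}{2}},
\end{equation*}
which is a genuinely different number from equation~(\ref{eqn:K}) --- you can verify, e.g., at $J=3$, $S=1$ that the two constants differ by nearly two orders of magnitude. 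The gamma-function recursion $\Gamma(\alpha+1)=\alpha\Gamma(\alpha)$ will not reconcile them, because the discrepancy is in the exponents and arguments as well, not merely in a gamma normalizer. So you should not expect ``algebraic clean-up'' to land on equation~(\ref{eqn:K}); instead you have correctly re-derived this lemma and in doing so located an error, which propagates into the stated constant in Theorem~\ref{thm:anothergibbssampler} (though not into its rate $1/J$, since $D$ is computed from a separate, unaffected lemma).

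One minor caution in your write-up: you argue unimodality gives ``exactly one local extremum, giving $M=1$.'' That is fine, but be explicit that the single interior extremum of the inverse-gamma density is the mode and that the density vanishes at $0$ and at $\infty$, so there are no other extrema; this is what the paper means by $M=1$ and it is what feeds the $(M+1)/2 = 1$ factor in the Sideways Theorem.
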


Given the above lemmas, the proof of Theorem \ref{thm:anothergibbssampler} is straightforward when the Sideways Theorem is applied.

\begin{proof}[Proof of Theorem \ref{thm:anothergibbssampler}]
	Note that
	$$\norm{\L(\mu_{n},\tinv_{n})-\L(\mu'_{n}, \tau^{-1'}_{n})} \leq \norm{\L(\tinv_{n})-\L(\tau^{-1'}_{n})} \leq K E[\abs{\tinv_0-\tau^{-1'}_0}] \left(\frac{1}{J}\right)^{n-1}$$
	where $K$ is defined in Lemma \ref{lem:locmodelconddens}. The first and second inequality are a result of Lemma \ref{lem:locmodeldeinit} and Corollary  \ref{cor:oneshotlipschitz}, respectively. Corollary  \ref{cor:oneshotlipschitz} is satisfied because of the contraction condition (Lemma \ref{lem:locmodelcontr}) and the properties of the conditional density $\theta_{2,n}\mid\theta_{1,n}$ (Lemma \ref{lem:locmodelconddens}).
\end{proof}

\begin{numexmp}[Application of Bayesian location model Gibbs sampler]\label{numex:locmodel}
	Suppose that we are given the girth in inches of a sample of trees (see the \texttt{trees} dataset in \texttt{R}), $Y_1,
	\ldots, Y_{31}\sim N(\mu,\tinv)$, where $\mu, \tinv$ are unknown. We want to find the upper bound on the total variation distance for the Gibbs sampler model applied to this problem. In this case the number of datapoints is 31 ($J=31$) and using the notation from Theorem \ref{thm:anothergibbssampler}, $K=13.74027$. Further, suppose that we are interested in evaluating the upper bound between a Markov chain $(\mu_{n},\tinv_{n})$ with initial value $\tinv_0=1$ and the corresponding stationary Markov chain, which is denoted as $(\mu_{\infty},\tinv_{\infty})$.

	By Lemma \ref{lem:locmodeldriftval} a drift function exists. 
	
	\begin{lem}\label{lem:locmodeldriftval} For Numerical Example \ref{numex:locmodel}, the following drift condition holds, 
		$$E[(\tinv_n+0.5248723)^2\mid \tinv_{n-1}]\leq 0.6583702 (\tinv_{n-1}+0.5248723)^2 +106.3874$$
		The proof can be found in \ref{proof:locmodeldriftval}.
	\end{lem}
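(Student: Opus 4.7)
The plan is to combine Lemma \ref{lem:locmodeldeinit}, which represents $\tinv_n = X_n Y_n \tinv_{n-1} + Y_n$ with independent $X_n \sim \Gamma(1/2, S/2)$ and $Y_n \sim \Gamma^{-1}((J+2)/2, S/2)$, with the standard first two moments of these distributions and a numerical evaluation of $S = \sum_{i=1}^{31}(y_i - \bar{y})^2$ on the trees girth data. With $h = 0.5248723$, $\lambda = 0.6583702$, and $b = 106.3874$ fixed, the drift inequality reduces to a pointwise comparison of two quadratics in $\tinv_{n-1}$.

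First, I would square out $(\tinv_n + h)^2 = X_n^2 Y_n^2\,\tinv_{n-1}^2 + 2X_n Y_n(Y_n+h)\,\tinv_{n-1} + (Y_n+h)^2$ and take the conditional expectation using $X_n \indep Y_n$, obtaining
\[
E[(\tinv_n + h)^2 \mid \tinv_{n-1}] = E[X_n^2]E[Y_n^2]\,\tinv_{n-1}^2 + 2E[X_n]\bigl(E[Y_n^2] + hE[Y_n]\bigr)\,\tinv_{n-1} + E[Y_n^2] + 2hE[Y_n] + h^2.
\]
The standard gamma and inverse gamma moment formulas give $E[X_n] = 1/S$, $E[X_n^2] = 3/S^2$, $E[Y_n] = S/J$, and $E[Y_n^2] = S^2/(J(J-2))$. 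Substituting $J=31$, so that $J(J-2)=899$, the right-hand side becomes
\[
\tfrac{3}{899}\,\tinv_{n-1}^2 + 2\bigl(\tfrac{S}{899} + \tfrac{h}{31}\bigr)\,\tinv_{n-1} + \tfrac{S^2}{899} + \tfrac{2hS}{31} + h^2.
\]

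Second, I would expand the drift target as $\lambda\,\tinv_{n-1}^2 + 2\lambda h\,\tinv_{n-1} + \lambda h^2 + b$ and, since $\tinv_{n-1} \geq 0$, verify the inequality coefficient by coefficient: (i) $3/899 \leq \lambda$, which is immediate; (ii) $S/899 + h/31 \leq \lambda h$, which is essentially the constraint that pins down $\lambda$ for the chosen $h$; and (iii) $S^2/899 + 2hS/31 + h^2 \leq \lambda h^2 + b$, which pins down $b$. Should inequality (ii) be numerically tight, it can be replaced by the discriminant condition $(a_1 - 2\lambda h)^2 \leq 4(\lambda - a_2)(\lambda h^2 + b - a_0)$ applied to the quadratic LHS$-$RHS in $\tinv_{n-1}$; this is comfortable because the leading-coefficient gap $\lambda - 3/899 \approx 0.655$ is large.

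The main obstacle is purely numerical: one must evaluate $S$ for the trees girth dataset (approximately $295.4$) and carry enough precision so that the three coefficient checks go through for the stated $(h,\lambda,b)$. Structurally the proof is routine once the iterated-random-function representation of Lemma \ref{lem:locmodeldeinit} and the gamma and inverse gamma moment formulas are in hand; no further probabilistic input is needed.
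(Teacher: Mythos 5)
Your strategy is the same as the paper's: use the representation $\tinv_n=X_nY_n\tinv_{n-1}+Y_n$ from Lemma \ref{lem:locmodeldeinit}, expand the square, exploit independence of $X_n$ and $Y_n$, and substitute the gamma/inverse-gamma moments with $J=31$ and $S\approx 295.44$ from the trees data. Your moment formulas are correct, and the quadratic in $\tinv_{n-1}$ you obtain is exactly the one the paper computes in its appendix proof (your leading coefficient $3/899\approx 0.0033$ is the right value; the $0.6583702$ occupying that slot in the paper's displayed line is simply $\lambda$). The only presentational difference is that the paper writes a chain of equalities, having chosen $h,\lambda,b$ to match coefficients, while you check coefficient-wise inequalities.

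The genuine problem is in the numerical closure you defer to the end. With $S\approx 295.44$, your check (ii) has essentially zero slack ($S/899+h/31\approx 0.345560\approx\lambda h$; this equality is how $h$ was chosen), and check (iii) fails outright: the exact constant term is $S^2/899+2hS/31+h^2\approx 107.37$ (this is the paper's own intermediate value $107.3691$), which exceeds $\lambda h^2+b\approx 0.181+106.387=106.57$. Your discriminant fallback cannot repair this, because the deficit sits at $\tinv_{n-1}=0$, where only the constant terms matter and the large gap $\lambda-3/899$ buys nothing; as literally printed, the inequality fails for $\tinv_{n-1}\lesssim 1.1$. The resolution is a sign inconsistency in the statement: the appendix proof expands $(\tinv_n-h)^2$, and the downstream bound $E[\abs{\tinv_\infty-\tinv_0}]\leq 18.12198$ is reproduced by $\sqrt{b/(1-\lambda)}+\abs{1-0.5248723}$, so the intended drift function is $(\tinv_n-0.5248723)^2$. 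Under that reading the lemma is true, but your coefficient-wise scheme must change: the linear coefficient of the conditional expectation is $+0.623$ while the target's is $-0.691$, so (ii) can never hold term by term; instead one verifies that the difference, approximately $0.655\,t^2-1.315\,t+19.21$ in $t=\tinv_{n-1}$, has negative discriminant and positive leading coefficient, i.e.\ exactly the pointwise comparison, now with about $19$ units of slack in the constant rather than a deficit of $0.8$.
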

	
	So by lemma \ref{lem:locmodelexpdist}, 
	\begin{equation}\label{eq:stationarynumbd}
		E[\abs{\tinv_{\infty}-\tinv_0}]\leq 18.12198
	\end{equation} Combining this with Theorem \ref{thm:anothergibbssampler},
	\begin{align*}
		\norm{\L(\mu_{n},\tinv_{n})-\L(\mu_{\infty},\tau^{-1}_{\infty})}&\leq 13.74027\times 18.12198 \left(\frac{1}{31}\right)^{n-1}= 249 \left(\frac{1}{31}\right)^{n-1}
	\end{align*}
	
	After $4$ iterations, the total variation distance between the two chains will be less than $0.01$.
	This bound compares to the bound derived in \cite{oneshot}, which, combined with equation \ref{eq:stationarynumbd}, states that $\norm{\L(\mu_n,\tinv_n)-\L(\mu'_n, \tau^{'-1}_n)}
	\leq 299 \left(\frac{1}{31}\right)^n$.
\end{numexmp}

\subsection{Autoregressive normal process}
\begin{exmp}[Autoregressive normal process in $\mathbb{R}$]\label{ex:arlipschitz}
	
	Let $\{X_n\}_{n\geq 1}\in \mathbb{R}$ be an autoregressive normal process. Then for i.i.d. $Z_n\sim N(0,1)$,
	$$X_{n}=\frac{1}{2}X_{n-1}+\sqrt{\frac{3}{4}}Z_n$$
	In this case $\theta_{1,n}=\frac{1}{2}$ and $\theta_{2,n}=\sqrt{\frac{3}{4}}Z_n$. The density of $\theta_{2,n}$ is continuous and uni-modal and $K=\sqrt{\frac{2}{3\pi}}$.
	By Corollary \ref{cor:oneshotlipschitz},
	\begin{equation}\label{eqn:arnormupperbound}
		\norm{\L(X_{n})-\L(X'_{n})}\leq \sqrt{\frac{2}{3\pi}} E[\abs{X_0-X'_0}]\left(\frac{1}{2}\right)^{n-1}
	\end{equation}
	It is known that the geometric rate of convergence for the autoregressive normal process is $1/2$ \cite{wasmeth}, so once again the Sideways Theorem \ref{thm:oneshotlipschitz} generates tight geometric convergence rates up to a constant.
	
	When comparing the upper bound with the actual total variation distance, note that if $X_0=x_0$ is known, $X_n\sim N(\frac{x_0}{2^n}, 1-\frac{1}{4^n})$. Thus, the total variation distance between two copies of an autoregressive normal process $X_n, X'_n$ where the initial values are known, $X_0=x_0$ and $X'_0=x'_0$, is as follows (see Section 2 of \cite{oneshot}),
	\begin{equation*}
		\norm{\L(X_n)-\L(X'_n)}=1-2\Phi \left(-\frac{\abs{x_0-x'_0}}{2^{n+1}\sqrt{1-\frac{1}{4^n}}}\right) 
	\end{equation*}

Figure \ref{fig:arnormprocessuppervsactual} shows how the upper bound for the autoregressive normal process using equation \ref{eqn:arnormupperbound} compares to the actual total variation distance when $x_0=0$ and $x'_0=1$. The total variation is less than 0.01 after 6 iteration and the upper bound on the total variation is less than 0.01 after 7 iterations.
 
 \begin{figure}
 	\centering
 	\includegraphics[width=0.7\linewidth]{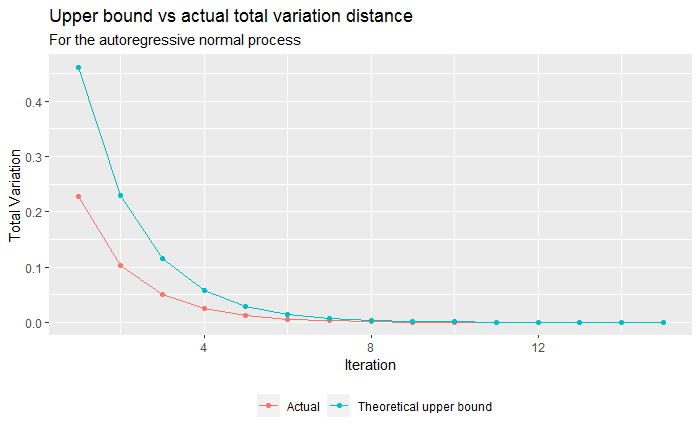}
 	\caption{This figure compares the actual value of $\norm{\L(X_n)-\L(X'_n)}$ against the upper bound derived from the Sideways Theorem \ref{thm:oneshotlipschitz}, (Equation \ref{eqn:arnormupperbound}) when $X_n, X'_n$ are two copies of the autoregressive normal process (i.e., $X_n=\frac{1}{2}X_{n-1}+\sqrt{\frac{3}{4}}Z_n, Z_n\sim N(0,1)$) and $x_0=0$, $x'_0=1$.}
 	\label{fig:arnormprocessuppervsactual}
 \end{figure}
 
\end{exmp}

In the following section we extend the above example to higher dimensions.

\subsection{Processes in $\mathbb{R}^d$}\label{subsec:rdex}

Next we extend the autoregressive normal process as defined above to $\mathbb{R}^d$. 
To do so, we apply Proposition \ref{prop:indcoord} to an autoregressive normal process in $\mathbb{R}^d$ with independent coordinates, Example \ref{ex:arnormind}, and non-independent coordinates, Example \ref{ex:arnormdep}. 

\begin{exmp}[Autoregressive normal process in $\mathbb{R}^d$ with independent coordinates]\label{ex:arnormind}
	
	Let $\{\vecx_n\}_{n\geq 1}\in \mathbb{R}^d$ be an autoregressive normal process with independent coordinates. Then for i.i.d. $\vec{Z}_n\sim N(\vec{0},I_d)$,
	$$\vecx_{n}=\frac{1}{2}\vecx_{n-1}+\sqrt{\frac{3}{4}}\vec{Z}_n$$
	And if $i\neq j$, then $Z_{i,n}\indep Z_{j,n}$. Further, $X_{i,n}=\frac{1}{2}X_{i,n-1}+\sqrt{\frac{3}{4}}Z_{i,n}$ for $i\in \{1,\ldots,d\}$ and so by Example \ref{ex:arlipschitz}, 
	$$\norm{\L(X_{i,n})-\L(X'_{i,n})}\leq \sqrt{\frac{2}{3\pi}} E[\abs{X_{i,0}-X'_{i,0}}]\left(\frac{1}{2}\right)^{n-1}$$
	Since each coordinate is independent and bounded above by the same value, Proposition \ref{prop:indcoord} implies that
	$$\norm{\L(\vecx_{n})-\L(\vecx'_{n})}\leq d \sqrt{\frac{2}{3\pi}} \sup_{0\leq i\leq d}E[\abs{X_{i,0}-X'_{i,0}}]\left(\frac{1}{2}\right)^{n-1}$$
	
	Again, it is known that the geometric rate of convergence for the autoregressive normal process in $\mathbb{R}^d$ is $1/2$ \cite{wasmeth}. 
	
	Finally, to apply numbers to this example, suppose that $\vecx_n, \vecx '_n \in \mathbb{R}^{100}$ and the initial values of this process are $\vecx_0 = (1,\ldots, 1)$ and $\vecx '_0 = (0,\ldots, 0)$. The total variation distance would be bounded above with $\norm{\L(\vecx_{n+1})-\L(\vecx'_{n+1})}\leq 100 \sqrt{\frac{2}{3\pi}}\left(\frac{1}{2}\right)^n$. This means that at $14$ iterations the total variation distance would be less than $0.01$.
\end{exmp}

The following example is a more general version of the above, where $X_n$ is a general auto regressive normal process in $\mathbb{R}^d$.
\begin{exmp}[Autoregressive normal process in $\mathbb{R}^d$]\label{ex:arnormdep}
	The random vector $\{\vecx_n\}_{n\geq 1}\in \mathbb{R}^d$ is an autoregressive normal process if for matrix $A$ and random vector $\vec{W}_n\sim N(\vec{0},\Sigma_d^2)$ ($\Sigma_d^2$ is a positive semi-definite matrix)
	$$\vecx_{n}=A\vecx_{n-1}+\vec{W}_n$$
\end{exmp}

\begin{thm}\label{thm:arnormd}
	Suppose that $A$ is a diagonalizable matrix. Then for two copies, $\vecx_n, \vecx '_n\in \mathbb{R}^d$, of the autoregressive normal process defined in Example \ref{ex:arnormdep},
	\begin{equation}\label{eqn:arnormd}
		\norm{\L(\vecx_n)-\L(\vecx '_n)}\leq \sqrt{\frac{d}{2\pi}} \norm{\Sigma^{-1}_d}_2 \cdot \norm{ P}_2 \norm{P^{-1}}_2 E[\norm{\vecx_{0}-\vecx '_{0}}_2] \max_{1\leq i\leq d}\abs{\lambda_i}^n
	\end{equation}
where $A=P D P^{-1}$ with $D$ as the corresponding diagonal matrix, $\lambda_i$ is the $i$th eigenvalue of $A$ and $\norm{\cdot}_2$ denotes the Frobenius norm.
The proof can be found in \ref{proof:arnormd}, which uses a modified version of the Sideways Theorem.
\end{thm}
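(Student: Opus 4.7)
The plan is to apply the one-shot coupling strategy in a multivariate form tailored to Gaussian innovations, replacing the one-dimensional coalescing step of the Sideways Theorem \ref{thm:oneshotlipschitz} with a direct total-variation bound between two $d$-dimensional normals that share a common covariance. The central decomposition is the eigendecomposition $A = PDP^{-1}$, which cleanly separates the geometric contraction factor $\max_i\abs{\lambda_i}$ from the basis-change cost $\norm{P}_2 \norm{P^{-1}}_2$, so both the contraction and the coalescing estimates can be phrased in terms of the same scalars that appear in \eqref{eqn:arnormd}.

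First I would run the contraction phase using common random numbers $\vec{W}_m=\vec{W}'_m$ for $1\leq m\leq n-1$. This gives $\vec{X}_m - \vec{X}'_m = A(\vec{X}_{m-1}-\vec{X}'_{m-1})$, so that $\vec{X}_{n-1}-\vec{X}'_{n-1}=A^{n-1}(\vec{X}_0-\vec{X}'_0)=PD^{n-1}P^{-1}(\vec{X}_0-\vec{X}'_0)$. Because $D^{n-1}$ is diagonal its operator 2-norm equals $\max_i\abs{\lambda_i}^{n-1}$, and combining with the standard inequality $\norm{M}_{\mathrm{op}}\leq \norm{M}_2$ for both $P$ and $P^{-1}$ yields
\begin{equation*}
E[\norm{\vec{X}_{n-1}-\vec{X}'_{n-1}}_2]\leq \norm{P}_2\norm{P^{-1}}_2 \max_i\abs{\lambda_i}^{n-1} E[\norm{\vec{X}_0-\vec{X}'_0}_2].
\end{equation*}

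Second I would prove the multivariate analogue of Lemma \ref{lem:coalescwas}: for $\mu,\mu'\in\mathbb{R}^d$,
\begin{equation*}
\norm{N(\mu,\Sigma_d^2)-N(\mu',\Sigma_d^2)}\leq \sqrt{\tfrac{d}{2\pi}}\,\norm{\Sigma_d^{-1}}_2\,\norm{\mu-\mu'}_2.
\end{equation*}
This is the ``sideways'' step in disguise and I would derive it by writing $f_\mu(z)-f_{\mu'}(z)=\int_0^1 \nabla_\mu f_{\mu_t}(z)\cdot(\mu-\mu')\,dt$ along the segment $\mu_t=\mu'+t(\mu-\mu')$, using $\nabla_\mu f_\mu(z)=\Sigma_d^{-2}(z-\mu)f_\mu(z)$, interchanging the integrals by Fubini so the $z$-integral becomes a Gaussian expectation of a centred linear form, and finally applying Cauchy--Schwarz together with $E\norm{Z}_2\leq \sqrt{d}$ for $Z\sim N(0,I_d)$ to extract the $\sqrt{d/(2\pi)}$ constant. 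Applied at step $n$ with $\mu=A\vec{X}_{n-1}$ and $\mu'=A\vec{X}'_{n-1}$, together with $\norm{A(\vec{X}_{n-1}-\vec{X}'_{n-1})}_2\leq \max_i\abs{\lambda_i}\,\norm{P}_2\norm{P^{-1}}_2\,\norm{\vec{X}_{n-1}-\vec{X}'_{n-1}}_2$, this plays the role of the coalescing condition.

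Finally I would assemble the pieces through the same auxiliary-chain template used in the proof of Theorem \ref{thm:oneshot}: construct $\vec{Y}_m,\vec{Y}'_m$ that agree with the original chains at $m=0$, share innovations for $0<m<n$, and are joined through the Gaussian coalescing inequality at $m=n$; take expectations via Proposition \ref{prop:tvexp}; and multiply the contraction estimate into the coalescing constant to recover \eqref{eqn:arnormd}. The main obstacle will be the multivariate coalescing inequality: to end up with the clean product $\sqrt{d/(2\pi)}\,\norm{\Sigma_d^{-1}}_2\,\norm{P}_2\norm{P^{-1}}_2$ rather than a tangled quadratic form $\sqrt{v^T\Sigma_d^{-2}v}$ or a squared basis-change factor, one must carefully allocate \emph{one} power of $\max_i\abs{\lambda_i}$ and \emph{one} copy of $\norm{P}_2\norm{P^{-1}}_2$ to the coalescing step (through the bound on $\norm{A\,\cdot\,}_2$) while the remaining $n-1$ powers and no additional basis-change factors come from the contraction iteration.
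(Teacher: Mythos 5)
Your coalescing step is a genuinely different route from the paper's. You bound the total variation between two $d$-dimensional Gaussians with common covariance directly, by integrating the density gradient $\nabla_\mu f_\mu(z)=\Sigma_d^{-2}(z-\mu)f_\mu(z)$ along the segment joining the means, then Fubini and Cauchy--Schwarz. The paper instead applies the invertible transform $\Sigma_d^{-1}$ (Proposition \ref{prop:invertibletv}) to standardise the noise, applies a rotation so that the mean shift lies on the first coordinate axis, and then invokes the one-dimensional Lemma \ref{lem:shiftedfunc3}. Both devices land on the same $\sqrt{d/(2\pi)}$ constant (yours through $E\norm{Z}_2\leq\sqrt{d}$; the paper's through $\norm{D^n}_2\leq\sqrt{d}\max_i\abs{\lambda_i}^n$), and your version is a perfectly sound substitute for the rotation trick.

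However, there is a genuine gap in your assembly, and it is precisely the ``main obstacle'' you flag. You cannot resolve the double-counting of $\norm{P}_2\norm{P^{-1}}_2$ by ``allocating one copy to the coalescing step and no additional basis-change factors to the contraction iteration.'' For a diagonalizable but non-normal $A$, the best per-step contraction factor achievable under common random numbers is the operator norm $\norm{A}_{\mathrm{op}}$, which can strictly exceed $\max_i\abs{\lambda_i}$ (indeed it can exceed $1$ even when all eigenvalues lie inside the unit disc). The only way $n-1$ contraction steps produce the geometric factor $\max_i\abs{\lambda_i}^{n-1}$ is by collapsing them into $A^{n-1}=PD^{n-1}P^{-1}$, which unavoidably introduces a $\norm{P}\norm{P^{-1}}$ factor. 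Multiplying that into a separate coalescing constant $C$ that also carries a copy of $\norm{P}\norm{P^{-1}}$ (from $\norm{A\cdot}_2$) gives the squared factor you want to avoid, and no post-hoc ``allocation'' removes it. The fix is to abandon the $C\cdot D^{n-1}$ factorization of the abstract One-Shot Theorem for this example, exactly as the paper does. You already observe that the common-random-number coupling gives the \emph{deterministic} identity $\vec{X}_{n-1}-\vec{X}'_{n-1}=A^{n-1}(\vec{X}_0-\vec{X}'_0)$, so feed $\mu-\mu'=A^n(\vec{X}_0-\vec{X}'_0)$ directly into your Gaussian coalescing lemma (conditionally on $\vec{X}_0,\vec{X}'_0$), and only then bound
\begin{equation*}
\norm{\Sigma_d^{-1}A^n(\vec{X}_0-\vec{X}'_0)}_2\leq \norm{\Sigma_d^{-1}}_2\norm{P}_2\norm{D^n}_2\norm{P^{-1}}_2\norm{\vec{X}_0-\vec{X}'_0}_2
\end{equation*}
in a single application of submultiplicativity of the Frobenius norm, finishing with Proposition \ref{prop:tvexp} to take expectations over the initial values. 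With that repair your argument goes through.
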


\begin{numexmp}[Application of the autoregressive normal process in $\mathbb{R}^d$]
	To apply numbers to this example, suppose that $\vecx_n, \vecx '_n \in \mathbb{R}^{100}$ are two copies of the following process $\vecx_n = A \vecx_n + \vec{Z}_n, \vec{Z}_n\sim N(0, A)$ where 
	\begin{equation*}
		A = 
	\begin{pmatrix}
	\frac{1}{2} & \frac{1}{8} & 0 & \cdots & 0 & 0 \\
	\frac{1}{8} & \frac{1}{2} & \frac{1}{8} & \cdots & 0 & 0 \\
	\vdots  & \vdots  & \vdots & \ddots  & \vdots & \vdots  \\
	0 & 0 & 0 & \cdots & \frac{1}{8} & \frac{1}{2}\\
\end{pmatrix}
	\end{equation*}
	 and the initial values of this process are $\vecx_0 = (1,\ldots, 1)$ and $\vecx '_0 = (0,\ldots, 0)$. The total variation distance would be bounded above with $\norm{\L(\vecx_{n})-\L(\vecx'_{n})}\leq 98782.31\left(0.7498791 \right)^n$. This means that after $56$ iterations the total variation distance would be less than $0.01$.
\end{numexmp}

\section{Autoregressive conditional heteroscedastic processes}\label{sec:linproc}

In this section we look at bounding the total variation distance between two copies of an ARCH process.

\begin{defn}[Autoregressive conditional heteroscedastic (ARCH) process]\label{def:ARCH}
	The sequence $\{X_n\}_{n\geq 1}$ is an ARCH process if for $g:\mathbb{R}^2\to \mathbb{R}$
	\begin{equation}
		X_n = g(\theta_{1,n}, X_{n-1})\theta_{2,n}
	\end{equation}
	where $(\theta_{1,n},\theta_{2,n})\in \mathbb{R}^2$ are random variables and $(\theta_{1,n},\theta_{2,n})\indep (\theta_{1,m},\theta_{2,m})$ when $n\neq m$.
\end{defn}

\subsection{Application to the LARCH model}

\begin{exmp}[Linear ARCH process]\label{ex:lineararch}
	Let $\{X_n\}_{n\geq 1}\in \mathbb{R}$ be a linear ARCH process. Then for i.i.d. $Z_n$ and $\beta_0,\beta_1\in \mathbb{R}$
	$$X_n = (\beta_0 +\beta_1 X_{n-1})Z_n$$
	See Section 7.3.3 of \cite{timeseries} for more details on this model.
\end{exmp}

The following theorem provides an upper bound on the convergence rate of two copies of a LARCH process.

\begin{thm}\label{thm:lineararch}
	Let $\{X_n\}_{n\geq 1}\in \mathbb{R}$ and $\{X'_n\}_{n\geq 1}\in \mathbb{R}$ be two copies of the linear ARCH process. Suppose that, 
	\begin{itemize}
		\item $\beta_0,\beta_1>0$ and $Z_n>0$ a.s.
		\item the density of $\log(Z_0)$ is bounded above, has at most $M$ local maxima and minima, and is continuous. 
	\end{itemize}
	Then, the process is geometrically ergodic if $\beta_1 E[\abs{Z_0}]<1$ and an upper bound on the total variation distance between the two processes is,
	\begin{equation}\label{eqn:larch}
		\norm{\L(X_{n})-\L(X'_{n})} \leq \frac{\beta_1(M+1)}{2\beta_0}\sup_x e^x f_{Z_n}(e^x)D^{n-1}E[\abs{X_0-X'_0}]
	\end{equation}
	Where $D=\beta_1 E[Z_0]$
\end{thm}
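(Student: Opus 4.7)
The plan is to apply the One-Shot Coupling Theorem~\ref{thm:oneshot} directly to $\{X_n\}$ in its multiplicative form. The positivity assumptions $\beta_0,\beta_1>0$ and $Z_n>0$ a.s.\ keep $X_n$ in $(0,\infty)$, which will let me reduce the coalescing step, via the invertible transformation $\log$, to bounding a total variation distance between two translates of the law of $\log Z_n$. The contraction step will be essentially automatic because of the multiplicative structure.

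For the contraction condition, I would couple the innovations by taking $Z_n$ common to both chains. Then $\abs{X_n - X'_n} = \beta_1 Z_n \abs{X_{n-1} - X'_{n-1}}$, and independence of $Z_n$ from $(X_{n-1}, X'_{n-1})$ yields $E[\abs{X_n - X'_n}] = \beta_1 E[Z_0]\,E[\abs{X_{n-1} - X'_{n-1}}]$, so $D = \beta_1 E[Z_0]$ works. For the coalescing condition, I would apply Proposition~\ref{prop:tvexp} to condition on $(X_{n-1}, X'_{n-1})$, reducing the task to bounding $\norm{\L((\beta_0+\beta_1 x)Z_n) - \L((\beta_0+\beta_1 x')Z_n)}$ pointwise. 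Since $\log$ is invertible on $(0,\infty)$, Proposition~\ref{prop:invertibletv} rewrites this as $\norm{\L(a + W) - \L(a' + W)}$ with $W := \log Z_n$, $a := \log(\beta_0+\beta_1 x)$, and $a' := \log(\beta_0+\beta_1 x')$. The sideways-integration argument behind Lemma~\ref{lem:coalescingcondsideways}, applied to a pure translation of $f_W$, will bound this by $\frac{(M+1)}{2}\sup_w f_W(w)\,\abs{a - a'}$. Since $f_W(w) = e^w f_{Z_n}(e^w)$, one has $\sup_w f_W(w) = \sup_x e^x f_{Z_n}(e^x)$; and the mean value theorem applied to $t \mapsto \log(\beta_0 + \beta_1 t)$, whose derivative $\beta_1/(\beta_0+\beta_1 t)$ on $[0,\infty)$ is maximized at $t=0$, gives $\abs{a-a'} \leq (\beta_1/\beta_0)\abs{x-x'}$. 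Assembling these yields the coalescing constant $C = \frac{\beta_1(M+1)}{2\beta_0}\sup_x e^x f_{Z_n}(e^x)$.

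Plugging $C$ and $D$ into Theorem~\ref{thm:oneshot} with $n_0 = 0$ produces the stated bound. Geometric ergodicity of $\{X_n\}$ toward its stationary law then follows by taking $X'_0$ distributed as that stationary law (whose existence can be verified separately via a standard drift argument under $\beta_1 E[Z_0] < 1$). The main technical obstacle is the coalescing step: I need to invoke the sideways-integration bound cleanly for pure translates of a density with $M$ local extrema (passing from the paper's shifted-plus-drifted formulation of Lemma~\ref{lem:coalescingcondsideways} to this pure-shift version), and confirm that the assumed attributes of $\log(Z_0)$ (boundedness, finitely many extrema, continuity) transfer to $\log(Z_n)$ through the i.i.d.\ hypothesis on $\{Z_n\}$.
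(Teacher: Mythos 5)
Your proposal is correct and follows essentially the same path as the paper: couple the innovations to get $D=\beta_1 E[Z_0]$, condition on $(X_{n-1},X'_{n-1})$ via Proposition~\ref{prop:tvexp}, convert the multiplicative perturbation to an additive shift of $\log Z_n$ via Proposition~\ref{prop:invertibletv}, bound that shift's effect by the sideways lemma \ref{lem:shiftedfunc3}, and control the shift size with the mean value theorem using $\beta_0+\beta_1 x\geq \beta_0$. The only cosmetic difference is that the paper first normalizes $\L(Y_n Z_n)$ to $\L(Z_n)$ vs.\ $\L((1+\Delta')Z_n)$ before taking logs, whereas you take logs directly to a translation by $\log(\beta_0+\beta_1 x)-\log(\beta_0+\beta_1 x')$ — the two yield the same bound and the same constant $C=\frac{\beta_1(M+1)}{2\beta_0}\sup_x e^x f_{Z_n}(e^x)$.
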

Lemma 7.3.2 of \cite{timeseries} says that if $\beta_1E[\abs{Z_0}]<1$, then a stationary distribution exists. This theorem makes an even stronger assertion that under some additional assumptions, the process will also be geometrically ergodic with geometric convergence rate $D=\beta_1E[\abs{Z_0}]<1$.

Before proving Theorem \ref{thm:lineararch}, we present some lemmas.

\begin{lem}[Contraction condition]\label{lem:larchcontr}
	The LARCH process satisfies the contraction condition if $D=\beta_1 E[Z_0]<1$.
	See Section \ref{proof:larchcontr} for a proof.
\end{lem}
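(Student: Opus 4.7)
The plan is to unpack the LARCH recursion into the general $g_{\theta_{n+1}}$ form required by the One-Shot Coupling Theorem and then verify the contraction inequality by a direct computation, exploiting the multiplicative structure and the independence of $Z_{n+1}$ from $(X_n,X'_n)$.

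First I would write $g_{\theta_{n+1}}(x) = (\beta_0 + \beta_1 x) Z_{n+1}$, so that under a common-random-number coupling (i.e., both chains use the same $Z_{n+1}$), the two images at step $n+1$ are $(\beta_0 + \beta_1 X_n)Z_{n+1}$ and $(\beta_0 + \beta_1 X'_n)Z_{n+1}$. Subtracting, the constant $\beta_0$ and the common multiplier $Z_{n+1}$ factor cleanly, giving
\begin{equation*}
\bigl| g_{\theta_{n+1}}(X_n) - g_{\theta_{n+1}}(X'_n) \bigr| = \beta_1 |X_n - X'_n| \, |Z_{n+1}|.
\end{equation*}
Since $Z_{n+1}$ is independent of $(X_n, X'_n)$ (the coupled chains depend only on $Z_1,\ldots,Z_n$) and is identically distributed to $Z_0$, taking expectations factors the right-hand side as $\beta_1 E[|Z_0|] E[|X_n - X'_n|]$. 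By hypothesis $Z_n > 0$ a.s., so $E[|Z_0|] = E[Z_0]$, and we obtain
\begin{equation*}
E\bigl[ |g_{\theta_{n+1}}(X_n) - g_{\theta_{n+1}}(X'_n)| \bigr] \leq D \, E[|X_n - X'_n|]
\end{equation*}
with $D = \beta_1 E[Z_0]$, which lies in $(0,1)$ precisely under the stated assumption.

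There is really no hard step here: the whole argument is a short algebraic manipulation followed by an application of independence. The only thing worth flagging is that the contraction must be stated in the sense of the One-Shot Coupling Theorem (common $\theta_{n+1}$ used for both copies), not a pathwise Lipschitz bound; the independence of $Z_{n+1}$ from $X_n$ and $X'_n$ is what makes $E[\beta_1 |X_n - X'_n| Z_{n+1}]$ factor, and this relies on the fact that in the coupling construction the randomness $Z_{n+1}$ is drawn fresh at step $n+1$.
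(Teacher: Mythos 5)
Your proposal is correct and follows essentially the same route as the paper's proof: couple the two chains via a common $Z_{n+1}$, factor the difference as $\beta_1\abs{X_n-X'_n}\,\abs{Z_{n+1}}$, and use independence of $Z_{n+1}$ from $(X_n,X'_n)$ together with $Z_0>0$ a.s.\ to conclude $D=\beta_1 E[Z_0]$. Your explicit remark about why the expectation factors is a slightly more careful statement of the step the paper leaves implicit, but it is the same argument.
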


\begin{lem}[Coalescing condition]\label{lem:larchcoalesc}
	Suppose that the density of $\log(Z_0)$ is bounded above, has at most $M$ local maxima and minima and is continuous. Then the LARCH process satisfies the coalescing condition
	$$\norm{\L(X_{n})-\L(X'_{n})}\leq C E[\abs{X_{n-1}- X'_{n-1}}]$$
	
	Where $n\geq 1$ and $C= \frac{\beta_1(M+1)}{2\beta_0}\sup_x e^x f_{Z_n}(e^x)$,
	See Section \ref{proof:larchcoalesc} for a proof.
\end{lem}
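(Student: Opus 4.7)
My plan is to convert the multiplicative LARCH recursion into an additive one via a logarithmic change of variables, and then bound the total variation distance between two translates of the density of $\log Z_n$. Since $\beta_0,\beta_1>0$ and $Z_n>0$ a.s., the chain stays in $(0,\infty)$, so the logarithm is invertible on the relevant range and can be used as the transformation.

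First I would apply Proposition \ref{prop:tvexp} to condition on the previous states $(X_{n-1},X'_{n-1})$, and then invoke Proposition \ref{prop:invertibletv} with $g=\log$ to replace the conditional total variation by
$$\norm{\L(\log X_n\mid X_{n-1},X'_{n-1}) - \L(\log X'_n\mid X_{n-1},X'_{n-1})}.$$
Writing $\log X_n = \log(\beta_0+\beta_1 X_{n-1}) + \log Z_n$ and similarly for $\log X'_n$, these conditional laws are translates of the law of $W:=\log Z_n$ shifted by $a:=\log(\beta_0+\beta_1 X_{n-1})$ and $a':=\log(\beta_0+\beta_1 X'_{n-1})$. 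By change of variables, the density of $W$ is $f_W(w)=e^w f_{Z_n}(e^w)$, so $K:=\sup_w f_W(w)=\sup_x e^x f_{Z_n}(e^x)$, and $f_W$ inherits from $f_{Z_n}$ its continuity and the hypothesis of at most $M$ local extrema.

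Next I would bound the total variation between two translates of $f_W$ using the standard estimate $\int |f_W(w-a)-f_W(w-a')|\,dw \leq |a-a'|\,\int|f_W'|$, together with the variational bound $\int|f_W'|\leq (M+1)K$ that comes from partitioning $\mathbb{R}$ at the extrema into at most $M+1$ monotone intervals, each contributing at most $K$ since $0\leq f_W\leq K$. Finally, the mean value theorem applied to $y\mapsto \log(\beta_0+\beta_1 y)$, whose derivative on $(0,\infty)$ is bounded by $\beta_1/\beta_0$, gives $|a-a'|\leq (\beta_1/\beta_0)\,|X_{n-1}-X'_{n-1}|$, and taking expectations assembles everything into the claimed constant $C=\beta_1(M+1)K/(2\beta_0)$. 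The main subtlety I expect is the variational bound $\int|f_W'|\leq (M+1)K$: this is where the boundedness of the density and the at-most-$M$-extrema hypothesis must be combined (with continuity ensuring the extrema count is unambiguous), and once that step is in hand the remaining pieces fit together routinely.
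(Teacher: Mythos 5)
Your proposal follows essentially the same route as the paper: condition on the previous states via Proposition \ref{prop:tvexp}, use Proposition \ref{prop:invertibletv} to pass to logarithms and turn the multiplicative recursion into an additive one, bound the total variation between two translates of the density of $\log Z_n$ using its boundedness and extrema count, and finish with the mean value theorem applied to $y\mapsto\log(\beta_0+\beta_1 y)$ (whose derivative on $(0,\infty)$ is at most $\beta_1/\beta_0$) before taking expectations. The paper reaches the same point slightly differently (two applications of invertibility, first dividing by $Y_n=\beta_0+\beta_1X_n$ and then taking $\log$, with $\log(1+\Delta')\leq\Delta'$ playing the role of your MVT step), but this is a cosmetic reparametrization, not a different argument.

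One small way in which your version is actually tighter: you bound the translate difference via the total-variation (i.e.\ variational) norm estimate $\int\abs{f_W(w-a)-f_W(w-a')}\,dw\leq\abs{a-a'}\int\abs{f_W'}\leq\abs{a-a'}(M+1)K$, which holds for shifts of any size. The paper instead invokes Lemma \ref{lem:shiftedfunc3}, whose stated hypotheses require the shift to be smaller than the gap $L$ between extrema; the companion Lemma \ref{lem:coalescingcondsideways} compensates with an extra $I_{M>1}/L$ term, but the LARCH coalescing constant in the paper does not carry that term. Your variational argument makes the restriction unnecessary, so it is a marginally cleaner justification of the same inequality. Otherwise the two proofs match, and your constant $C=\beta_1(M+1)K/(2\beta_0)$ agrees with the paper's (the factor $\tfrac12$ coming from the definition of total variation in equation \ref{eqn:tv}).
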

Note that the density of $\log(Z_0)$ is $f_{\log(Z_0)}(x) =  e^x f_{Z_0}(e^x)$.

\begin{proof}[Proof of Theorem \ref{thm:lineararch}]\label{proof:larch}
	Suppose that the assumptions in Theorem \ref{thm:lineararch} are satisfied. Then the LARCH model satisfies the contraction condition (Lemma \ref{lem:larchcontr}) and the coalescing condition (Lemma \ref{lem:larchcoalesc}). By the One-Shot Coupling Theorem \ref{thm:oneshot}, equation \ref{eqn:larch} holds.
\end{proof}

\begin{numexmp}
	We find the convergence rate of Example 10.3.1 of \cite{intrototimeseries}, which is of the form,
	$$X_n^2 = (1 + 0.5X_{n-1}^2)Z_n^2$$
	Where $Z^2_n\sim \chi^2(1)$. Further let $X_0=0.1$ and $X'_0=1.1$. The density of $\log(Z_n^2)$ is $f_{\log(Z_n^2)}(x)= (2\pi)^{-1/2}e^{(x-e^x)/2}$ and so, $\sup f_{\log(Z_n^2)}(x) = (2\pi)^{-1/2}e^{(0-e^0)/2}=\frac{1}{\sqrt{2\pi e}}$. The density of $\log(Z_n^2)$ is also unimodal, so $M=1$. By Theorem \ref{thm:lineararch} an upper bound on the total variation distance is
	\begin{equation}
		\norm{\L(X_{n})-\L(X'_{n})}\leq \frac{1}{\sqrt{8\pi e}} 0.5^{n-1}
	\end{equation}
	After 3 iterations the total variation distance is less than 0.01. In comparison, Figure \ref{fig:larch} shows how the bound compares to a simulated estimate of the total variation distance for this process.
	\begin{figure}
		\centering
		\includegraphics[width=0.7\linewidth]{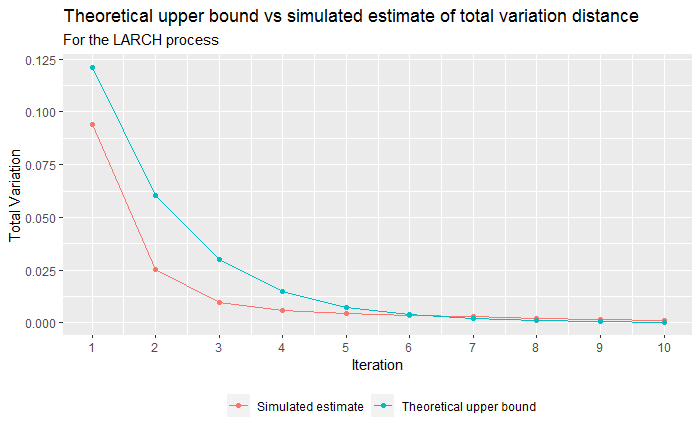}
		\caption{This figure compares a simulated approximation of $\norm{\L(X^2_n)-\L(X^{'2}_n)}$ against the upper bound (Equation \ref{eqn:larch}). $X^2_n, X^{'2}_n$ are two copies of the LARCH process (i.e., $X_n^2 = (1 + 0.5X_{n-1}^2)Z_n^2$ and $Z_n^2\sim \chi^2(1)$) and $X^2_0=0.1, X^{'2}_0=1.1$. To simulate total variation, 10 million simulations were run with bin length=0.01 for the estimated density function.}
		\label{fig:larch}
	\end{figure}
\end{numexmp}

\subsection{Application to the asymmetric ARCH model}

\begin{exmp}[Asymmetric ARCH process]\label{ex:asymetricarch}
	Let $\{X_n\}_{n}{n\geq 1}\in \mathbb{R}$ be an asymmetric ARCH process. Then for i.i.d. $Z_n$
	$$X_n = \sqrt{(aX_{n-1}+b)^2+c^2}Z_n$$
	Where $a, b,c \in \mathbb{R}$. See Exercise 4.1 of \cite{timeseries} for more details on this process.
\end{exmp}

The following theorem provides an upper bound on the convergence rate of two copies of an asymmetric ARCH process.

\begin{thm}\label{thm:asymmetricarch}
	Let $\{X_n\}_{n\geq 1}\in \mathbb{R}$ and $\{X'_n\}_{n\geq 1}\in \mathbb{R}$ be two copies of the asymmetric ARCH process defined in Example \ref{ex:asymetricarch}. Suppose further that the density of $Z_n$ is centred at 0 and is monotonically decreasing around zero (i.e., $\pi(x)\geq \pi(y)$ if $\abs{x}<\abs{y}$.). Then, the process is geometrically ergodic if $\abs{a}E[\abs{Z_0}]<1$ and an upper bound on the total variation distance between the two processes is
	\begin{equation}\label{eqn:asymmetricarch}
		\norm{\L(X_{n})-\L(X'_{n})}\leq \frac{\abs{a}}{c}D^{n-1}E[\abs{X_0-X'_0}]
	\end{equation}
	Where $D=\abs{a}E[\abs{Z_0}]$
\end{thm}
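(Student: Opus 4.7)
The approach is to apply the One-Shot Coupling Theorem \ref{thm:oneshot} with the representation $g(\theta_{1,n},X_{n-1})=\sqrt{(aX_{n-1}+b)^2+c^2}$ and $\theta_{2,n}=Z_n$. For the contraction condition I would couple $Z_{n+1}=Z'_{n+1}$ across the two chains and use the fact that $u\mapsto\sqrt{u^2+c^2}$ is $1$-Lipschitz (its derivative $u/\sqrt{u^2+c^2}$ has absolute value at most $1$). Applied with $u=aX_n+b$, $v=aX'_n+b$ this gives $\abs{\sqrt{(aX_n+b)^2+c^2}-\sqrt{(aX'_n+b)^2+c^2}}\leq \abs{a}\abs{X_n-X'_n}$ pointwise, and multiplying by $\abs{Z_{n+1}}$ and taking expectations (with $Z_{n+1}\indep(X_n,X'_n)$) yields $E[\abs{X_{n+1}-X'_{n+1}}]\leq \abs{a}E[\abs{Z_0}]\,E[\abs{X_n-X'_n}]$. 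Thus the contraction condition holds with $D=\abs{a}E[\abs{Z_0}]<1$ by hypothesis.

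For the coalescing condition, I would use Proposition \ref{prop:tvexp} with $Y=(X_{n-1},X'_{n-1})$ to reduce the problem to bounding $\norm{\L(s(x)Z)-\L(s(x')Z)}$ for fixed conditioning values $x,x'$, where $s(x):=\sqrt{(ax+b)^2+c^2}\geq c$. By Proposition \ref{prop:invertibletv} applied to the invertible rescaling $y\mapsto y/s(x)$, this equals $\norm{\L(Z)-\L(rZ)}$ where $r:=s(x')/s(x)$; without loss of generality $r\geq 1$. The key auxiliary claim is that any density satisfying $f_Z(x)\geq f_Z(y)$ whenever $\abs{x}<\abs{y}$ obeys $\norm{\L(Z)-\L(rZ)}\leq r-1$ for every $r\geq 1$.

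To prove this auxiliary bound, I would parameterize $h(y,t):=t^{-1}f_Z(y/t)$ and use $h(y,1)-h(y,r)=-\int_1^r\partial_t h(y,t)\,dt$ together with Fubini to get
\begin{equation*}
\int\abs{h(y,1)-h(y,r)}\,dy \;\leq\; \int_1^r\!\int\abs{\partial_t h(y,t)}\,dy\,dt.
\end{equation*}
A direct computation gives $\partial_t h(y,t)=-t^{-2}(uf_Z(u))'\bigl|_{u=y/t}$, and substituting $u=y/t$ produces $\int\abs{\partial_t h(y,t)}\,dy = t^{-1}\int\abs{(uf_Z(u))'}\,du$. The unimodality of $f_Z$ at $0$ forces $u\mapsto uf_Z(u)$ to be unimodal on each half-line and to vanish at $0$ and at $\pm\infty$, so its total variation on $\mathbb{R}$ is at most $2\int f_Z\,du = 2$; hence $\int\abs{\partial_t h(y,t)}\,dy\leq 2/t$ and $\norm{\L(Z)-\L(rZ)}=\tfrac{1}{2}\int\abs{h(y,1)-h(y,r)}\,dy\leq \log r\leq r-1$.

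Combining, the conditional TV is at most $\abs{s(x)-s(x')}/\min(s(x),s(x'))\leq \abs{s(x)-s(x')}/c\leq (\abs{a}/c)\abs{x-x'}$, where the last step reuses the $1$-Lipschitz bound from the first paragraph. Taking expectations over $(X_{n-1},X'_{n-1})$ yields the coalescing condition with $C=\abs{a}/c$, and Theorem \ref{thm:oneshot} with $n_0=0$ delivers \eqref{eqn:asymmetricarch}; geometric ergodicity with rate $D$ follows by taking $X'_0\sim\pi$. The main obstacle is the auxiliary total variation bound $\norm{\L(Z)-\L(rZ)}\leq r-1$ for unimodal $Z$; once that is in hand, the other steps are routine applications of the Lipschitz estimate together with Propositions \ref{prop:tvexp} and \ref{prop:invertibletv}.
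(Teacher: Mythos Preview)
Your overall architecture matches the paper's proof: verify the contraction condition with $D=\abs{a}E[\abs{Z_0}]$ via the $1$-Lipschitz property of $u\mapsto\sqrt{u^2+c^2}$, verify the coalescing condition with $C=\abs{a}/c$ by reducing (via Propositions \ref{prop:tvexp} and \ref{prop:invertibletv}) to a scale comparison $\norm{\L(Z)-\L(rZ)}$, and then invoke Theorem \ref{thm:oneshot}. The contraction step is essentially identical to Lemma \ref{lem:asymarchcontraction}.

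Where you diverge is in the auxiliary bound $\norm{\L(Z)-\L(rZ)}\leq r-1$. The paper (Lemma \ref{lem:asymarchcoalesc}) does not differentiate in the scale parameter; instead it uses the elementary density-ratio inequality $\norm{\L(Z)-\L((1+\Delta')Z)}\leq \sup_x\bigl(1-\pi_Z(x)/\pi_{(1+\Delta')Z}(x)\bigr)$ (Lemma 6.16 of \cite{markovmixing}) with $1+\Delta'=r^{-1}\in(0,1)$. Since $\abs{x/(1+\Delta')}>\abs{x}$, the monotone-about-zero hypothesis gives $\pi_Z(x)\geq \pi_Z(x/(1+\Delta'))$ pointwise, so the ratio is at least $1+\Delta'$ and the bound is $-\Delta'=1-r^{-1}$. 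This uses nothing beyond the stated assumption on $\pi_Z$.

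Your differentiation-in-scale route yields the pleasantly sharper intermediate bound $\log r$, but it has two soft spots. First, it tacitly assumes enough regularity on $f_Z$ for $(uf_Z(u))'$ to exist and for the Fubini swap to be valid; the theorem does not assume this. Second, the claim that monotonicity of $f_Z$ about $0$ forces $u\mapsto u f_Z(u)$ to be unimodal on each half-line is false: take a continuous $f_Z$ equal to $2$ on $[0,0.1]$, dropping steeply to $0.01$ by $u=0.11$, then constant on $[0.11,10]$ before decaying; then $uf_Z(u)$ rises to about $0.2$, falls to about $0.001$, rises again to about $0.1$, and falls. The inequality $\int_{\mathbb R}\abs{(uf_Z(u))'}\,du\leq 2$ can still be salvaged without unimodality by $\abs{(uf_Z)'}\leq f_Z+\abs{u f_Z'}=f_Z-u f_Z'$ on $[0,\infty)$ (since $f_Z'\leq 0$ there) and integration by parts, which gives $2\int_0^\infty f_Z$; but that repair still needs the absolute continuity you have not assumed. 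The paper's ratio argument sidesteps both issues.
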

Exercise 4.1 part 1 of \cite{timeseries} states that the process has a stationary solution if $D=\abs{a}E[\abs{Z_0}]<1$. Theorem \ref{thm:asymmetricarch} shows that under certain additional assumptions on $Z_n$ the process will also be geometrically ergodic with a specified quantitative bound.

Before proving Theorem \ref{thm:asymmetricarch}, we present some lemmas.

\begin{lem}[Contraction condition]\label{lem:asymarchcontraction}
	The asymmetric ARCH process satisfies the contraction condition if $D=\abs{a}E[\abs{Z_0}]<1$.
	See Section \ref{proof:asymarchcontraction} for a proof.
\end{lem}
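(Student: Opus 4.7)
The plan is to verify the contraction condition in the One-Shot Coupling Theorem \ref{thm:oneshot} directly, treating the asymmetric ARCH recursion as $X_n = g(X_{n-1})\,Z_n$ with $g(x) = \sqrt{(ax+b)^2+c^2}$ and $\theta_{2,n}=Z_n$. Since the coupling in the contraction phase shares the randomness, we need to bound
\[
E\bigl[\bigl|g(X_n)\,Z_{n+1}-g(X'_n)\,Z_{n+1}\bigr|\bigr] = E[|Z_{n+1}|]\cdot E\bigl[|g(X_n)-g(X'_n)|\bigr],
\]
where the factorisation uses independence of $Z_{n+1}$ from $(X_n,X'_n)$. Thus the whole problem reduces to showing the deterministic Lipschitz-type bound $|g(x)-g(x')| \le |a|\,|x-x'|$ for all $x,x'\in\mathbb{R}$.

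For the Lipschitz bound I would use the identity $\sqrt{u}-\sqrt{v}=(u-v)/(\sqrt{u}+\sqrt{v})$ with $u=(ax+b)^2+c^2$ and $v=(ax'+b)^2+c^2$. Expanding the numerator with the difference of squares gives
\[
|g(x)-g(x')| = \frac{|a|\,|x-x'|\cdot|(ax+b)+(ax'+b)|}{\sqrt{(ax+b)^2+c^2}+\sqrt{(ax'+b)^2+c^2}}.
\]
By the triangle inequality $|(ax+b)+(ax'+b)| \le |ax+b|+|ax'+b|$, and by the pointwise lower bound $\sqrt{t^2+c^2}\ge |t|$ applied to each square root in the denominator, the fraction is at most $1$. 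Hence $|g(x)-g(x')|\le |a|\,|x-x'|$ for every $x,x'$, so that taking expectations yields $E[|g(X_n)-g(X'_n)|]\le |a|\,E[|X_n-X'_n|]$.

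Combining the two pieces gives
\[
E\bigl[\bigl|g(X_n)\,Z_{n+1}-g(X'_n)\,Z_{n+1}\bigr|\bigr] \le |a|\,E[|Z_0|]\,E[|X_n-X'_n|] = D\,E[|X_n-X'_n|],
\]
which is exactly the contraction condition with $D=|a|E[|Z_0|]$, a constant in $(0,1)$ by hypothesis. No obstacle here is serious: the only subtlety is noticing that the $c^2$ inside the radical makes the quotient in the difference-of-square-roots identity bounded by $1$ uniformly in $x,x'$, which is precisely what gives the clean Lipschitz constant $|a|$ and avoids any dependence on $b$ or $c$.
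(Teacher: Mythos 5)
Your proof is correct and follows the same overall decomposition as the paper's: couple $Z_n=Z'_n$, factor out $E[|Z_n|]$ using independence of $Z_n$ from $(X_{n-1},X'_{n-1})$, and reduce everything to showing that $g(x)=\sqrt{(ax+b)^2+c^2}$ is $|a|$-Lipschitz. The only point of divergence is in how that Lipschitz bound is established: the paper differentiates, showing $|g'(x)|=\bigl|a(ax+b)/\sqrt{(ax+b)^2+c^2}\bigr|\le|a|$ and invoking (implicitly) the mean value theorem, whereas you use the conjugate identity $\sqrt{u}-\sqrt{v}=(u-v)/(\sqrt{u}+\sqrt{v})$ combined with a difference of squares to get the bound directly, without any appeal to differentiability. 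The two arguments are equally short and yield exactly the same constant; yours is marginally more elementary, the paper's more easily generalises if $g$ were replaced by another smooth function whose derivative is easy to bound.
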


\begin{lem}[Coalescing condition]\label{lem:asymarchcoalesc}
	Suppose that the density of $Z_n$ is centred at 0 and is monotonically decreasing around zero. Then, the asymmetric ARCH process satisfies the coalescing condition
	$$\norm{\L(X_{n})-\L(X'_{n})}\leq C E[\abs{X_{n-1}-X'_{n-1}}]$$
	where $n\geq 1$ and $C=\frac{\abs{a}}{c}$.
	See Section \ref{proof:asymarchcoalesc} for a proof.
\end{lem}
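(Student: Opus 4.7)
The plan is to apply Proposition \ref{prop:tvexp} with $Y = (X_{n-1}, X'_{n-1})$, which reduces the claim to a pointwise bound
$$\norm{\L(X_n \mid X_{n-1}=x) - \L(X_n \mid X_{n-1}=x')} \leq \frac{\abs{a}}{c}\abs{x-x'}$$
valid for all $x, x' \in \mathbb{R}$. Conditional on $X_{n-1}=x$, $X_n = \sigma(x) Z_n$ has density $g_x(z) = \sigma(x)^{-1} f_Z(z/\sigma(x))$, where $\sigma(x) = \sqrt{(ax+b)^2 + c^2} \geq c$. A direct calculation gives the Lipschitz estimate $\abs{\sigma(x) - \sigma(x')} \leq \abs{a}\abs{x-x'}$, because $\abs{\sigma'(x)} = \abs{a(ax+b)}/\sigma(x) \leq \abs{a}$. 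It therefore suffices to show, for any $c \leq \sigma_1 \leq \sigma_2$, that the total variation distance between the laws of $\sigma_1 Z_n$ and $\sigma_2 Z_n$ is at most $(\sigma_2 - \sigma_1)/c$.

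Setting $r = \sigma_1/\sigma_2 \in (0,1]$, Proposition \ref{prop:invertibletv} applied to the invertible rescaling $z \mapsto u = z/\sigma_1$ gives
$$\norm{\L(\sigma_1 Z_n) - \L(\sigma_2 Z_n)} = \frac{1}{2}\int \abs{f_Z(u) - r\,f_Z(ru)}\,du.$$
Let $h(u) := f_Z(u) - r\,f_Z(ru)$. Then $h(0) = (1-r) f_Z(0) \geq 0$, $h(u) \to 0$ as $\abs{u} \to \infty$, and by the hypothesis that $f_Z$ is centred at $0$ and non-increasing in $\abs{u}$, $h$ has a single sign change at some $\pm u^* \geq 0$ on each half-line, determined by the crossover identity $f_Z(u^*) = r\,f_Z(ru^*)$. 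Since $\int h\,du = 0$, integrating the signed difference on $[-u^*, u^*]$ and changing variable in the second term yields the clean identity
$$\frac{1}{2}\int \abs{h(u)}\,du \;=\; P\!\left(\abs{Z_n} \in [r u^*,\, u^*]\right).$$

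To control the right-hand side, monotonicity of $f_Z$ gives $P(\abs{Z_n} \in [ru^*, u^*]) \leq 2 f_Z(ru^*)\cdot u^*(1-r)$, and substituting $f_Z(ru^*) = f_Z(u^*)/r$ from the crossover identity reduces this to $2(1-r)/r \cdot u^* f_Z(u^*)$. The ``rectangle'' inequality $u^* f_Z(u^*) \leq \int_0^{u^*} f_Z(t)\,dt \leq 1/2$, valid because $f_Z$ is symmetric and monotone in $\abs{u}$, then gives
$$\norm{\L(\sigma_1 Z_n) - \L(\sigma_2 Z_n)} \leq \frac{1-r}{r} = \frac{\sigma_2 - \sigma_1}{\sigma_1} \leq \frac{\sigma_2 - \sigma_1}{c} \leq \frac{\abs{a}}{c}\abs{x - x'},$$
which combined with Proposition \ref{prop:tvexp} yields the coalescing condition with $C = \abs{a}/c$.

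The main obstacle I anticipate is justifying uniqueness of the sign change of $h$ on each half-line: a density merely non-increasing in $\abs{u}$ (not strictly decreasing) could in principle allow multiple sign reversals of $h$ if $f_Z$ has flat plateaus. A clean workaround is to let $u^*$ denote the outermost crossover point and observe that interior sign reversals cancel in pairs when one integrates the signed difference, so the identity and bound above remain valid; alternatively, approximate $f_Z$ by strictly decreasing densities and pass to the limit using $L^1$-continuity of the total variation distance. A secondary care point is the factor $1/2$ in the rectangle inequality, which uses symmetry of $f_Z$; if the density is only non-increasing in $\abs{u}$ without being symmetric, one splits into $u^*_+$ and $u^*_-$ on the two half-lines and uses $u^*_+ f_Z(u^*_+) + u^*_- f_Z(-u^*_-) \leq \int_{-u^*_-}^{u^*_+} f_Z(t)\,dt \leq 1$, recovering the same final constant.
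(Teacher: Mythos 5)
Your overall strategy (condition on $(X_{n-1},X'_{n-1})$ via Proposition \ref{prop:tvexp}, rescale by Proposition \ref{prop:invertibletv}, use $\abs{\sigma'(x)}\leq\abs{a}$ and $\sigma(x)\geq c$) is the same skeleton as the paper's, but the step where you bound $\frac{1}{2}\int\abs{f_Z(u)-rf_Z(ru)}\,du$ has a genuine gap. The claim that $h(u)=f_Z(u)-r\,f_Z(ru)$ changes sign exactly once on each half-line is not a consequence of the hypothesis that $f_Z$ is decreasing in $\abs{u}$, and strictness does not rescue it: take $f_Z$ approximately equal to a height $A$ on $[-1,1]$, dropping sharply to a small level $\epsilon$ on $1<\abs{u}\leq 10$, then vanishing; with $r=1/2$ the sign pattern of $h$ on $(0,\infty)$ is $+,-,+,-$, and a small perturbation makes $f_Z$ strictly decreasing while keeping this pattern. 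So both of your proposed repairs fail: approximating by strictly decreasing densities does not restore a single crossover, and the ``interior sign reversals cancel in pairs'' argument goes the wrong way --- with interior sign changes one has $\frac{1}{2}\int\abs{h}=\int h^{+}\geq\int_{-u^*}^{u^*}h$, so your identity $\frac{1}{2}\int\abs{h}=P(\abs{Z_n}\in[ru^*,u^*])$ becomes a lower bound on the quantity you need to control, not an upper bound.

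The intermediate conclusion you want is nevertheless true and follows from a one-line likelihood-ratio argument, which is essentially what the paper does: since $\abs{ru}\leq\abs{u}$ gives $f_Z(ru)\geq f_Z(u)$, the ratio of the rescaled densities satisfies $r\,f_Z(ru)/f_Z(u)\geq r$ pointwise, and hence (this is the ratio bound the paper cites as Lemma 6.16 of \cite{markovmixing}) $\norm{\L(\sigma_1 Z_n)-\L(\sigma_2 Z_n)}\leq\sup_u\left(1-\frac{r f_Z(ru)}{f_Z(u)}\right)\leq 1-r=\frac{\sigma_2-\sigma_1}{\sigma_2}\leq\frac{\sigma_2-\sigma_1}{c}\leq\frac{\abs{a}}{c}\abs{x-x'},$ which is even slightly better than your target $(1-r)/r$. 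Substituting this for your crossover computation, the rest of your argument coincides with the paper's proof of Lemma \ref{lem:asymarchcoalesc} (the paper normalizes by the larger scale, obtaining $\abs{\Delta'}\leq\abs{Y_{n-1}-Y'_{n-1}}/c$, then applies the derivative bound of equation \ref{eqn:asymarchcontraction}).
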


\begin{proof}[Proof of Theorem \ref{thm:asymmetricarch}]\label{proof:asymmetricarch}
	Suppose that the assumptions in Theorem \ref{thm:asymmetricarch} are satisfied. Then the asymmetric ARCH model satisfies the contraction condition (Lemma \ref{lem:asymarchcontraction}) and the coalescing condition (Lemma \ref{lem:asymarchcoalesc}). By the One-Shot Coupling Theorem \ref{thm:oneshot}, equation \ref{eqn:asymmetricarch} holds.
\end{proof}

\begin{numexmp}
	Suppose $a=0.5, b=3, c=5$, $Z_n\sim N(0,1)$ and $X_0=0, X'_0=5$. Then by Jensen's inequality, $D=0.5E[\abs{Z_0}] \leq 0.5E[Z_0^2]^{1/2}=0.5$ and so by Theorem \ref{thm:asymmetricarch}
	\begin{equation}\label{eqn:numasymarch}
		\norm{\L(X_n)-\L(X'_n)}\leq \frac{0.5}{5}\times 5\times 0.5^{n-1} = 0.5^n
	\end{equation}
	So, by iteration $n=7$, the total variation will be less than $0.01$.
	
	In comparison, Figure \ref{fig:asymarch} shows how the bound compares to a simulated estimate of the total variation distance for this process.

	\begin{figure}
		\centering
		\includegraphics[width=0.7\linewidth]{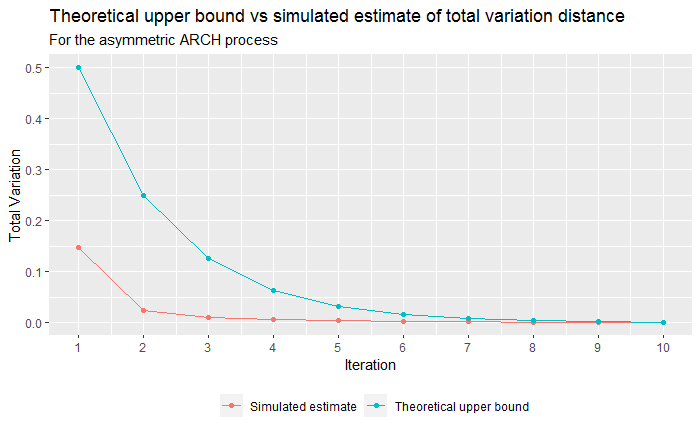}
		\caption{This figure compares a simulated approximation of $\norm{\L(X_n)-\L(X'_n)}$ against the upper bound (Equation \ref{eqn:numasymarch}). $X_n, X'_n$ are two copies of the asymmetric process (i.e., $X_n = \sqrt{(0.5X_{n-1}+3)^2+5^2}Z_n, Z_n\sim N(0,1)$) and $x_0=0$, $x'_0=5$.
		To simulate total variation, 10 million simulations were run with bin length=0.01 for the estimated density function.}
		\label{fig:asymarch}
	\end{figure}
	
\end{numexmp}

\subsection{Application to the GARCH(1,1) model}
\begin{exmp}[GARCH(1,1) process]\label{ex:garch}
	Let $\{X_n\}_{n\geq 1}\in \mathbb{R}$ be a GARCH(1,1) process. Then for i.i.d. $Z_n$
	$$X_n = \sigma_nZ_n$$
	where for $\alpha, \beta, \gamma \in \mathbb{R}$,
	$$\sigma_n^2 = \alpha^2 +\beta^2X^2_{n-1}+\gamma^2 \sigma^2_{n-1}$$
	See Section 7.3.6 of \cite{timeseries} for more details on this model.
\end{exmp}

The following theorem provides an upper bound in total variation distance between two copies of the GARCH(1,1) process.

\begin{thm}\label{thm:garch}
	Let $\{X_n\}_{n\geq 1}\in \mathbb{R}$ and $\{X'_n\}_{n\geq 1}\in \mathbb{R}$ be two copies of the GARCH process defined in Example \ref{ex:garch}. Suppose that the density of $Z_n$ is centered at 0 and is monotonically decreasing around zero. Then, the process is geometrically ergodic if $\beta^2E[\abs{Z^2_0}]+\gamma^2<1$. Further suppose that $x_0, x'_0, \sigma^2_0$, and $\sigma^{'2}_0$ are known. Then an upper bound on the total variation distance between the two processes is
	\begin{equation}
		\norm{\L(X_{n})-\L(X'_{n})} \leq \frac{D^{n-1}}{\alpha} \sqrt{\beta^2 \abs{x_0^2 - x_0^{'2}} + \gamma^2 \abs{\sigma^2_0 - \sigma^{'2}_0}}
	\end{equation}
	Where $D=\sqrt{\beta^2E[Z^2_0]+\gamma^2}$
\end{thm}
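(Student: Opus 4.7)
The plan is to emulate the proof of the One-Shot Coupling Theorem directly on the GARCH chain, using the variance process $\sigma_n^2$ for the contraction phase and the scale-family structure of $X_n=\sigma_n Z_n$ for the coalescing step. The key observation that makes this work is that when the two copies share the driving noise $Z_n$, we have $X_n^2-X_n^{'2}=Z_n^2(\sigma_n^2-\sigma_n^{'2})$, so the squared-variance difference satisfies the clean recursion
\begin{equation*}
\sigma_{n+1}^2-\sigma_{n+1}^{'2}=\beta^2(X_n^2-X_n^{'2})+\gamma^2(\sigma_n^2-\sigma_n^{'2})=(\beta^2 Z_n^2+\gamma^2)(\sigma_n^2-\sigma_n^{'2}).
\end{equation*}
Taking absolute values and using $Z_n\indep \sigma_n^2$ gives $E[\abs{\sigma_{n+1}^2-\sigma_{n+1}^{'2}}]=D^2\,E[\abs{\sigma_n^2-\sigma_n^{'2}}]$ with $D^2=\beta^2E[Z_0^2]+\gamma^2$. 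Iterating back to $n=1$, where $\sigma_1^2$ and $\sigma_1^{'2}$ are deterministic functions of the initial data, yields $E[\abs{\sigma_n^2-\sigma_n^{'2}}]=D^{2(n-1)}\abs{\sigma_1^2-\sigma_1^{'2}}$, and the triangle inequality gives $\abs{\sigma_1^2-\sigma_1^{'2}}\le \beta^2\abs{x_0^2-x_0^{'2}}+\gamma^2\abs{\sigma_0^2-\sigma_0^{'2}}$. This is the contraction phase. Geometric ergodicity requires $D<1$, which is exactly the hypothesis $\beta^2E[Z_0^2]+\gamma^2<1$.

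For the coalescing step, at iteration $n$ we uncouple the noise and bound the terminal TV by conditioning on the variances. By Proposition \ref{prop:tvexp},
\begin{equation*}
\norm{\L(X_n)-\L(X'_n)}\le E\!\left[\norm{\L(X_n\mid\sigma_n)-\L(X'_n\mid\sigma'_n)}\right]=E\!\left[\norm{\L(\sigma_n Z)-\L(\sigma'_n Z)}\right].
\end{equation*}
I would then establish the scale-family inequality
\begin{equation*}
\norm{\L(\sigma Z)-\L(\sigma' Z)}\le \frac{\abs{\sigma-\sigma'}}{\min(\sigma,\sigma')}
\end{equation*}
for densities $f_Z$ that are symmetric and monotonically decreasing in $\abs{z}$. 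Assuming WLOG $\sigma<\sigma'$, the densities $\sigma^{-1}f_Z(x/\sigma)$ and $\sigma'^{-1}f_Z(x/\sigma')$ cross at a symmetric pair of points $\pm x^{*}$, so that $\norm{\L(\sigma Z)-\L(\sigma' Z)}=F_{\abs{Z}}(x^{*}/\sigma)-F_{\abs{Z}}(x^{*}/\sigma')$. Applying the Mean Value Theorem with $f_{\abs{Z}}=2f_Z$ decreasing on $[0,\infty)$ and setting $a=x^{*}/\sigma'$, $r=\sigma'/\sigma$, gives an upper bound of $2af_Z(a)(r-1)\le (r-1)$, since $af_Z(a)\le\int_0^a f_Z(t)\,dt\le 1/2$ by symmetry. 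Because $\sigma_n,\sigma'_n\ge\alpha$, this specialises to $\norm{\L(X_n)-\L(X'_n)}\le E[\abs{\sigma_n-\sigma'_n}]/\alpha$.

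To combine the two phases, I would use $\abs{\sigma_n-\sigma'_n}\le\sqrt{\abs{\sigma_n^2-\sigma_n^{'2}}}$, which follows from the identity $(\sqrt{a}-\sqrt{b})^2\le\abs{a-b}$ for $a,b\ge0$, and then apply Jensen's inequality to the concave function $\sqrt{\cdot}$:
\begin{equation*}
E[\abs{\sigma_n-\sigma'_n}]\le E\!\left[\sqrt{\abs{\sigma_n^2-\sigma_n^{'2}}}\right]\le\sqrt{E[\abs{\sigma_n^2-\sigma_n^{'2}}]}=D^{n-1}\sqrt{\abs{\sigma_1^2-\sigma_1^{'2}}}.
\end{equation*}
Plugging this into the coalescing bound and using the earlier estimate of $\abs{\sigma_1^2-\sigma_1^{'2}}$ gives the claimed inequality. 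The main obstacle is the coalescing step: verifying the scale-family TV bound requires the monotone-around-zero symmetric hypothesis on $f_Z$ to guarantee that $f$ and $g$ cross at exactly a symmetric pair of points and that the MVT estimate is applicable; the contraction side is routine once one decides to work with $\sigma_n^2$ rather than $\sigma_n$, and the square-root/Jensen trick is what converts the $D^{2(n-1)}$ rate on $\sigma_n^2$ into the advertised $D^{n-1}$ rate on $X_n$.
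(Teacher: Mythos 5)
Your contraction phase is correct and is actually a cleaner route than the paper's: the exact recursion $\sigma_{n+1}^2-\sigma_{n+1}^{'2}=(\beta^2Z_n^2+\gamma^2)(\sigma_n^2-\sigma_n^{'2})$ under shared noise, together with $\abs{\sqrt{a}-\sqrt{b}}\le\sqrt{\abs{a-b}}$ and Jensen, gives $E[\abs{\sigma_n-\sigma'_n}]\le D^{n-1}\sqrt{\beta^2\abs{x_0^2-x_0^{'2}}+\gamma^2\abs{\sigma_0^2-\sigma_0^{'2}}}$ in one stroke, whereas the paper works with $E[\abs{X_n-X'_n}]$, cancels factors of $E[\abs{Z}]$, and needs a separate initial-condition lemma with $n_0=1$; the exponent bookkeeping in your version matches the stated bound.

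The gap is in the coalescing step, exactly where you flag the main obstacle. You justify $\norm{\L(\sigma Z)-\L(\sigma'Z)}\le\abs{\sigma-\sigma'}/\min(\sigma,\sigma')$ by asserting that the densities $\sigma^{-1}f_Z(x/\sigma)$ and $\sigma'^{-1}f_Z(x/\sigma')$ cross at exactly one symmetric pair $\pm x^*$, so that the total variation equals $F_{\abs{Z}}(x^*/\sigma)-F_{\abs{Z}}(x^*/\sigma')$, and then apply the mean value theorem. The hypothesis ``symmetric and monotonically decreasing in $\abs{x}$'' does not give single crossing. For instance, let $f_Z$ be (a slight smoothing of) $c_1$ on $\abs{x}\le 1$ and $c_2$ on $1<\abs{x}\le 2$ with $c_1=0.4$, $c_2=0.1$, and take $\sigma=1$, $\sigma'=1.5$: on $(0,1)$ the $\sigma$-density dominates, on $(1,1.5)$ the $\sigma'$-density dominates (since $c_2<c_1/1.5$), on $(1.5,2)$ the $\sigma$-density dominates again, and beyond $2$ the $\sigma'$-density dominates, so the set where one density exceeds the other is disconnected and your identity for the total variation fails. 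The inequality you need is nevertheless true, but it should be proved the way the paper does in Lemma \ref{lem:garchcoalescing}: reduce by Proposition \ref{prop:invertibletv} to comparing $\L(Z)$ with $\L(cZ)$, $c=\min(\sigma,\sigma')/\max(\sigma,\sigma')$, and use the density-ratio bound $\norm{\L(Z)-\L(cZ)}\le\sup_x\bigl(1-\tfrac{f_Z(x)}{c^{-1}f_Z(x/c)}\bigr)\le 1-c$, which follows since $f_Z(x/c)\le f_Z(x)$ by the monotone-around-zero assumption; this even gives the larger scale in the denominator, and since $\sigma_n,\sigma'_n\ge\alpha$ it yields $\norm{\L(X_n)-\L(X'_n)}\le E[\abs{\sigma_n-\sigma'_n}]/\alpha$. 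With that substitution for your crossing-point argument, the rest of your proof goes through and delivers the theorem.
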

Before proving Theorem \ref{thm:garch}, we present some lemmas.

\begin{lem}[Contraction condition]\label{lem:garchcontraction}
	The GARCH(1,1) process satisfies the contraction condition if $D=\sqrt{\beta^2E[Z_0^2]+ \gamma^2}<1$
	See Section \ref{proof:garchcontraction} for a proof.
\end{lem}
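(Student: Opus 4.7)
The plan is to apply the common random number coupling ($Z_n = Z'_n$ for every $n$) and exploit the purely multiplicative recursion that GARCH inherits under this coupling. For $n \geq 1$ we have $X_n = \sigma_n Z_n$ and $X'_n = \sigma'_n Z_n$, so $X_n^2 - X_n^{'2} = Z_n^2(\sigma_n^2 - \sigma_n^{'2})$. Substituting into $\sigma_{n+1}^2 = \alpha^2 + \beta^2 X_n^2 + \gamma^2 \sigma_n^2$ and its primed counterpart gives
\begin{equation*}
\sigma_{n+1}^2 - \sigma_{n+1}^{'2} = (\beta^2 Z_n^2 + \gamma^2)(\sigma_n^2 - \sigma_n^{'2}),
\end{equation*}
and since $\beta^2 Z_n^2 + \gamma^2 \geq 0$ the same identity holds in absolute value.

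The subtle point is that this recursion contracts $E[\abs{\sigma_{n+1}^2 - \sigma_{n+1}^{'2}}]$ at rate $D^2 = \beta^2 E[Z_0^2] + \gamma^2$, not at the claimed rate $D$. To recover the rate $D$, I would set $d_n := \sqrt{\abs{\sigma_n^2 - \sigma_n^{'2}}}$, so that the identity above yields $d_{n+1} = \sqrt{\beta^2 Z_n^2 + \gamma^2}\, d_n$. Taking expectations, using independence of $Z_n$ from $\sigma_n^2 - \sigma_n^{'2}$ (which depends only on $Z_1,\ldots,Z_{n-1}$ and the initial data), and applying Jensen's inequality to the concave map $u \mapsto \sqrt{u}$,
\begin{equation*}
E[d_{n+1}] = E\bigl[\sqrt{\beta^2 Z_n^2 + \gamma^2}\bigr]\, E[d_n] \leq \sqrt{\beta^2 E[Z_0^2] + \gamma^2}\, E[d_n] = D\, E[d_n].
\end{equation*}
This is the contraction at rate $D$, valid for $n \geq n_0 = 1$; the square root appearing in the final bound of Theorem \ref{thm:garch} is a direct consequence of tracking $d_n$ rather than $\abs{\sigma_n^2 - \sigma_n^{'2}}$.

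The main obstacle is choosing the right distance to track. The Euclidean distance $\abs{\sigma_n - \sigma'_n}$ would bring in an awkward $\sigma_n + \sigma'_n$ factor from the factorization $\sigma_n^2 - \sigma_n^{'2} = (\sigma_n - \sigma'_n)(\sigma_n + \sigma'_n)$, while $\abs{\sigma_n^2 - \sigma_n^{'2}}$ itself only contracts at rate $D^2$. Taking $d_n = \sqrt{\abs{\sigma_n^2 - \sigma_n^{'2}}}$ is precisely what converts the variance-type multiplicative recursion into a standard-deviation-type contraction at the desired rate $D$. The shift to $n_0 = 1$ is forced by the fact that the identity $X_n = \sigma_n Z_n$, which drives the multiplicative recursion, only begins to hold at $n = 1$; at $n = 0$ the values $X_0$ and $\sigma_0$ are independent initial inputs rather than linked through $Z_0$.
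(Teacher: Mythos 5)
Your algebra for the coupled recursion is correct, but you have proved a contraction for the wrong quantity. The contraction condition in the One-Shot Coupling Theorem \ref{thm:oneshot} --- and the one that must mesh with the coalescing condition of Lemma \ref{lem:garchcoalescing}, which is stated as $\norm{\L(X_n)-\L(X'_n)} \leq \frac{D}{\alpha E[\abs{Z_0}]} E[\abs{X_{n-1}-X'_{n-1}}]$ --- is a bound on $E[\abs{X_n - X'_n}]$, the distance between the chain's actual states, in terms of $E[\abs{X_{n-1}-X'_{n-1}}]$. What you contract is $E[d_n]$ with $d_n = \sqrt{\abs{\sigma_n^2 - \sigma_n^{'2}}}$, a metric on the volatility component only. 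Neither quantity controls the other in the direction you need: $\abs{\sigma_n - \sigma'_n}\leq d_n$ does hold, but $d_n = \sqrt{(\sigma_n+\sigma'_n)\abs{\sigma_n-\sigma'_n}}$ cannot be bounded by a constant multiple of $\abs{\sigma_n-\sigma'_n}$ (or of $\abs{X_n-X'_n}$) because $\sigma_n+\sigma'_n$ is unbounded. Consequently your inequality $E[d_{n+1}]\leq D\,E[d_n]$ cannot be chained with the coalescing condition or with the initial-condition Lemma \ref{lem:garchinitial}, both of which are phrased in terms of $E[\abs{X_m-X'_m}]$; to use your metric you would have to re-derive those lemmas in terms of $d_n$, which you have not done.

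The gap is avoidable, and the fix is essentially the route the paper takes: your worry that tracking $\abs{\sigma_n-\sigma'_n}$ brings in an awkward $\sigma_n+\sigma'_n$ factor is unfounded. Under the coupling $Z_{n-1}=Z'_{n-1}$ one has $\sigma_n=\sqrt{\alpha^2+(\beta^2 Z_{n-1}^2+\gamma^2)\sigma_{n-1}^2}$, and the map $\sigma\mapsto\sqrt{\alpha^2+c\,\sigma^2}$ has derivative bounded by $\sqrt{c}$, so $\abs{\sigma_n-\sigma'_n}\leq \sqrt{\beta^2 Z_{n-1}^2+\gamma^2}\,\abs{\sigma_{n-1}-\sigma'_{n-1}}$ with no extra factor. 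Combining this with the identity $E[\abs{X_m-X'_m}]=E[\abs{\sigma_m-\sigma'_m}]\,E[\abs{Z_m}]$ (valid for $m\geq 1$, which is the real source of the paper's $n\geq 2$ restriction and of $n_0=1$, rather than your stated reason) and Jensen's inequality applied to $E[\sqrt{\beta^2 Z_{n-1}^2+\gamma^2}]$ yields $E[\abs{X_n-X'_n}]\leq \sqrt{\beta^2 E[Z_0^2]+\gamma^2}\,E[\abs{X_{n-1}-X'_{n-1}}]$, i.e., the contraction condition at rate $D$ in the required metric. Note also that the square root in the bound of Theorem \ref{thm:garch} is not a consequence of tracking $d_n$: it enters exactly once, at the initial step, through $\abs{\sqrt{x}-\sqrt{y}}\leq\sqrt{\abs{x-y}}$ in Lemma \ref{lem:garchinitial}.
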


\begin{lem}[Coalescing condition]\label{lem:garchcoalescing}
	Suppose that the density of $Z_n$ is centred at 0 and is monotonically decreasing around zero. Then the GARCH(1,1) process satisfies the coalescing condition, $$\norm{\L(X_{n})-\L(X'_{n})} \leq \frac{D}{\alpha E[\abs{Z_{0}}]} E[\abs{X_{n-1}-X'_{n-1}}]$$
	For $n\geq 2$,  $D=\sqrt{\beta^2 E[Z^2_0]+\gamma^2}$.
	See Section \ref{proof:garchcoalescing} for a proof.
\end{lem}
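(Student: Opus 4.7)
My plan is to reduce the coalescing bound to a clean scaling sub-lemma and then use the $n\ge 2$ hypothesis to eliminate $\sigma_{n-1}$ from the picture. The key preliminary inequality I would establish is: for $Z$ with density symmetric about $0$ and monotonically decreasing in $\abs{z}$,
\[
\norm{\L(\sigma Z)-\L(\sigma' Z)}\;\le\;\frac{\abs{\sigma-\sigma'}}{\max(\sigma,\sigma')},\qquad \sigma,\sigma'>0.
\]
Assuming $\sigma<\sigma'$, the monotonicity gives $f_Z(y/\sigma')\ge f_Z(y/\sigma)$ for every $y$, hence $\tfrac{1}{\sigma'}f_Z(y/\sigma')\ge \tfrac{\sigma}{\sigma'}\cdot \tfrac{1}{\sigma}f_Z(y/\sigma)$; integrating yields $\int \min(f_{\sigma Z},f_{\sigma' Z})\,dy\ge \sigma/\sigma'$, and maximal coupling gives the claimed bound.

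Next, I would couple the two chains via common random numbers at step $n-1$, i.e.\ take $Z_{n-1}=Z'_{n-1}$ (preserving marginals). For $n\ge 2$ this is legal and the a.s.\ identity $\sigma_{n-1}=X_{n-1}/Z_{n-1}$ lets me substitute $\sigma_{n-1}^2=X_{n-1}^2/Z_{n-1}^2$ into the GARCH recursion to obtain
\[
\sigma_n^2=\alpha^2+\bigl(\beta^2+\gamma^2/Z_{n-1}^2\bigr)X_{n-1}^2,
\]
and similarly for $\sigma_n^{\prime 2}$ with $X'_{n-1}$ in place of $X_{n-1}$. Differentiating in $X_{n-1}$ gives $\abs{\partial\sigma_n/\partial X_{n-1}}\le\sqrt{\beta^2+\gamma^2/Z_{n-1}^2}$, so by the mean value theorem $\abs{\sigma_n-\sigma'_n}\le\sqrt{\beta^2+\gamma^2/Z_{n-1}^2}\,\abs{X_{n-1}-X'_{n-1}}$. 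Conditioning on $(Z_{n-1},X_{n-1},X'_{n-1})$ and applying the scaling sub-lemma with the bound $\sigma_n,\sigma'_n\ge\alpha$ yields the conditional total variation estimate $\sqrt{\beta^2+\gamma^2/Z_{n-1}^2}\,\abs{X_{n-1}-X'_{n-1}}/\alpha$.

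Finally, I would integrate via Proposition \ref{prop:tvexp} and use the common-random-number identity $\abs{X_{n-1}-X'_{n-1}}=\abs{Z_{n-1}}\,\abs{\sigma_{n-1}-\sigma'_{n-1}}$, after which the random factor collapses to $\sqrt{\beta^2 Z_{n-1}^2+\gamma^2}\,\abs{\sigma_{n-1}-\sigma'_{n-1}}/\alpha$. Because $Z_{n-1}$ is independent of $(\sigma_{n-1},\sigma'_{n-1})$ under the coupling, the expectation factors; Jensen's inequality on the concave square root bounds $E[\sqrt{\beta^2 Z_0^2+\gamma^2}]$ by $D$, and the factorization $E[\abs{X_{n-1}-X'_{n-1}}]=E[\abs{Z_0}]\,E[\abs{\sigma_{n-1}-\sigma'_{n-1}}]$ supplies the remaining factor, giving exactly $C=D/(\alpha E[\abs{Z_0}])$.

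The delicate ingredient is the scaling sub-lemma. A naive route through the log-transform (using that scaling acts as a shift on $\log\abs{Z}$) would bound the total variation by a shift-TV of the density of $\log\abs{Z}$ and introduce an extra $\sup_v f_{\log\abs{Z}}(v)$ factor that would ruin the stated form of the constant; the pointwise density-ratio argument keyed to the monotonicity assumption is what preserves the sharp form. Everything downstream is a routine application of the mean value theorem, conditioning, and Jensen's inequality.
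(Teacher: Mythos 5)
Your proof is correct and follows essentially the same route as the paper's: the same scaling bound $\norm{\L(\sigma Z)-\L(\sigma' Z)}\leq \abs{\sigma-\sigma'}/\max(\sigma,\sigma')$ (which the paper obtains via Proposition \ref{prop:invertibletv} together with Lemma 6.16 of \cite{markovmixing}, and which you prove directly from the overlap integral and maximal coupling), followed by the bound $\sigma_n,\sigma'_n\geq \alpha$, a mean value theorem estimate on $\abs{\sigma_n-\sigma'_n}$, the factorization $E[\abs{X_{n-1}-X'_{n-1}}]=E[\abs{Z_0}]\,E[\abs{\sigma_{n-1}-\sigma'_{n-1}}]$ under the common-random-number coupling, and Jensen's inequality to arrive at $C=D/(\alpha E[\abs{Z_0}])$. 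The only cosmetic differences are your self-contained proof of the scaling sub-lemma and your choice to apply the mean value theorem in $X_{n-1}$ after substituting $\sigma_{n-1}=X_{n-1}/Z_{n-1}$ (valid since $n\geq 2$), whereas the paper differentiates directly in $\sigma_{n-1}$ using $\sigma_n^2=\alpha^2+(\beta^2 Z_{n-1}^2+\gamma^2)\sigma_{n-1}^2$; the two computations are equivalent.
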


\begin{lem}[Initial condition]\label{lem:garchinitial}
	Suppose that we know $\sigma_0^2, \sigma_0^{'2}$ and $X_0, X'_0$, then
	$$E[\abs{X_1 - X'_1}]\leq \sqrt{\beta^2 \abs{X_0^2 - X_0^{'2}} + \gamma^2 \abs{\sigma^2_0 - \sigma^{'2}_0}}E[\abs{Z_0}]$$
	See Section \ref{proof:garchinitial} for a proof.
\end{lem}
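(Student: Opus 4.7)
The plan is to exploit the common-random-number (contraction) coupling at the single step from time $0$ to time $1$, so that the two chains share the innovation $Z_1$, and then reduce the problem to bounding $\abs{\sigma_1-\sigma'_1}$. Since $\sigma_0^2, \sigma_0^{'2}, X_0, X'_0$ are treated as known constants, both $\sigma_1$ and $\sigma'_1$ are deterministic, and under the coupling
\[
X_1-X'_1 = (\sigma_1-\sigma'_1)Z_1,
\]
so that $E[\abs{X_1-X'_1}] = \abs{\sigma_1-\sigma'_1}\, E[\abs{Z_1}] = \abs{\sigma_1-\sigma'_1}\, E[\abs{Z_0}]$, using that $\{Z_n\}$ is i.i.d.

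First I would write out
\[
\sigma_1^2-\sigma_1^{'2} = \beta^2(X_0^2-X_0^{'2}) + \gamma^2(\sigma_0^2-\sigma_0^{'2}),
\]
so that the triangle inequality gives $\abs{\sigma_1^2-\sigma_1^{'2}} \leq \beta^2\abs{X_0^2-X_0^{'2}} + \gamma^2\abs{\sigma_0^2-\sigma_0^{'2}}$. Next I would invoke the elementary inequality $\abs{a-b}\leq \sqrt{\abs{a^2-b^2}}$ valid for all $a,b\geq 0$ (which follows from $(a-b)^2 \leq (a-b)(a+b) = a^2-b^2$ when $a\geq b$, and symmetrically when $b\geq a$); applying it with $a=\sigma_1,\ b=\sigma'_1$ yields
\[
\abs{\sigma_1-\sigma'_1} \leq \sqrt{\abs{\sigma_1^2-\sigma_1^{'2}}} \leq \sqrt{\beta^2\abs{X_0^2-X_0^{'2}} + \gamma^2\abs{\sigma_0^2-\sigma_0^{'2}}}.
\]

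Multiplying by $E[\abs{Z_0}]$ then produces exactly the claimed bound. There is no real obstacle here; the only non-obvious move is the square-root inequality on $\abs{\sigma_1-\sigma'_1}$, which is the reason a square root appears in the statement of the lemma. Everything else is direct algebra plus the independence of $Z_1$ from the (deterministic) conditional variances at time $1$.
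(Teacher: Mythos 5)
Your proof is correct and follows essentially the same route as the paper's: common innovation $Z_1=Z_1'$ to reduce to $\abs{\sigma_1-\sigma_1'}\,E[\abs{Z_0}]$, cancellation of $\alpha^2$, triangle inequality, and the square-root inequality $\abs{a-b}\le\sqrt{\abs{a^2-b^2}}$ (the paper writes it as $\abs{\sqrt{x}-\sqrt{y}}\le\sqrt{\abs{x-y}}$, which is the same fact). The only difference is the order in which the triangle inequality and the square-root inequality are applied, which is immaterial.
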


\begin{proof}[Proof of Theorem \ref{thm:garch}]\label{proof:corgarch}
	Suppose that the assumptions in Theorem \ref{thm:garch} are satisfied and let $n\geq 2$. Then the GARCH(1,1) model satisfies the contraction condition (Lemma \ref{lem:garchcontraction}) and the coalescing condition (Lemma \ref{lem:garchcoalescing}). Thus by the One-Shot Coupling Theorem \ref{thm:oneshot},
	$$\norm{\L(X_{n})-\L(X'_{n})} \leq \frac{D}{\alpha E[\abs{Z_{0}}]} D^{n-2}E[\abs{X_{1}-X'_{1}}]$$
	Further, by Lemma \ref{lem:garchinitial} when the initial values $\sigma_0^2, \sigma_0^{'2},x_0, x'_0$ are known,
	 $$\norm{\L(X_{n})-\L(X'_{n})} \leq \frac{D^{n-1}}{\alpha} \sqrt{\beta^2 \abs{X_0^2 - X_0^{'2}} + \gamma^2 \abs{\sigma^2_0 - \sigma^{'2}_0}}$$
	 where $D=\sqrt{\beta^2 E[Z^2_0]+\gamma^2}$
\end{proof}

\begin{numexmp}
	In Example 10.3.2 of \cite{intrototimeseries} a GARCH(1,1) model is applied for the daily returns of the Dow Jones Industrial Index between between July 1997 and April 1999. Let 
	$$X_n=\sigma_n Z_n = \text{excess daily return of the Dow Jones Industrial Index at time } n$$ The following is the fitted GARCH volatility estimates when $Z_n\sim N(0,1)$,
	$$\sigma_n^2 = 0.13000 + 0.1266 X^2_{n-1} +0.7922 \sigma^2_{n-1}$$
	Suppose that we want to find the total variation of the fitted process with varying initial values representing two market states, $X_0=0.1, \sigma_0=0.01$ and $X'_0=-0.1, \sigma'_0=0.1$ Then by Theorem \ref{thm:garch}, 
	\begin{equation}\label{eqn:numgarch}
		\norm{\L(X_{n})-\L(X'_{n})}\leq \sqrt{\frac{0.7922\abs{0.01^2-0.1^2}}{0.13}}D^{n-1} \approx 0.2456 D^{n-1} 
	\end{equation}
	Where $D=\sqrt{0.1266 + 0.7922}= \sqrt{0.9188}$
	
	By iteration 77 the total variation distance between the two processes is less than 0.01. In comparison, Figure \ref{fig:asymarch} shows how the bound compares to a simulated estimate of the total variation distance for this process. The actual total variation distance appears to be much smaller than the upper bound. 
	
	\begin{figure}
		\centering
		\includegraphics[width=0.7\linewidth]{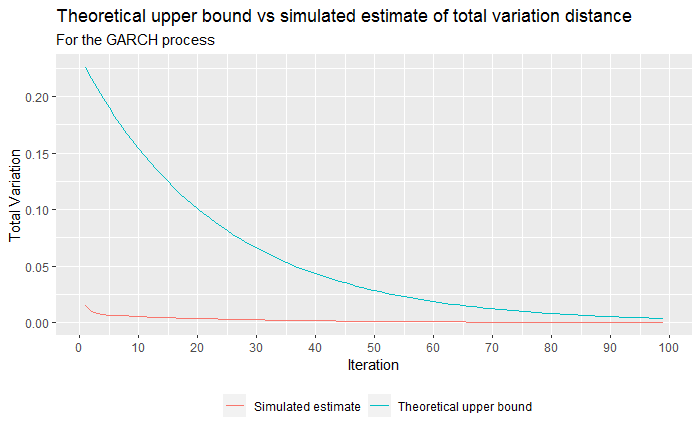}
		\caption{This figure compares a simulated approximation of $\norm{\L(X_n)-\L(X'_n)}$ against the upper bound (Equation \ref{eqn:numgarch}). $X_n, X'_n$ are two copies of the asymmetric process (i.e., $X_n = \sigma_n Z_n$ and $ \sigma_n^2 = 0.13000 + 0.1266 X^2_{n-1} +0.7922 \sigma^2_{n-1}$ and $Z_n\sim N(0,1)$) and $X_0=0.1, \sigma_0=0.01$ and $X'_0=-0.1, \sigma'_0=0.1$.
			To simulate total variation, 1 million simulations were run with bin length=0.01 for the estimated density function.}
		\label{fig:asymarch}
	\end{figure}
\end{numexmp}

\section{Acknowledgement}
We thank the referees for their many excellent comments and suggestions.

\printbibliography

\section{Web Appendix}\label{sec:appendix}

\subsection{Propositions related to the properties of total variation distance}

\begin{proof}[Proof of Proposition \ref{prop:invertibletv}]\label{proof:invertibletv}
	Let $\mathcal{A}$ be the sigma field of $\mathcal{X}$ and $\borel$ be the sigma field of $\mathcal{Y}$. 
	
	First note that $f^{-1}(\mathcal{B})=\{f^{-1}(B): B\in \mathcal{B}\}=\mathcal{A}$:
	\begin{itemize}
		\item \textbf{$f^{-1}(\borel)\subset \mathcal{A}$:} For $B\in \borel$, $f^{-1}(B)\subset \mathcal{A}$ by measurability.
		\item \textbf{$\mathcal{A} \subset f^{-1}(\borel)$:} Let $A\in \mathcal{A}$. Then $f(A)\in \borel$ and $f^{-1}(f(A))\in f^{-1}(\borel)$ by definition. By invertibility, $f^{-1}(f(A))=A$ and so $A \in f^{-1}(\borel)$.
	\end{itemize}  
	
	The equality in equation \ref{eqn:invertibletv} can then be proven as follows,
	\begin{align*}
		\norm{\L(f(X))-\L(f(X'))} &=\sup_{B\in f(\borel)}\abs{P(f(X)\in B)-P(f(X')\in B)}\\
		&= \sup_{B\in f(\borel)}\abs{P(X\in f^{-1}(B))-P(X'\in f^{-1}(B))}\\
		&= \sup_{A\in \mathcal{A}}\abs{P(X\in A)-P(X'\in A)} &\text{Since $f^{-1}(\borel)=\mathcal{A}$}\\
		&=\norm{\L(X)-\L(X')}
	\end{align*}
\end{proof}

\begin{proof}[Proof of Proposition \ref{prop:tvexp}]\label{proof:tvexp}
	\begin{align*}
		\norm{\L(X)-\L(X')} &= \sup_{A\in \borel} \abs{P(X\in A)-P(X'\in A)}\\
		&= \sup_{A\in \borel}\abs{\int_{\mathcal{Y}} P(X\in A\mid y)-P(X'\in A\mid y)\mu(dy)}\\
		&\leq \sup_{A\in \borel}\int_{\mathcal{Y}} \abs{P(X\in A\mid y)-P(X'\in A\mid y)}\mu(dy) &\text{ by Jensen's inequality}\\
		&\leq \int_{\mathcal{Y}} \sup_{A\in \borel} \abs{P(X\in A\mid y)-P(X'\in A\mid y)}\mu(dy) \\
		&\leq E\left[ \norm{\L(X\mid Y)-\L(X'\mid Y)} \right]
	\end{align*}
\end{proof}

\begin{proof}[Proof of Proposition \ref{prop:indcoord}]\label{proof:indcoord}
	To prove this we use the concept of maximal coupling over the coordinates. By maximal coupling, for $i\in \{1,\ldots, d\}$ there exists random variables $X_{i,n}^M, X_{i,n}^{'M}$ such that $X_{i,n}\overset{d}{=}X_{i,n}^M$ and $X'_{i,n}\overset{d}{=}X_{i,n}^{'M}$ and 
	\begin{equation*}
		\norm{\L(X_{i,n})-\L(X'_{i,n})}=P(X_{i,n}^M\neq X_{i,n}^{'M})
	\end{equation*}
	(see Proposition 3g of \cite{robrose} or Section 2 of \cite{maximalcoupling}).
	
	Further, there exists a unique product measure such that for any $A_1, \ldots A_d \in \cal{B}$, $P(\cap_{i=1}^d [X_{i,n}^M\in A_i])=\prod_{i=1}^d P(X_{i,n}^M\in A_i)$ (Theorem 18.2 of \cite{billingsley}). For the unique product measure, the following equality holds,
	\begin{align*}
		P(\cap_{i=1}^d X_{i,n}^M \in A_i)=\prod_{i=1}^d P(X_{i,n}^M\in A_i)
		=\prod_{i=1}^d P(X_{i,n}\in A_i)
		=P(\cap_{i=1}^d X_{i,n} \in A_i)
	\end{align*}
	And so by uniqueness, for $A\in \cal{B}^{\text{d}}$, $P(X_n^M\in A)=P(X_n\in A)$. By definition this means that $\vecx_{n}\overset{d}{=} \vecx_n^M$, which implies that $(\vecx_n^M, \vecx_n^{'M})\in \mathcal{C}(\vecx_n,\vecx'_n)$, the set of all couplings of $\vecx_n,\vecx'_n$.
	
	We now use $\vecx_n^M, \vecx_n^{'M}$ to prove equation \ref{eqn:indcoord}.
	\begin{align*}
		\norm{\L(\vec{X}_n)-\L(\vec{X}'_n)}&= \inf_{\vec{Y},\vec{Y}'\in \mathcal{C}(\vecx_n,\vecx'_n)}P(\vec{Y}\neq \vec{Y}') & \text{by equation 2.4 of \cite{maximalcoupling}}\\
		&\leq P(\vecx_n^M\neq \vecx_n^{'M})\\
		&= P(\cup_{i=1}^d [X_{i,n}^M\neq X_{i,n}^{'M}])\\
		&\leq \sum_{i=1}^d P(X_{i,n}^M\neq X_{i,n}^{'M}) &\text{by subadditivity}\\
		&\leq d A r^n
	\end{align*}
\end{proof}

\subsection{Lemmas related to the Sideways Theorem}
The following are lemmas and corresponding proofs and corollaries related to the Sideways Theorem (\ref{thm:oneshotlipschitz}).

\subsubsection{Lemmas providing an upper bound on the integral difference between a function and a corresponding shift}
 The following lemmas are used in the proof of Lemma \ref{lem:coalescingcondsideways}.
\begin{lem}\label{lem:shiftedfuncinvert}
	For any invertible, continuous function $f:\mathbb{R}\to \mathbb{R}$ where the codomain is $f(\mathbb{R})=(a,b)$ and $\Delta>0$,
	$$\int_{\mathbb{R}}\abs{f(x+\Delta)-f(x)}dx=(b-a)\Delta$$
	\begin{proof}
		Since $f$ is invertible and continuous, it is strictly monotone (Lemma 3.8 if \cite{analbyhist}). Assume that $f$ is strictly increasing. The integral can be written as follows,
		\begin{align*}
			\int_{\mathbb{R}}\abs{f(x+\Delta)-f(x)}dx&= \int_{\mathbb{R}}f(x+\Delta)-f(x)dx\\
			&=  \int_{\mathbb{R}}\int_a^bI_{f(x+\Delta)<y<f(x)}dy dx \\
			&=\int_{\mathbb{R}}\int_a^bI_{f^{-1}(y)-\Delta<x<f^{-1}(y)}dy dx \\
			&=\int_a^b \int_{\mathbb{R}}I_{f^{-1}(y)-\Delta<x<f^{-1}(y)}dx dy &\text{by Fubini's Theorem} \\
			&=\int_a^b \Delta dy \\
			&= (b-a)\Delta
		\end{align*} 
		
		If $f$ is strictly decreasing apply the transform $h(x)=a+b-f(x)$. The function $h$ is a strictly increasing invertible function with codomain $(a,b)$ and so using the previous result for increasing functions,
		\begin{align*}
			\int_{\mathbb{R}}\abs{f(x+\Delta)-f(x)}dx= \int_{\mathbb{R}}\abs{h(x+\Delta)-h(x)}dx = (b-a)\Delta
		\end{align*} 
	\end{proof}
\end{lem}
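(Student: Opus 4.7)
The plan is to reduce the problem to the monotone case and then evaluate the integral via a Fubini / layer-cake swap. Since $f:\mathbb{R}\to(a,b)$ is continuous and invertible (in particular injective) on a connected domain, the intermediate value theorem forces $f$ to be strictly monotone; without loss of generality I would treat $f$ strictly increasing and then handle the decreasing case by composing with the order-reversing bijection $h(x)=a+b-f(x)$, which is strictly increasing with the same codomain and whose difference $h(x+\Delta)-h(x)$ has the same absolute value as $f(x+\Delta)-f(x)$.

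Under the increasing assumption, $f(x+\Delta)>f(x)$, so the absolute value drops out and I would rewrite the difference as
\begin{equation*}
    f(x+\Delta)-f(x)=\int_a^b \mathbf{1}\bigl[f(x)<y<f(x+\Delta)\bigr]\,dy.
\end{equation*}
Substituting this into the outer integral and applying Tonelli's theorem (legal since the integrand is nonnegative), I would swap the order of integration to get a double integral of the indicator over $\mathbb{R}\times(a,b)$.

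The key reformulation is then to invert the event: because $f$ is a strictly increasing bijection onto $(a,b)$, the condition $f(x)<y<f(x+\Delta)$ is equivalent to $f^{-1}(y)-\Delta<x<f^{-1}(y)$. For each fixed $y\in(a,b)$ the inner $x$-integral is therefore the Lebesgue measure of an interval of length exactly $\Delta$, and integrating $\Delta$ over $y\in(a,b)$ yields $(b-a)\Delta$, as claimed.

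I do not expect a real obstacle here; the whole argument is one careful Fubini swap plus the monotonicity reduction. The only place to be cautious is justifying that continuous injectivity on $\mathbb{R}$ yields strict monotonicity (so that the absolute value can be removed and $f^{-1}$ is well defined as a strictly monotone continuous map onto $\mathbb{R}$), and ensuring the indicator rewrite in terms of $f^{-1}$ is valid at the endpoints; since those endpoints form a measure-zero set, they do not affect the integral.
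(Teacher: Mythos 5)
Your proposal is correct and follows essentially the same route as the paper: reduce to the strictly increasing case by strict monotonicity, rewrite the difference as an integral of an indicator, swap integration order (Tonelli/Fubini), invert the event via $f^{-1}$ to get an interval of length $\Delta$, and handle the decreasing case with $h(x)=a+b-f(x)$. In fact your version of the indicator, $f(x)<y<f(x+\Delta)$, fixes a typographical reversal in the paper's display, and invoking Tonelli for the nonnegative integrand is the cleaner justification.
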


\begin{lem}\label{lem:shiftedfunc2}
	Let $f:\mathbb{R}\to \mathbb{R}$ be a continuous function that is invertible over the set $(c,d)$ and is a constant function over $(c,d)^C$. Further suppose that the codomain is $f(\mathbb{R})=(a,b)$. Then for $\Delta>0$ we get that
	$$\int_{\mathbb{R}}\abs{f(x+\Delta)-f(x)}dx=(b-a)\Delta$$
	\begin{proof}
		Assume that $f$ is an increasing function and so $f(c)=a$, $f(d)=b$ and $\abs{f(x+\Delta)-f(x)}=f(x+\Delta)-f(x)$. 
		
		Let $0<\epsilon<(c-d)/2$ and define 
		\[
		g_{\epsilon}(x) =
		\begin{dcases*}
			(f(c+\epsilon)-a)(1-e^{x-c-\epsilon})+a & when $x\in (-\infty,c+\epsilon]$\\
			f(x) & when $x\in (c+\epsilon,d-\epsilon]$\\
			(f(d-\epsilon)-b)(1-e^{d-\epsilon-x})+b & when $x\in(d-\epsilon,\infty)$\\
		\end{dcases*}
		\]
		Note that $g_{\epsilon}(x)$ is continuous, invertible, an increasing function and the codomain is $(a,b)$. By Lemma \ref{lem:shiftedfuncinvert} for each $\epsilon>0$
		$$\int_{\mathbb{R}}g_{\epsilon}(x+\Delta)-g_{\epsilon}(x) dx = (b-a)\Delta$$
		Further, for all $x\in \mathbb{R}$, $\lim_{\epsilon\to 0}g_{\epsilon}(x+\Delta)-g_{\epsilon}(x)=f(x+\Delta)-f(x)$ and so $g_{\epsilon}(x+\Delta)-g_{\epsilon}(x)$ converges pointwise to $f(x+\Delta)-f(x)$. Next, for $0<\epsilon<(c-d)/2$, $\abs{g_{\epsilon}(x+\Delta)-g_{\epsilon}(x)}<2\abs{b}$ and so the function $g_{\epsilon}(x+\Delta)-g_{\epsilon}(x)$ is uniformly bounded. The above statements allow us to apply the dominated convergence Theorem (Theorem 16.5 of \cite{billingsley}) and so
		$$\int_{\mathbb{R}}f(x+\Delta)-f(x)dx = \lim_{\epsilon \to 0} \int_{\mathbb{R}}g_{\epsilon}(x+\Delta)-g_{\epsilon}(x)dx = (b-a)\Delta$$
		
		If $f$ is strictly decreasing apply the transform $h(x)=a+b-f(x)$. The function $h$ is a strictly increasing invertible function with codomain $(a,b)$ and so using the previous result for increasing functions,
		\begin{align*}
			\int_{\mathbb{R}}\abs{f(x+\Delta)-f(x)}dx= \int_{\mathbb{R}}\abs{h(x+\Delta)-h(x)}dx = (b-a)\Delta
		\end{align*}
	\end{proof}
\end{lem}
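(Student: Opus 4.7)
The plan is to mimic the indicator-trick proof of Lemma \ref{lem:shiftedfuncinvert} directly, bypassing the need to approximate $f$ by invertible functions on all of $\mathbb{R}$. Without loss of generality I assume $f|_{(c,d)}$ is strictly increasing; otherwise I replace $f$ by $a+b-f$, which has the same codomain and satisfies $|(a+b-f)(x+\Delta)-(a+b-f)(x)|=|f(x+\Delta)-f(x)|$. Because $f$ is continuous on $\mathbb{R}$ and constant on $(c,d)^C$, the two constant values must match the limits of $f|_{(c,d)}$ at the endpoints, forcing $f\equiv a$ on $(-\infty,c]$ and $f\equiv b$ on $[d,\infty)$. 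In particular $f$ is a continuous non-decreasing bijection from $[c,d]$ onto $[a,b]$, with a well-defined partial inverse $f^{-1}:[a,b]\to[c,d]$.

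The key step is then to identify the relevant level sets cleanly. For each fixed $y\in(a,b)$, I would verify that
\[
\{x\in\mathbb{R}:f(x)<y\}=(-\infty,\,f^{-1}(y))
\qquad\text{and}\qquad
\{x\in\mathbb{R}:f(x+\Delta)>y\}=(f^{-1}(y)-\Delta,\,\infty),
\]
the first because $f\equiv a<y$ on $(-\infty,c]$, $f$ strictly crosses $y$ at the unique point $f^{-1}(y)\in(c,d)$, and $f\equiv b>y$ on $[d,\infty)$; the second analogously. Intersecting, the set $\{x:f(x)<y<f(x+\Delta)\}$ is exactly the interval $(f^{-1}(y)-\Delta,\,f^{-1}(y))$, which has Lebesgue measure $\Delta$.

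With the level-set identity in hand, the computation reduces to one application of Tonelli's theorem (valid since the integrand is non-negative):
\begin{align*}
\int_{\mathbb{R}}|f(x+\Delta)-f(x)|\,dx
&=\int_{\mathbb{R}}\int_a^b\mathbf{1}_{\{f(x)<y<f(x+\Delta)\}}\,dy\,dx\\
&=\int_a^b\int_{\mathbb{R}}\mathbf{1}_{(f^{-1}(y)-\Delta,\,f^{-1}(y))}(x)\,dx\,dy\\
&=\int_a^b\Delta\,dy=(b-a)\Delta.
\end{align*}
The main obstacle I anticipate is the careful handling of the transitions at $x=c$ and $x=d$ when identifying the level sets: one must correctly account for the plateaus of $f$ at heights $a$ and $b$ so that $\{f<y\}$ really is a half-line, not a disconnected union. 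A more laborious alternative would be to smoothly approximate $f$ by strictly monotone continuous surjections $g_\epsilon$ onto $(a,b)$ (for instance by attaching exponential tails outside $[c+\epsilon,d-\epsilon]$), apply Lemma \ref{lem:shiftedfuncinvert} to each $g_\epsilon$, and pass to the limit; the obstacle in that route is finding an \emph{integrable} dominator that works uniformly in $\epsilon$, since a pointwise $L^\infty$ bound alone is not integrable over all of $\mathbb{R}$. This makes the direct indicator argument the cleaner path.
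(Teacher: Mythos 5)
Your proof is correct, but it takes a genuinely different route from the paper. The paper proves this lemma by sandwiching $f$ between strictly monotone surjections $g_{\epsilon}$ onto $(a,b)$ (exactly the exponential-tail construction you describe as the ``laborious alternative''), invoking Lemma \ref{lem:shiftedfuncinvert} for each $g_{\epsilon}$, and passing to the limit via dominated convergence with the uniform bound $\abs{g_{\epsilon}(x+\Delta)-g_{\epsilon}(x)}<2\abs{b}$; as you anticipate, a constant bound is not an integrable dominator on all of $\mathbb{R}$, so that route needs an extra step (e.g.\ a genuinely integrable envelope for the tail differences, or monotone convergence in $\epsilon$) to be airtight, whereas your direct argument sidesteps the issue entirely. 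Your level-set computation is sound: with the increasing normalization, $f\equiv a$ on $(-\infty,c]$, $f$ strictly increasing on $[c,d]$, $f\equiv b$ on $[d,\infty)$, so for each $y\in(a,b)$ the set $\{x: f(x)<y<f(x+\Delta)\}$ is exactly $(f^{-1}(y)-\Delta,\,f^{-1}(y))$ of measure $\Delta$, the layer-cake identity $f(x+\Delta)-f(x)=\int_a^b \mathbf{1}_{\{f(x)<y<f(x+\Delta)\}}\,dy$ holds pointwise since $f$ is non-decreasing and takes values in $[a,b]$, and Tonelli applies because the integrand is non-negative and the set is open (hence jointly measurable). The decreasing case via $a+b-f$ matches the paper's reduction. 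What your approach buys is a self-contained one-page argument with no limiting procedure; what the paper's buys is reuse of Lemma \ref{lem:shiftedfuncinvert} as a black box. The only cosmetic caveat, inherited from the paper's own statement, is the open-versus-closed endpoint issue (if $f$ is constant on each component of $(c,d)^C$ and continuous, its range is really $[a,b]$ rather than $(a,b)$); your reading, with the two plateau values equal to the endpoint limits $a$ and $b$, is the intended one and does not affect the conclusion.
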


\begin{lem}\label{lem:shiftedfunc3}
	Let $f:\mathbb{R}\to \mathbb{R}$ be a continuous function with the following properties:
	\begin{itemize}
		\item the codomain is $(0,K)$
		\item $(m_1, m_2, \ldots, m_M)$ are the local maxima and minima points
		\item $\lim_{x\to \infty}f(x)=0$ and $\lim_{x\to -\infty}f(x)=0$
	\end{itemize}
	Further suppose that $\Delta< \max_{i=2,\ldots, M}\{m_i-m_{i-1}\}$. Then 
	$$\int_{\mathbb{R}}\abs{f(x-\Delta)-f(x)}dx \leq K(M+1)\Delta$$
	
	\begin{proof}
		Since $\Delta< \max_{i=2,\ldots, M}\{m_i-m_{i-1}\}$, we have that $m_1-\Delta<m_1<m_2-\Delta<\ldots<m_M$. Let $I_1,\ldots, I_M$ be the intersection points or the points where $f(I_i)=f(I_i-\Delta)$. 
		
		\paragraph{Show that $m_i-\Delta<I_i<m_i$:} Suppose that $m_i$ is a local maximum point. Let $g(x)=f(x+\Delta)$. Within the interval $(m_i-\Delta,m_i)$, $f'(x)>0$ and $g'(x)<0$ by assumption. This implies that $f(m_i-\Delta)<f(m_i)$ and $g(m_i-\Delta)>g(m_i)$ by the Mean Value Theorem. Further since $g(m_i-\Delta)=f(m_i)$ we have that $g(m_i-\Delta)>f(m_i-\Delta)$ and $g(m_i)<f(m_i)$. 
		
		Let $h(x)=g(x)-f(x)$. Then $h(m_i-\Delta)>0$ and $h(m_i)<0$ further $h$ is a strictly decreasing function over $(m_i-\Delta,m_i)$ since $g,-f$ are strictly decreasing functions over the same interval. So by the intermediate value theorem, there exists an $\xi \in (m_i-\Delta,m_i)$ such that $h(\xi)=0$ or $f(\xi)=g(\xi)=f(\xi+\Delta)$. Further by injectivity, $\xi$ is unique. Let $I_i=\xi$. A similar proof can be given for when $m_i$ is a local minimum.
		
		\paragraph{Show that $\int_{I_i}^{I_{i+1}}\abs{f(x+\Delta)-f(x)}dx\leq K\Delta$:}
		Note first that $m_i-\Delta<I_i<m_i<m_{i+1}-\Delta<I_{i+1}<m_{i+1}$ further define
		\[
		f_i(x) =
		\begin{dcases*}
			f(m_i) & when $x\in (-\infty,m_i]$\\
			f(x) & when $x\in (m_i,m_{i+1}]$\\
			f(m_{i+1}) & when $x\in(m_{i+1},\infty)$\\
		\end{dcases*}
		\]
		Note that over the interval $(m_i,m_{i+1}]$, the function $f$ is either a strictly increasing or a strictly decreasing function.
		\begin{align*}
			&\int_{I_i}^{I_{i+1}}\abs{f(x+\Delta)-f(x)}dx \\
			&=\int_{I_i}^{m_i}\abs{f(x+\Delta)-f(x)}dx + \int_{m_i}^{m_{i+1}-\Delta}\abs{f(x+\Delta)-f(x)}dx +
			\int_{m_{i+1}-\Delta}^{I_{i+1}}\abs{f(x+\Delta)-f(x)}dx\\
			&\leq \int_{I_i}^{m_i}\abs{f(x+\Delta)-f(m_i)}dx + \int_{m_i}^{m_{i+1}-\Delta}\abs{f(x+\Delta)-f(x)}dx +
			\int_{m_{i+1}-\Delta}^{I_{i+1}}\abs{f(m_{i+1})-f(x)}dx\\
			&= \int_{I_i}^{m_i}\abs{f_i(x+\Delta)-f_i(x)}dx + \int_{m_i}^{m_{i+1}-\Delta}\abs{f_i(x+\Delta)-f_i(x)}dx +
			\int_{m_{i+1}-\Delta}^{I_{i+1}}\abs{f_i(x+\Delta)-f_i(x)}dx\\
			&= \int_{I_i}^{I_{i+1}}\abs{f_i(x+\Delta)-f_i(x)}dx \\
			&\leq \int_{m_i-\Delta}^{m_{i+1}}\abs{f_i(x+\Delta)-f_i(x)}dx \\
			&= \int_{\mathbb{R}}\abs{f_i(x+\Delta)-f_i(x)}dx \\
			&= \abs{f(m_i)-f(m_{i+1})}\Delta \leq K\Delta
		\end{align*}
		The last equality is a result of Lemma \ref{lem:shiftedfunc2}.
		
		By similar reasoning it can be shown that 
		$$\int_{-\infty}^{I_{1}}\abs{f(x+\Delta)-f(x)}dx\leq K\Delta \hspace{2cm} \int_{I_{M}}^{\infty}\abs{f(x+\Delta)-f(x)}dx\leq K\Delta$$ Finally note that the intersection points partition $\mathbb{R}$ into $M+1$ subsets and so
		$$\int_{\mathbb{R}}\abs{f(x-\Delta)-f(x)}dx \leq K(M+1)\Delta$$
	\end{proof}

\subsubsection{Proof of Lemma \ref{lem:coalescingcondsideways}}
Lemma \ref{lem:coalescingcondsideways} represents the coalescing condition for the Sideways Theorem \ref{thm:oneshotlipschitz}.
	
\begin{proof}[Proof of Lemma \ref{lem:coalescingcondsideways}] \label{pf:coalescingcondsideways}
	Set $\theta_{1,n}=\theta'_{1,n}$. Define $$\Delta=g(\theta_{1,n},X_{n-1})-g(\theta_{1,n},X'_{n-1})$$ Let $f_{X_n},f_{X'_n}$ be the density functions for $X_{n},X_{n}'$, respectively and $f_{\theta_{2,n}}, f_{\theta_{2,n}+\Delta}$ be the density functions for $\theta_{2,n}, \theta_{2,n}+\Delta$. 
	
	Suppose that $\Delta,X_{n-1}, X'_{n-1} \in \mathbb{R}$ are known and so,
	\begin{align*}
		X_n=g(\theta_{1,n},X_{n-1})+\theta_{2,n} 
		&\implies \theta_{2,n}=X_n-g(\theta_{1,n},X_{n-1}) \\
		X'_n=g(\theta_{1,n},X'_{n-1})+\theta'_{2,n} 
		&\implies \theta'_{2,n}-\Delta=X'_n-g(\theta_{1,n},X_{n-1})
	\end{align*}
	
	We know that $\theta_{2,n}\overset{d}{=}\theta '_{2,n}$ and in general $\Delta$, $\theta_{1,n}$ are random variables, so
	\begin{align}\label{eqn:tvbounded}
		\norm{\L(X_n)-\L(X'_n)}
		&\leq E_{\theta_{1,n}, \Delta}\left[\norm{\L(X_n\mid \theta_{1,n}, \Delta)-\L(X'_n\mid \theta_{1,n}, \Delta)}\right] &\text{by Proposition \ref{prop:tvexp}}\\
		&= E_{\theta_{1,n}, \Delta}\left[\norm{\L(\theta_{2,n}\mid\theta_{1,n})-\L(\theta_{2,n}-\Delta\mid \theta_{1,n})}\right] & \text{by Proposition \ref{prop:invertibletv}}
	\end{align}
	By the assumptions in the theorem, the density of $\theta_{2,n}$ is continuous with $M$ extrema points and has a codomain that is in $(0,K)$. Let $(m_1, m_2,\ldots, m_M)$ be the local extrema points where $m_i<m_j$ if $i<j$ and $L\leq \max_{2\leq i\leq M}\{m_i-m_{i-1}\}$ be the maximum distance between two local extrema points. So, continuing from the inequality \ref{eqn:tvbounded} and by the definition of total variation, equation \ref{eqn:tv},
	\begin{align*}
		\norm{\L(X_n)-\L(X'_n)} &\leq E_{\theta_{1,n}}\left[E_{\Delta}\left[\frac{1}{2}\int_{\mathbb{R}}\abs{f_{\theta_{2,n}}(x\mid \theta_{1,n})-f_{\theta_{2,n}-\Delta}(x\mid \theta_{1,n})}dx\right]\right] \\
		&= E_{\theta_{1,n}}\left[E_{\Delta}\left[\frac{1}{2}\int_{\mathbb{R}}\abs{f_{\theta_{2,n}}(x\mid \theta_{1,n})-f_{\theta_{2,n}}(x+\Delta\mid \theta_{1,n})}dx\right]\right]\\
		&= E_{\theta_{1,n}}\left[E_{\Delta}\left[\frac{1}{2}\int_{\mathbb{R}}\abs{f_{\theta_{2,n}}(x\mid \theta_{1,n})-f_{\theta_{2,n}}(x+\Delta\mid \theta_{1,n})}dxI_{\Delta<L}\right]\right]+\\ &\hspace{1cm}E_{\theta_{1,n}}\left[E_{\Delta}\left[\frac{1}{2}\int_{\mathbb{R}}\abs{f_{\theta_{2,n}}(x\mid \theta_{1,n})-f_{\theta_{2,n}}(x+\Delta\mid \theta_{1,n})}dxI_{\Delta>L}\right]\right]\\
		&\leq E_{\theta_{1,n}}\left[E_{\Delta}\left[\frac{1}{2}\int_{\mathbb{R}}\abs{f_{\theta_{2,n}}(x\mid \theta_{1,n})-f_{\theta_{2,n}}(x+\Delta\mid \theta_{1,n})}dx \middle| \abs{\Delta}<L\right]\right]+ P_{\Delta}(\abs{\Delta} >L)\\
		&\leq \frac{1}{2}E_{\theta_{1,n}}\left[E_\Delta\left[K(M+1)\abs{\Delta} \right]\right] + P_\Delta(\abs{\Delta}>L)&\text{by Lemma \ref{lem:shiftedfunc3}}\\
		&\leq \frac{K(M+1)}{2}E_\Delta\left[\abs{\Delta} \right] +\frac{E_{\Delta}[\abs{\Delta}]}{L}
	\end{align*}
	The coalescing condition is thus satisfied as follows with $C=\frac{K(M+1)}{2} +\frac{I_{M>1}}{L}$,
	\begin{align*}
		\norm{\L(X_{n+1})-\L(X'_{n+1})} &\leq C E[\abs{g(\theta_{1,n},X_{n-1})-g(\theta_{1,n},X'_{n-1})}]\\
		&=CE[\abs{g(\theta_{1,n},X_{n-1}) + \theta_{2,n}-(g(\theta_{1,n},X'_{n-1}) +\theta_{2,n})}]\\
		&= C E[\abs{X_n-X'_n}]
	\end{align*}	
%
\end{proof}
\end{lem}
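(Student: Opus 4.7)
The plan is to establish the coalescing condition via a coupling argument that transforms the TV-distance bound between $X_n$ and $X'_n$ into a question about shifts of the conditional density of $\theta_{2,n}\mid\theta_{1,n}$. First, I would construct a coupling in which $\theta_{1,n}=\theta'_{1,n}$, leaving $(\theta_{2,n},\theta'_{2,n})$ with an arbitrary joint law, and define $\Delta=g(\theta_{1,n},X_{n-1})-g(\theta_{1,n},X'_{n-1})$, which captures the difference between the deterministic shifts feeding into $X_n$ and $X'_n$.

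Conditioning on $(\theta_{1,n},\Delta)$ and applying Proposition \ref{prop:tvexp}, it suffices to bound the expected conditional TV distance. Since the map $x\mapsto x-g(\theta_{1,n},X_{n-1})$ is invertible, Proposition \ref{prop:invertibletv} reduces the inner TV to that between the density of $\theta_{2,n}\mid\theta_{1,n}$ and its translate by $\Delta$. Using the $L^1$ representation of total variation, the task becomes bounding $\frac{1}{2}\int_{\mathbb{R}}|f_{\theta_{2,n}}(x\mid\theta_{1,n})-f_{\theta_{2,n}}(x+\Delta\mid\theta_{1,n})|\,dx$ in expectation over $\Delta$.

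The core estimate is a ``sideways'' integration argument: for $|\Delta|<L$, where $L$ is the maximum gap between consecutive local extrema of the conditional density, the above integral is at most $K(M+1)|\Delta|$. The intuition is that by viewing the density sideways (integrating along the $y$-axis rather than the $x$-axis), each horizontal level set meets the graph in at most $M+1$ intervals, and a horizontal shift by $\Delta$ moves each such interval by $\Delta$. I would establish this via a ladder of auxiliary results: first for strictly monotone invertible densities on a bounded codomain using a direct Fubini swap; then for densities with constant plateaus via a smooth approximation and the dominated convergence theorem; and finally for densities with multiple extrema by locating intersection points between $f$ and its translate with the intermediate value theorem, then summing over the intervals between consecutive extrema.

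For $|\Delta|\geq L$ the trivial TV bound of $1$ suffices, and I would convert this into a linear-in-$E[|\Delta|]$ quantity using Markov's inequality, $P(|\Delta|\geq L)\leq E[|\Delta|]/L$. Combining the two regimes produces the constant $C=\frac{K(M+1)}{2}+\frac{I_{M>1}}{L}$; the indicator $I_{M>1}$ appears because when $M\leq 1$ the density is unimodal and the sideways bound holds for all $\Delta$ without any need to truncate. The main technical obstacle I anticipate lies in the multi-extrema case: one must verify that the intersection points fall strictly between consecutive extrema and then partition $\mathbb{R}$ carefully so that the piecewise integral assembles into the clean $K(M+1)|\Delta|$ bound, even though the shape of $f$ between extrema can be essentially arbitrary.
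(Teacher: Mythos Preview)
Your proposal is correct and follows essentially the same approach as the paper: the coupling with $\theta_{1,n}=\theta'_{1,n}$, the reduction via Propositions \ref{prop:tvexp} and \ref{prop:invertibletv} to a shifted-density integral, the ladder of auxiliary lemmas (monotone case by Fubini, plateau case by approximation and dominated convergence, multi-extrema case by locating intersection points with the intermediate value theorem), and the split into $|\Delta|<L$ versus $|\Delta|\geq L$ handled by Markov's inequality all match the paper's argument. Even your identification of the main technical obstacle---pinning the intersection points between consecutive extrema and assembling the piecewise bound---is exactly where the paper spends its effort.
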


\subsection{Lemmas for random-functional autoregressive process examples}
\subsubsection{Proof of Lemma \ref{lem:nonlinar}}
\begin{proof}[Proof of Lemma \ref{lem:nonlinar}]\label{proof:nonlinar}
	First note that 
	\begin{align*}
		&E[\abs{X_{n+2}-X'_{n+2}} \mid X_n=x,X'_n=y] \\
		&= E\left[\bigg\lvert \left(\frac{1}{2}(x -\sin x)+Z_n\right)-g\left(\frac{1}{2}(y -\sin y)+Z_n\right)\bigg\rvert\right] \\
		&= E\left[\Abs{\frac{1}{2}\left(\frac{1}{2}(x -\sin x)+Z_n -\sin\left(\frac{1}{2}(x -\sin x)+Z_n\right)\right)-\frac{1}{2}\left(\frac{1}{2}(y -\sin y)+Z_n -\sin\left(\frac{1}{2}(y -\sin y)+Z_n\right)\right)}\right] \\
		&= \frac{1}{2}E\left[\Abs{\frac{1}{2}(x-y +\sin y -\sin x) + \sin\left(\frac{1}{2}(y -\sin y)+Z_n\right) -\sin\left(\frac{1}{2}(x -\sin x)+Z_n\right)}\right]\\
		&= \frac{1}{2}E\left[\abs{g(x,y) + G(x,y)}\right]
	\end{align*}
	Where $g(x,y)=\frac{1}{2}(x-y +\sin y -\sin x)$ and $G(x,y)=\sin\left(\frac{1}{2}(y -\sin y)+Z_n\right) -\sin\left(\frac{1}{2}(x -\sin x)+Z_n\right)$. By trigonometric identities \footnote{The trigonometric identities used are $2\cos\mu \sin \upsilon = \sin(\mu+\upsilon)-\sin(\mu-\upsilon)$ and $\cos(\mu + \upsilon)= \cos\mu \cos \upsilon +\sin \mu \sin \upsilon$ where $\mu,\upsilon\in \mathbb{R}$}, for $k(x,y)=\frac{x+y-\sin y - \sin x}{4}$ and $h(x,y)=\frac{y-x+\sin x - \sin y}{4}$. 
	\begin{align*}
		G(x,y) &= 2\cos\left(\frac{x+y-\sin y - \sin x}{4} + Z_n\right)\sin\left(\frac{y-x+\sin x - \sin y}{4}\right) \\
		&= 2\cos\left(k(x,y) + Z_n\right)\sin h(x,y) \\
		&= 2\sin h(x,y) \left(\cos Z_n \cos k(x,y) + \sin Z_n \sin k(x,y)\right)
	\end{align*}
	And so,
	\begin{align*}
		& E[\abs{X_{n+2}-X'_{n+2}} \mid X_n=x,X'_n=y]\\
		& = \frac{1}{2}E\left[\Abs{g(x,y) + 2\sin h(x,y) \left(\cos Z_n \cos k(x,y) + \sin Z_n \sin k(x,y)\right)}\right] \\
		&\leq \frac{1}{2}\sqrt{E\left[\left(g(x,y) + 2\sin h(x,y) \left(\cos Z_n \cos k(x,y) + \sin Z_n \sin k(x,y)\right)\right)^2\right]} \\
		&= \frac{1}{2}\sqrt{g(x,y)^2 + 4e^{-1/2}g(x,y) \sin h(x,y)\cos k(x,y) + 4\sin^2 h(x,y) E[\left(\cos Z_n \cos k(x,y) + \sin Z_n \sin k(x,y)\right)^2]} \\
		&= \frac{1}{2}\sqrt{g(x,y)^2 + 4e^{-1/2}g(x,y) \sin h(x,y)\cos k(x,y) + 2\sin^2 h(x,y) (1+e^{-2}(\cos^2k(x,y) - \sin^2 k(x,y)))}\\
		&= \frac{1}{2}\sqrt{4h(x,y)^2 - 8e^{-1/2}h(x,y) \sin h(x,y)\cos k(x,y) + 2\sin^2 h(x,y) (1+e^{-2}(\cos^2k(x,y) - \sin^2 k(x,y)))}
	\end{align*}

\end{proof}

\subsubsection{Proof of lemmas used in Theorem \ref{thm:bayesianreggibbs}}

To prove the first part of this theorem, we apply the de-initialization technique which shows how the convergence rate of a Markov chain can be bounded above by the convergence rate of a more simpler Markov chain that includes sufficient information on the Markov chain of interest. The concept of de-initialization and a proposition that bounds total variation is provided below.

\begin{defn}[De-initialisation]
	Let $\{X_n\}_{n\geq 1}$ be a Markov chain. A Markov chain $\{Y_n\}_{n\geq 1}$ is a de-initialization of $\{X_n\}_{n\geq 1}$ if for each $n\geq 1$ $$\L(X_n \mid X_0,Y_n)=\L(X_n\mid Y_n)$$
\end{defn}

\begin{prop}[Theorem 1 of \cite{deinit}]\label{prop:deinit}
	Let $\{Y_n\}_{n\geq 1}$ be a de-initialization of $\{X_n\}_{n\geq 1}$ then for any two initial distributions $X_0\sim \mu$ and $X'_0\sim \mu'$,
	$$\norm{\L(X_n)-\L(X'_n)}\leq \norm{\L(Y_n)-\L(Y'_n)}$$
\end{prop}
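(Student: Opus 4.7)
The plan is to exploit the de-initialization condition to reduce the problem to the well-known non-expansion of total variation distance under Markov kernels. The core idea is that once $Y_n$ captures enough information to determine the conditional law of $X_n$, the map $\L(Y_n) \mapsto \L(X_n)$ factors through a common Markov kernel, and Markov kernels can only contract total variation.

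First, I would use the de-initialization condition $\L(X_n \mid X_0, Y_n) = \L(X_n \mid Y_n)$ to define a Markov kernel $K(y, A) = P(X_n \in A \mid Y_n = y)$ that does not depend on the initial distribution of $X_0$. Because the same kernel governs both chains, the tower property yields $\L(X_n)(A) = \int K(y, A)\, \L(Y_n)(dy)$ and analogously for the primed chain. Writing $\mu K$ for this action, I have $\L(X_n) = \L(Y_n) K$ and $\L(X'_n) = \L(Y'_n) K$.

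Next, I would invoke the standard contraction inequality $\norm{\mu K - \nu K} \leq \norm{\mu - \nu}$ valid for any probability measures $\mu, \nu$ and any Markov kernel $K$. The cleanest justification is via coupling: take a maximal coupling $(Y, Y')$ of $\mu, \nu$ (as in Proposition 3(g) of \cite{robrose}) achieving $P(Y \neq Y') = \norm{\mu - \nu}$, draw $X \sim K(Y,\cdot)$ and $X' \sim K(Y',\cdot)$ so that $X = X'$ whenever $Y = Y'$, and observe that $\norm{\mu K - \nu K} \leq P(X \neq X') \leq P(Y \neq Y')$. Applying this with $\mu = \L(Y_n)$ and $\nu = \L(Y'_n)$ gives the desired inequality.

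The main obstacle I anticipate is the measure-theoretic step of justifying that a single regular version of $K$ may be used under both initial distributions. The de-initialization condition, as stated, only pins down conditional distributions almost surely under the joint law of each chain, so one must argue that the same map $y \mapsto K(y, \cdot)$ works simultaneously under $\mu$ and $\mu'$. On a Polish state space this follows from the existence of a regular conditional distribution, but it is the one place where genuine care is required; the rest of the argument is essentially the data-processing inequality for total variation.
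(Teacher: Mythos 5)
The paper does not prove this proposition at all; it is imported verbatim as Theorem~1 of the cited de-initialization reference, so there is no internal proof to compare against. Your argument is correct and is essentially the standard proof of that cited result: de-initialization gives conditional independence of $X_n$ and $X_0$ given $Y_n$, so a single kernel $K(y,\cdot)$ pushes $\L(Y_n)$ to $\L(X_n)$ and $\L(Y'_n)$ to $\L(X'_n)$, and total variation is non-expansive under a common Markov kernel. Your coupling justification of the last step works, though you could avoid maximal coupling entirely by noting that for any measurable $A$, $\abs{P(X_n\in A)-P(X'_n\in A)}=\abs{\int K(y,A)\,[\L(Y_n)-\L(Y'_n)](dy)}\leq \norm{\L(Y_n)-\L(Y'_n)}$ since $0\leq K(\cdot,A)\leq 1$. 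The measure-theoretic caveat you flag is real but resolvable in the intended setting: the conditional law of $X_n$ given $X_0=x$ and $Y_n$ is determined by the transition mechanism and hence does not depend on the initial distribution, so the de-initialization identity $\L(X_n\mid X_0,Y_n)=\L(X_n\mid Y_n)$ yields one version of $K$ valid simultaneously under $\mu$ and $\mu'$ (on a Polish space via regular conditional distributions), which is exactly how the cited reference treats it.
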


\begin{proof}[Proof of Lemma \ref{lem:bayesregdeinit}]\label{proof:bayesregdeinit}
	Note that $\beta_{n}= \tilde{\beta} + \sigma_{n-1} Z_n, Z_n\sim N_p(0, A^{-1})$ can be written as a random function of $\sigma^2_n$. Substituting $\beta_n$, $\sigma^2_{n}$ can then be written as a random function of its previous value for independent $Z^2_{n}\sim \chi^2(p)$ and $G_n \sim \Gamma(\frac{k+p}{2},1)$,
	
	$$\sigma^2_{n}=\frac{Z^2_{n}}{C}\frac{C}{2G_n}\sigma^2_{n-1}+\frac{C}{2G_n}$$
	
	Let $X_n=\frac{Z^2_{n}}{C}$, $Y_n=\frac{C}{2G_n}$. We can rewrite $\sigma^2_{n}=X_nY_n\sigma^2_{n-1}+Y_n$
	where $X_n\sim \Gamma\left(\frac{p}{2}, \frac{C}{2}\right)$ and $Y_n\sim \Gamma^{-1}\left(\frac{k+p}{2}, \frac{C}{2}\right)$.
	Using the notation from the Sideways Theorem \ref{thm:oneshotlipschitz} $\theta_{1,n}=X_nY_n$ and $\theta_{2,n}=Y_n$.
	
	Since $\beta_{n}$ can be written as a random function of $\sigma^2_n$,
	$$\L(\beta_n,\sigma^2_n\mid \beta_0,\sigma^2_0,\sigma^2_n)=\L(\beta_n,\sigma^2_n\mid \sigma^2_n)$$ 
	and so $\sigma^2_n$ is a de-initialization of $(\beta_n,\sigma^2_n)$. By Proposition \ref{prop:deinit}, $$\norm{\L(\beta_n,\sigma^2_n)-\L(\beta'_n, \sigma^{'2}_n)}\leq \norm{\L(\sigma^2_n)-\L(\sigma^{'2}_n)}$$
	We are thus interested in evaluating the convergence rate of $\sigma^2_{n}$ to bound the convergence rate of $(\beta_{n},\sigma^2_{n})$.
	
	To interpret this in another way, if $\sigma^2_n$ couples then the distribution of $\beta_n$ is the same for both iterations, so it is automatically coupled. An alternative proof can be made using the results from \cite{liu}.
\end{proof}

%

\begin{proof}[Proof of Lemma \ref{lem:bayesregcontr}]\label{proof:bayesregcontr}
	By Lemma \ref{lem:bayesregdeinit}, $\theta_{1,n}=X_n Y_n$ and so,
	\begin{align*}
		K=E[\abs{\theta_{1,n}}] = E[X_n Y_n]= E[X_n]E[Y_n]= \frac{p}{C}\frac{C}{k+p-2}= \frac{p}{k+p-2}
	\end{align*}
\end{proof}

\begin{proof}[Proof of Lemma \ref{lem:bayesregconddens}]\label{proof:bayesregconddens}
	\textbf{Calculate the conditional density $\theta_{2,n}\mid\theta_{1,n}$}
	We remove the subscript $n$ on the random variables. Let $X,Y$ be as described in Lemma \ref{lem:bayesregdeinit}. Since the random variables are independent, the joint density is the product of the densities. 
	\begin{equation}
		f_{X,Y}(x,y)=\frac{\bx}{\Gamma(\ax)}x^{\ax-1}e^{x\bx} \frac{\by}{\Gamma(\ay)}y^{-\ay-1}e^{-\frac{\by}{y}}
	\end{equation}
	Then $(\theta_1, \theta_2)=(XY,Y)$ is a transformation with the Jacobian $\abs{J}=\theta_2^{-1}$ and the density written as follows,
	\begin{align*}
		f_{\theta_1,\theta_2}(\theta_1,\theta_2)&= f_{X,Y}\left(\frac{\theta_1}{\theta_2},\theta_2\right)\theta_2^{-1}\\
		&=\frac{\bx}{\Gamma(\ax)}\left(\frac{\theta_1}{\theta_2}\right)^{\ax-1}e^{-\frac{\theta_1}{\theta_2}\bx} \frac{\by}{\Gamma(\ay)}\theta_2^{-\ay-1}e^{-\frac{\by}{\theta_2}}\theta_2^{-1}
	\end{align*}
	Next $f_{\theta_2\mid \theta_1}(\theta_2\mid\theta_1)$ is proportional to $f_{\theta_1,\theta_2}(\theta_1,\theta_2)$ and so we can derive the conditional density of $\theta_2$ as follows,
	\begin{align}\label{eqn:theta2density}
		f_{\theta_2\mid \theta_1}(\theta_2\mid \theta_1) &\propto f_{\theta_1,\theta_2}(\theta_1,\theta_2)\\
		&\propto \theta_2^{1-\ax}e^{-\frac{1}{\theta_2}\theta_1\bx} \theta_2^{-\ay-1}e^{-\frac{1}{\theta_2}\by}\theta_2^{-1} \\
		&= \theta_2^{-(\ax+\ay)-1}e^{-\frac{1}{\theta_2}(\theta_1+1)\bx}
	\end{align}
	This is proportional to an inverse gamma distribution and so, 
	$\theta_2\mid \theta_1 \sim \Gamma^{-1}\left(\frac{k+2p}{2}, (\theta_1+1)C/2\right)$. Since the conditional density is an inverse gamma distribution, the number of modes is $M=1$ and the density function is continuous.
	
	\textbf{Calculate the maximum value of $f_{\theta_2\mid \theta_1}(\theta_2\mid \theta_1)$ :} Figure \ref{fig:inversegammadensity} shows how the maximum value of the density increases as the shape, $(\theta_1+1)C/2$ decreases when the rate, $\frac{k+2p}{2}$ is fixed. It can also be shown from equation \ref{eqn:theta2density} that the density function of $f_{\theta_2\mid \theta_1}(\theta_2\mid \theta_1)$ is maximized when $\theta_1=0$ since the normalizing constant will be the largest. This means that $f_{\theta_2\mid \theta_1}(\theta_2\mid \theta_1)$ reaches its maximum height when $\theta_1=0$ and so we find the value of $f_{\theta_2\mid \theta_1}(\theta_2\mid \theta_1)$ evaluated at $\theta_2= \frac{C}{k+2p+2}$, the mode (Section 5.3 of \cite{mode}).
	\begin{align*}
		K &= f_{\theta_2\mid \theta_1}\left(\frac{C}{k+2p+2}\mid\theta_1=0\right)\\
		&=\frac{(C/2)^{\frac{k+2p}{2}}}{\Gamma(\frac{k+2p}{2})}y^{-\frac{k+2p}{2}-1}e^{-\frac{C/2}{y}}\mid_{y=\frac{C}{k+2p+2}}\\
		&=\frac{(C/2)^{\frac{k+2p}{2}}}{\Gamma(\frac{k+2p}{2})}\left(\frac{C}{k+2p+2}\right)^{-\frac{k+2p}{2}-1}e^{-\frac{k+2p+2}{2}}\\
		&=\frac{(C/2)^{\frac{k+2p}{2}}}{\Gamma(\frac{k+2p}{2})}\left(\frac{k+2p+2}{C}\right)^{\frac{k+2p}{2}+1}e^{-\frac{k+2p+2}{2}}
	\end{align*}
	
	And so,
	\begin{align}
		K &= \frac{(C/2)^{\frac{k+2p}{2}}}{\Gamma(\frac{k+2p}{2})}\left(\frac{k+2p+2}{C}\right)^{\frac{k+2p}{2}+1}e^{-\frac{k+2p+2}{2}}
	\end{align}
	
	\begin{figure}
		\centering
		\includegraphics[width=0.7\linewidth]{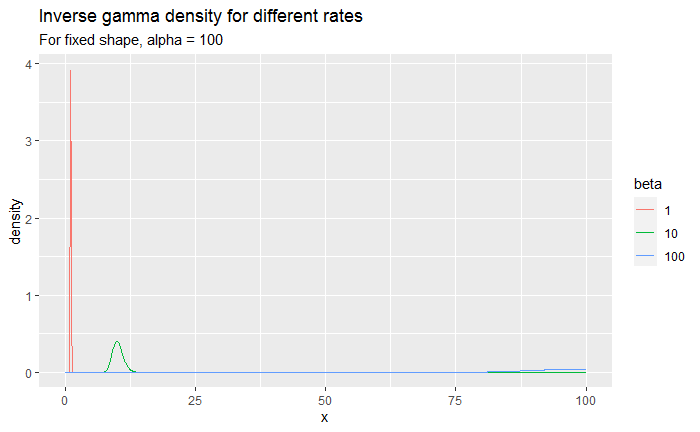}
		\caption{Inverse gamma density when $\alpha=100$ and $\beta=1,10,100$}
		\label{fig:inversegammadensity}
	\end{figure}
\end{proof}

\subsubsection{Proof of lemmas used in Theorem \ref{thm:anothergibbssampler}}

\begin{proof}[Proof of Lemma \ref{lem:locmodeldeinit}]\label{proof:locmodeldeinit}
	The iteration $\tau^{-1}_{n+1}$ can be written as a function of its previous value, $\tau^{-1}_{n}$ since $\mu_{n+1} = \bar{y} + Z_{n+1}/\sqrt{J \tau_{n}}$. 
	\begin{equation}\label{var}
		\tin_{n+1} = \frac{Z^2_{n+1}}{S}\frac{S}{2G_{n+1}}\tin_{n} + \frac{S}{2G_{n+1}}
	\end{equation}
	Next we can rewrite, $\tinv_{n}=X_nY_n \tinv_{n-1}+Y_n$ where $X_n=\frac{Z^2_{t+1}}{S}\sim \Gamma\left(\frac{1}{2}, \frac{S}{2}\right)$ and $Y_n=\frac{S}{2G_{t+1}}\sim \Gamma^{-1}\left(\frac{J+2}{2}, \frac{S}{2}\right)$.

	Since $(\mu_{n},\tau^{-1}_{n})$ can be written as a random function of $\tau^{-1}_n$,
	$$\L(\mu_n,\tau^{-1}_n\mid \mu_0,\tau^{-1}_0,\tau^{-1}_n)=\L(\mu_n,\tau^{-1}_n\mid \tau^{-1}_n)$$ 
	and $\tau^{-1}_n$ is a de-initialization of $(\mu_n,\tau^{-1}_n)$. Further, by Proposition \ref{prop:deinit}, $$\norm{\L(\mu_n,\tinv_n)-\L(\mu'_n, \tau^{'-1}_n)}\leq \norm{\L(\tinv_n)-\L(\tau^{'-1}_n)}$$
	
	To interpret this in another way, if $\tau_n$ couples then the distribution of $\mu_n$ is the same for both iterations, so it is automatically coupled. An alternative proof can be made using the results from \cite{liu}.
\end{proof}

%

\begin{proof}[Proof of Lemma \ref{lem:locmodelcontr}]\label{proof:locmodelcontr}
	By Lemma \ref{lem:locmodeldeinit}, $\theta_{1,n}=X_nY_n$ and so by Corollary \ref{cor:oneshotlipschitz}
	\begin{align*}
		D= E[\abs{\theta_{1,n}}] = E[X_nY_n]= E[X_n]E[Y_n]= \frac{1}{S}\frac{S}{J}= \frac{1}{J}
	\end{align*}
\end{proof}

\begin{proof}[Proof of Lemma \ref{lem:locmodelconddens}]\label{proof:locmodelconddens}
	To find $M,K$ and show that the conditional density is continuous, we (a) show that $\theta_2\mid \theta_1 \sim \Gamma^{-1}\left(\frac{J-1}{2}, (\theta_1+1)S/2\right)$, which directly implies that the conditional distribution is continuous and $M=1$ and we (b) we find the value of $K$.
	
	\textbf{(a) Calculate the conditional density $\theta_{2,n}\mid \theta_{1,n}$}
	For simplicity, we remove the subscript $n$ on the random variables. Let $X,Y$ be as described in Lemma \ref{lem:locmodeldeinit}. Since the random variables are independent, the joint density is the product of the densities. 
	\begin{equation}
		f_{X,Y}(x,y)=\frac{\bbx}{\Gamma(\aax)}x^{\aax-1}e^{x\bbx} \frac{\bby}{\Gamma(\aay)}y^{-\aay-1}e^{-\frac{\bby}{y}}
	\end{equation}
	Then $(\theta_1, \theta_2)=(XY,Y)$ is a transformation with the Jacobian $\abs{J}=\theta_2^{-1}$ and the density written as follows,
	\begin{align*}
		f_{\theta_1,\theta_2}(\theta_1,\theta_2)&= f_{X,Y}\left(\frac{\theta_1}{\theta_2},\theta_2\right)\theta_2^{-1}\\
		&=\frac{\bbx}{\Gamma(\aax)}\left(\frac{\theta_1}{\theta_2}\right)^{\aax-1}e^{-\frac{\theta_1}{\theta_2}\bbx} \frac{\bby}{\Gamma(\aay)}\theta_2^{-\aay-1}e^{-\frac{\bby}{\theta_2}}\theta_2^{-1}
	\end{align*}
	Next $f_{\theta_2\mid \theta_1}(\theta_2\mid \theta_1)$ is proportional to $f_{\theta_1,\theta_2}(\theta_1,\theta_2)$ and so we can derive the conditional density of $\theta_2$ as follows,
	\begin{align}\label{eqn:theta2densitycont}
		f_{\theta_2\mid \theta_1}(\theta_2\mid \theta_1) &\propto f_{\theta_1,\theta_2}(\theta_1,\theta_2)\\
		&\propto \theta_2^{1-\aax}e^{-\frac{1}{\theta_2}\theta_1\bbx} \theta_2^{-\aay-1}e^{-\frac{1}{\theta_2}\bby}\theta_2^{-1} \\
		&= \theta_2^{-(\aax+\aay)-1}e^{-\frac{1}{\theta_2}(\theta_1+1)\bbx} \\
		&= \theta_2^{-(J-1)/2-1}e^{-\frac{1}{\theta_2}(\theta_1+1)\bbx}
	\end{align}
	This is proportional to an inverse gamma distribution and so, 
	$\theta_2\mid \theta_1 \sim \Gamma^{-1}\left(\frac{J-1}{2}, (\theta_1+1)S/2\right)$. We know that the inverse gamma distribution is continuous and unimodal, so $M=1$.
	
	\textbf{(b) Calculate the maximum value of $f_{\theta_2\mid \theta_1}(\theta_2\mid \theta_1)$ :} Similar to figure \ref{fig:inversegammadensity} of Example \ref{ex:bayesianreggibbs}, $f_{\theta_2\mid\theta_1}(\theta_2\mid \theta_1)$ is maximized when $\theta_1=0$ since the normalizing constant will be the largest. So the largest value of $f_{\theta_2\mid\theta_1}(\theta_2\mid \theta_1)$ will occur when $\theta_1=0$. To find the maximum conditional distribution, we find the value of $f_{\theta_2\mid \theta_1}(\theta_2\mid \theta_1=0)$ evaluated at $\theta_2= \frac{S}{J+1}$, the mode (see Section 5.3 of \cite{mode}).
	\begin{align*}
		K &= f_{\theta_2\mid \theta_1}\left(\frac{S}{J+1}\mid\theta_1=0\right)\\
		&=\frac{(S/2)^{\frac{J-1}{2}}}{\Gamma(\frac{J-1}{2})}y^{-\frac{J-1}{2}-1}e^{-\frac{S/2}{y}}\mid_{y=\frac{S}{J+1}}\\
		&=\frac{(S/2)^{\frac{J-1}{2}}}{\Gamma(\frac{J-1}{2})}\left(\frac{S}{J+1}\right)^{-\frac{J-3}{2}}e^{-\frac{J+1}{2}}
	\end{align*}
	And so,
	\begin{align}\label{eqn:K_agibbssamplerex}
		K &= \frac{(S/2)^{\frac{J-1}{2}}}{\Gamma(\frac{J-1}{2})}\left(\frac{S}{J+1}\right)^{-\frac{J-3}{2}}e^{-\frac{J+1}{2}}
	\end{align}
\end{proof}

\begin{proof}[Proof of lemma \ref{lem:expstat}]\label{proof:expstat}
		By the property of stationary distribution, if $\sigma^2_{n-1}\sim \pi$ then $\sigma^2_{n}\sim \pi$ and so the lemma follows from the following.
		$$E_{\sigma^2_n \sim \pi}[V(\sigma^2_n)] =E_{\sigma^2_{n-1}\sim \pi}[E[V(\sigma^2_n)\mid \sigma^2_{n-1}]]\leq E_{\sigma^2_{n-1} \sim \pi}[\lambda V(\sigma^2_{n-1}) +b] = \lambda E_{\sigma^2_{n} \sim \pi}[ V(\sigma^2_{n})]+b$$
\end{proof}

\begin{proof} [Proof of \ref{lem:locmodeldriftval}]\label{proof:locmodeldriftval}
	\begin{align*}
		E[V(\sigma^2_n)\mid \sigma^2_{n-1}] &= E[(\sigma^2_n-h)^2\mid \sigma^2_{n-1}]\\
		&=E[(\sigma^2_n)^2-2h \sigma^2_n + h^2\mid\sigma^2_{n-1}]\\
		&=E[(X_nY_n \sigma^2_{n-1} + Y_n)^2-2h (X_nY_n \sigma^2_{n-1} + Y_n) + h^2\mid\sigma^2_{n-1}]\\
		&=E[Y_n^2](E[X_n^2](\sigma^2_{n-1})^2 + 2E[X_n]\sigma^2_{n-1} + 1)-2h (E[X_n] E[Y_n] \sigma^2_{n-1} + E[Y_n]) + h^2\\
		&=E[Y_n^2]E[X_n^2](\sigma^2_{n-1})^2 + 2E[X_n]E[Y_n^2]\sigma^2_{n-1} + E[Y_n^2]-2hE[X_n] E[Y_n] \sigma^2_{n-1} -2hE[Y_n] + h^2\\
		&=E[Y_n^2]E[X_n^2](\sigma^2_{n-1})^2 + 2E[X_n](E[Y_n^2]-h E[Y_n])\sigma^2_{n-1} + E[Y_n^2]-2hE[Y_n] + h^2\\
		&=0.6583702(\sigma^2_{n-1})^2 + 0.6911206\sigma^2_{n-1} + 107.3691\\
		&=\lambda(\sigma^2_{n-1})^2 + 2\lambda h\sigma^2_{n-1} + \lambda h^2 +b\\
		&=\lambda(\sigma^2_{n-1}+h)^2 +b
	\end{align*}
\end{proof}

\subsubsection{Proof of Theorem \ref{thm:arnormd}}

\begin{proof}[Proof of Theorem \ref{thm:arnormd}]\label{proof:arnormd}
	This example uses a modified version of the Sideways Theorem \ref{thm:oneshotlipschitz} to find an upper bound on the convergence rate. We will also use Proposition \ref{prop:invertibletv}, which states that the total variation between two random variables is equal to the total variation of any invertible transformation of the same two random variables.
	
	Let $\vecx _n, \vecx '_n \in \mathbb{R}^2$ be two copies of the autoregressive normal process as defined in Example \ref{ex:arnormdep}. Then for $\vec{Z}_n\sim N(\vec{0},I_d)$,
	$$\vecx_n=A\vecx_{n-1}+\Sigma_d\vec{Z}_n \hspace{2cm} \vecx '_n=A\vecx '_{n-1}+\Sigma_d\vec{Z}'_n$$
	We apply the one-shot coupling method to bound the total variation distance. For $n<N$ set $\vec{Z}_n=\vec{Z} '_n$.
	
	Suppose $X_0, X'_0$ are known and define 
	$$\Delta = \norm{\Sigma^{-1}_d A^n (\vecx_{0}-\vecx '_{0})}_2$$
	Decompose $A=P D P^{-1}$ with $D$ as the corresponding diagonal matrix, $\lambda_i$ is the $i$th eigenvalue of $A$ and $\norm{\cdot}_2$ denotes the Frobenius norm. Then $\Delta$ is bounded above as follows,
	\begin{align*}
		\Delta &= \norm{\Sigma^{-1}_d A^n (\vecx_{0}-\vecx '_{0})}_2\\
		&=\norm{\Sigma^{-1}_d P D^n P^{-1} (\vecx_{0}-\vecx '_{0})}_2\\
		&\leq \norm{\Sigma^{-1}_d}_2 \cdot \norm{P}_2 \norm{D^n}_2 \norm{P^{-1}}_2 \norm{\vecx_{0}-\vecx '_{0}}_2 &\text{by Lemma 1.2.7 of \cite{frobnorm}}\\
		&\leq \norm{\Sigma^{-1}_d}_2 \cdot \norm{P}_2 \norm{P^{-1}}_2 \norm{\vecx_{0}-\vecx '_{0}}_2 \sqrt{\sum_{i=1}^d \abs{\lambda_i}^{2n}}\\
		&\leq \norm{\Sigma^{-1}_d}_2 \cdot \norm{P}_2 \norm{P^{-1}}_2 \norm{\vecx_{0}-\vecx '_{0}}_2 \sqrt{d} \max_{1\leq i\leq d}\abs{\lambda_i}^n	
	\end{align*}
	
	For now assume that $X_0, X'_0$ are known and note that $\Sigma^{-1}_d$ is an invertible transform. We bound the total variation distance as follows by applying two invertible transforms on the Markov chain and using the fact that $\vec{Z}_{m}=\vec{Z}'_m, m < N$.
	\begin{align*}
		&\norm{\L(\vecx_N)-\L(\vecx '_N)} \\
		&\leq E_{\{\vec{Z}_m\}_{m<N}}\left[\norm{\L(\vecx_N)-\L(\vecx '_N)}\right] &\text{by prop. \ref{prop:tvexp}}\\
		&= E_{\{\vec{Z}_m\}_{m<N}}\left[\norm{\L(\Sigma^{-1}_d\vecx_N)-\L(\Sigma^{-1}_d\vecx '_N)}\right] &\text{by prop. \ref{prop:invertibletv}}\\
		&= E_{\{\vec{Z}_m\}_{m<N}}\left[\norm{\L(\Sigma^{-1}_d A \vecx_{N-1} +\vec{Z}_N)-\L(\Sigma^{-1}_dA \vecx '_{N-1} +\vec{Z} '_N)} \right]\\
		&= E_{\{\vec{Z}_m\}_{m<N}}\left[\norm{\L(\Sigma^{-1}_d (A^{N} \vec{X}_0 + \sum_{m=1}^{N-1} A^{N-m}\vec{Z}_m) +\vec{Z}_N)-\L(\Sigma^{-1}_d  (A^{N} \vec{X}'_0 + \sum_{m=1}^{N-1}A^{N-m}\vec{Z}_m) +\vec{Z} '_N)} \right]\\
		&= E_{\{\vec{Z}_m\}_{m<N}}\left[\norm{\L(\Sigma^{-1}_d A^N \vec{X}_0 +\vec{Z}_N)-\L(\Sigma^{-1}_d  A^N \vec{X}'_0 +\vec{Z} '_N)} \right] & \text{by prop. \ref{prop:invertibletv}}\\
		&=E_{\{\vec{Z}_m\}_{m<N}}\left[\norm{\L(\vec{Z}_N +\Sigma^{-1}_dA^N (\vecx_{0} -\vecx '_{0}))-\L(\vec{Z} '_N)}\right]\\
		&= \norm{\L(\vec{Z}_N +\Sigma^{-1}_dA^N (\vecx_{0} -\vecx '_{0}))-\L(\vec{Z} '_N)}
	\end{align*}
	There exists a rotation matrix $R\in \mathbb{R}^{d\times d}$ such that 
	$$R[\Sigma^{-1}_dA (\vecx_n -\vecx '_n)]=(\norm{\Sigma^{-1}_dA (\vecx_n -\vecx '_n)}_2,0,\ldots 0) = (\Delta,0,\ldots 0)$$ 
	\cite{frobnorm}. By properties of rotation, $R$ is orthogonal, so $R^T =R^{-1}$ and $RZ_n \sim N(0,RI_d R^T)=N(0,I_d)\sim Z_n$. In other words, $RZ_n \overset{d}{=} Z_n \overset{d}{=} Z'_n$. Thus, continuing the above equality,
	\begin{align*}
		\norm{\L(\vecx_n)-\L(\vecx '_n)} 	&\leq \norm{\L(\vec{Z}_n +\Sigma^{-1}_dA^n (\vecx_{0} -\vecx '_{0}))-\L(\vec{Z} '_n)}\\
		&= \norm{\L(R[\vec{Z}_n +\Sigma^{-1}_dA (\vecx_n -\vecx '_n)])-\L(R\vec{Z} '_n)}  & \text{by prop. \ref{prop:invertibletv}}\\
		&=\norm{\L(\vec{Z}_n +(\Delta,0,\ldots 0))-\L(\vec{Z} _n)}
	\end{align*}
	
	Next, suppose that $X_0, X'_0$ are unknown. Then, the inequality stated in equation \ref{eqn:arnormd} is shown as follows,
	\begin{align*}
		\norm{\L(\vecx_n)-\L(\vecx '_n)} &\leq E_{\Delta}[\norm{\L(\vec{Z}_n +(\Delta,0,\ldots 0))-\L(\vec{Z} _n)}] &\text{by prop \ref{prop:tvexp}}\\
		&= E_{\Delta}[\frac{1}{2}\int_{\mathbb{R}^d} \Abs{\frac{1}{(2\pi)^{d/2}}e^{-y_1^2/2 -\sum_{i=2}^d y_i^2/2}-\frac{1}{(2\pi)^{d/2}}e^{-(y_1-\Delta)^2/2 -\sum_{i=2}^d y_i^2/2}}d\vec{y} ]\\
		&= E_{\Delta}[\frac{1}{2}\int_{\mathbb{R}} \abs{\frac{1}{\sqrt{2\pi}}e^{-y_1^2/2 }-\frac{1}{\sqrt{2\pi}}e^{-(y_1-\Delta)^2/2}d}\vec{y} ]\\
		&= E_{\Delta}[\norm{\L(Z_{1,n}+\Delta)-\L(Z_{1,n})}]\\
		&\leq \frac{1}{\sqrt{2\pi}} E[\Delta] &\text{by Lemma \ref{lem:shiftedfunc3}}\\
		&\leq \sqrt{\frac{d}{2\pi}} \norm{ \Sigma^{-1}_d}_2 \cdot \norm{ P}_2 \norm{P^{-1}}_2 E[\norm{\vecx_{0}-\vecx '_{0}}_2] \max_{1\leq i\leq d}\abs{\lambda_i}^n
	\end{align*}
	
\end{proof}

\subsection{Lemmas for ARCH process examples}
\subsubsection{Proof of lemmas used in Theorem \ref{thm:lineararch}}

\begin{proof}[Proof of Lemma \ref{lem:larchcontr}]\label{proof:larchcontr}
	Let $\{X_n\}_{n\geq 1}\in \mathbb{R}$ and $\{X'_n\}_{n\geq 1}\in \mathbb{R}$ be two copies of the LARCH process.
	For fixed $n\geq 1$, let $Z_n=Z'_n$ and so,
	\begin{align*}
		E[\abs{X_n-X'_n}]&= E[\abs{(\beta_0+\beta_1 X_{n-1})Z_n-(\beta_0+\beta_1 X'_{n-1})Z_n}]\\
		&\leq \beta_1 E[\abs{Z_n}] E[\abs{X_{n-1}-X'_{n-1}}]
	\end{align*}
	Since $Z_n\overset{d}{=} Z_0>0$ a.s., the geometric convergence rate is $D=\beta_1 E[Z_0]$.
\end{proof}

\begin{proof}[Proof of Lemma \ref{lem:larchcoalesc}]\label{proof:larchcoalesc}
	For a fixed $n\geq 0$, suppose that $Z_{n+1}, Z'_{n+1}$ are independent. By Proposition \ref{prop:tvexp}, the total variation distance between the two processes is bounded above by the expectation of the total variation.
	\begin{align*}
		\norm{\L(X_{n+1})-\L(X'_{n+1})}&\leq E[\norm{\L((\beta_0+\beta_1 X_{n})Z_{n+1})-\L((\beta_0+\beta_1 X'_{n})Z_{n+1})}]
	\end{align*}
	Note that $Z_{n+1}$ and $Z'_{n+1}$ are used interchangeably in the total variation distance since $Z_{n+1}\overset{d}{=}Z'_{n+1}$. Let $Y_{n}=\beta_0+\beta_1 X_{n}$, $Y'_{n}=\beta_0+\beta_1 X'_{n}$, $\Delta=Y'_{n}-Y_{n}$, and $\Delta'=\frac{\Delta}{Y_{n}}$. WLOG $Y'_{n}>Y_{n}$ so that $\Delta, \Delta'>0$. Then,
	\begin{align*}
		\norm{\L(X_{n+1})-\L(X'_{n+1})}&\leq E[\norm{\L(Y_{n}Z_{n+1})-\L(Y'_{n}Z_{n+1})}]&\text{by Proposition \ref{prop:tvexp}}\\
		&=E[\norm{\L(Y_{n}Z_{n+1})-\L((Y_{n}+\Delta)Z_{n+1})}]\\
		&=E[\norm{\L(Z_{n+1})-\L((1+\Delta')Z_{n+1})}]&\text{by Proposition \ref{prop:invertibletv}}\\
		&=E[\norm{\L(\log(Z_{n+1}))-\L(\log(1+\Delta')+\log(Z_{n+1}))}]&\text{by Proposition \ref{prop:invertibletv}}\\
		&\leq \frac{M+1}{2}\sup_x e^x f_{Z_n}(e^x)E[\log(1+\Delta')]&\text{by lem \ref{lem:shiftedfunc3}. See prf of lem \ref{lem:coalescingcondsideways} for more details}\\
		&\leq \frac{M+1}{2}\sup_x e^x f_{Z_n}(e^x)\frac{E[\abs{\Delta}]}{\beta_0}&\text{by the Mean Value Theorem}\\
		&= \frac{M+1}{2}\sup_x e^x f_{Z_n}(e^x)\frac{\beta_1E[\abs{X_n-X'_n}]}{\beta_0}
	\end{align*}
\end{proof}

\subsubsection{Proof of lemmas used in Theorem \ref{thm:asymmetricarch}}

\begin{proof}[Proof of Lemma \ref{lem:asymarchcontraction}]\label{proof:asymarchcontraction}
	Let $\{X_n\}_{n\geq 1}\in \mathbb{R}$ and $\{X'_n\}_{n\geq 1}\in \mathbb{R}$ be two copies of the asymmetric ARCH process.
	
	For a fixed $n\geq 1$, let $Z_n=Z'_n$ and so,
	\begin{align*}
		E[\abs{X_n-X'_n}]&= E[\abs{\sqrt{(aX_{n-1}+b)^2+c^2}Z_n-\sqrt{(aX'_{n-1}+b)^2+c^2}Z_n}]\\
		&= \abs{\sqrt{(aX_{n-1}+b)^2+c^2}-\sqrt{(aX'_{n-1}+b)^2+c^2}}E[\abs{Z_n}]
	\end{align*}
	Note that the derivative of $f(x)=\sqrt{(ax+b)^2+c^2}$ is 
	\begin{equation}\label{eqn:asymarchcontraction}
		\abs{f'(x)}=\abs{\frac{a(ax+b)}{\sqrt{(ax+b)^2+c^2}}}\leq \frac{\abs{a(ax+b)}}{\sqrt{(ax+b)^2}}=\abs{a}
	\end{equation}
	and so,
	\begin{align*}
		E[\abs{X_n-X'_n}]&\leq \abs{a} E[\abs{Z_n}] E[\abs{X_{n-1}-X'_{n-1}}]
	\end{align*}
	Thus, the geometric convergence rate is $D=\abs{a} E[\abs{Z_0}]$.
\end{proof}

\begin{proof}[Proof of Lemma \ref{lem:asymarchcoalesc}]\label{proof:asymarchcoalesc}
	Let $\{X_n\}_{n\geq 1}\in \mathbb{R}$ and $\{X'_n\}_{n\geq 1}\in \mathbb{R}$ be two copies of the asymmetric ARCH process.
	
	For $n\geq 1$, $Z_n, Z'_n$ are independent. By Proposition \ref{prop:tvexp}, the total variation distance between the two processes is bounded above by the expectation of the total variation with respect to $X_{n-1},X'_{n-1}, Z_n, Z'_n$.
	\begin{align*}
		\norm{\L(X_n)-\L(X'_n)}&\leq E[\norm{\L(\sqrt{(aX_{n-1}+b)^2+c^2}Z_n)-\L(\sqrt{(aX'_{n-1}+b)^2+c^2}Z'_n)}]
	\end{align*}
	Let $Y_{n-1}=\sqrt{(aX_{n-1}+b)^2+c^2}$ and $Y'_{n-1}=\sqrt{(aX_{n-1}+b)^2+c^2}$, $\Delta=Y'_{n-1}-Y_{n-1}$ and $\Delta'=\frac{\Delta}{Y_{n-1}}$. WLOG, $Y'_{n-1}<Y_{n-1}$, so $-1< \Delta' <0$, because $Y_{n-1},Y'_{n-1}>0$ and
	\begin{align*}
		\norm{\L(X_n)-\L(X'_n)}&\leq E[\norm{\L(Y_{n-1}Z_n)-\L(Y'_{n-1}Z_n)}]\\
		&= E[\norm{\L(Y_{n-1}Z_n)-\L((Y_{n-1}+\Delta)Z_n)}]&\text{by Proposition \ref{prop:invertibletv}}\\
		&= E[\norm{\L(Z_n)-\L((1+\Delta')Z_n)}]&\text{by Proposition \ref{prop:invertibletv}}\\
		&\leq E\left[\sup_{x} 1-\frac{\pi_{Z_n}(x)}{\pi_{(1+\Delta')Z_n}(x)}\right]&\text{by Lemma 6.16 of \cite{markovmixing}}
	\end{align*}
	Let the density of $Z_n$ be $\pi_{Z_n}(x)$, then $\pi_{(1+\Delta')Z_n}(x)= \frac{1}{1+\Delta'}\pi_{Z_n}\left(\frac{x}{1+\Delta'}\right)$. 
	\begin{align*}
		\norm{\L(X_n)-\L(X'_n)} &\leq  E\left[\sup_{x} 1-(1+\Delta')\frac{\pi_{Z_n}(x)}{\pi_{Z_n}\left(\frac{x}{1+\Delta'}\right)}\right]\\
		&\leq E[\sup_{x} 1-(1+\Delta')]&\text{by assumption $\pi_{Z_n}(x)\geq \pi_{Z_n}\left(\frac{x}{1+\Delta'}\right)$}\\
		&= E[\Delta']\\
		&\leq \frac{E[\abs{Y_{n-1}-Y'_{n-1}}]}{c}&\text{since $Y_{n-1}\geq c$}\\
		&\leq \frac{\abs{a}}{c}E[\abs{X_{n-1}-X'_{n-1}}]&\text{by equation \ref{eqn:asymarchcontraction}}
	\end{align*}
\end{proof}

\subsubsection{Proof of lemmas used in Theorem \ref{thm:garch}}

\begin{proof}[Proof of Lemma \ref{lem:garchcontraction}]\label{proof:garchcontraction}
	Let $\{X_n\}_{n\geq 1}\in \mathbb{R}$ and $\{X'_n\}_{n\geq 1}\in \mathbb{R}$ be two copies of the GARCH process.
	For $n\geq 2$, let $Z_n=Z'_n$. First note that,
	\begin{align}\label{eq:garcheq}
		E[\abs{X_n-X'_n}]= E[\abs{\sigma_nZ_n -\sigma'_n Z_n}]= E[\abs{\sigma_n -\sigma'_n} \abs{Z_n}]=E[\abs{\sigma_n -\sigma'_n}]E[ \abs{Z_n}]
	\end{align}
	
	Next, we find an upper bound on $E[\abs{\sigma_n-\sigma'_n}]$ by first noting that $\sigma^2_n=\alpha^2+(\beta^2 Z^2_{n-1}+\gamma^2)\sigma^2_{n-1}$ by substitution. 
	\begin{align*}
		E[\abs{\sigma_n-\sigma'_n}] &= E[\abs{\sqrt{\alpha^2+(\beta^2 Z^2_{n-1}+\gamma^2)\sigma^2_{n-1}}-\sqrt{\alpha^2+(\beta^2 Z^2_{n-1}+\gamma^2)\sigma^{'2}_{n-1}}}] \\
		&\leq E[\sqrt{\beta^2 Z^2_{n-1}+\gamma^2}] E[\abs{\sigma_{n-1}-\sigma^{'}_{n-1}}] &\text{taking max of the derivative}\\
		&=E[\sqrt{\beta^2 Z^2_{n-1}+\gamma^2}] \frac{E[\abs{X_{n-1}-X'_{n-1}}]}{E[\abs{Z_{n-1}}]} &\text{by equation \ref{eq:garcheq}}
	\end{align*}
	Finally, substituting $E[\abs{\sigma_n-\sigma'_n}]$ into equation \ref{eq:garcheq},
	\begin{align*}
		E[\abs{X_n-X'_n}]&\leq E[\sqrt{\beta^2 Z^2_{n-1}+\gamma^2}] \frac{E[\abs{X_{n-1}-X'_{n-1}}]}{E[\abs{Z_{n-1}}]}E[\abs{Z_n}]\\
		&= E[\sqrt{\beta^2 Z^2_{n-1}+\gamma^2}] E[\abs{X_{n-1}-X'_{n-1}}]\\
		&\leq \sqrt{\beta^2 E[Z_0^2]+\gamma^2} E[\abs{X_{n-1}-X'_{n-1}}] &\text{by Jensen's inequality}
	\end{align*}
	Thus, the geometric convergence rate is $D=\sqrt{\beta^2 E[Z_0^2]+\gamma^2}$.
\end{proof}

\begin{proof}[Proof of Lemma \ref{lem:garchcoalescing}]\label{proof:garchcoalescing}
	Let $\{X_n\}_{n\geq 1}\in \mathbb{R}$ and $\{X'_n\}_{n\geq 1}\in \mathbb{R}$ be two copies of the GARCH process.
	
	For $n\geq 2$, suppose that $Z_n, Z'_n$ are independent. By Proposition \ref{prop:tvexp}, the total variation distance between the two processes is bounded above by the expectation of the total variation.
	\begin{align*}
		\norm{\L(X_n)-\L(X'_n)}&\leq E[\norm{\L(\sigma_nZ_n)-\L(\sigma'_n Z_n)}]
	\end{align*}
	Let $\Delta=\sigma'_{n}-\sigma_{n}$ and $\Delta'=\frac{\Delta}{\sigma_{n}}$. WLOG, $\sigma'_{n}<\sigma_{n}$, so $\Delta, \Delta' <0$ because $\sigma_{n},\sigma'_{n}>0$ and
	\begin{align*}
		\norm{\L(X_n)-\L(X'_n)}&= E[\norm{\L(\sigma_{n}Z_n)-\L((\sigma_{n}+\Delta)Z_n)}]&\text{by Proposition \ref{prop:invertibletv}}\\
		&= E[\norm{\L(Z_n)-\L((1+\Delta')Z_n)}]&\text{by Proposition \ref{prop:invertibletv}}\\
		&\leq E\left[\sup_{x} 1-\frac{\pi_{Z_n}(x)}{\pi_{(1+\Delta')Z_n}(x)}\right] &\text{by Lemma 6.16 of \cite{markovmixing}}
	\end{align*}
	Let the density of $Z_n$ be $\pi_{Z_n}(x)$, then $\pi_{(1+\Delta')Z_n}(x)= \frac{1}{1+\Delta'}\pi_{Z_n}\left(\frac{x}{1+\Delta'}\right)$. 
	\begin{align*}
		\norm{\L(X_n)-\L(X'_n)} &\leq  E\left[\sup_{x} 1-(1+\Delta')\frac{\pi_{Z_n}(x)}{\pi_{Z_n}\left(\frac{x}{1+\Delta'}\right)}\right]\\
		&\leq E[\sup_{x} 1-(1+\Delta')]&\text{by assumption $\pi_{Z_n}(x)\geq \pi_{Z_n}\left(\frac{x}{1+\Delta'}\right)$}\\
		&= E[\Delta']\\
		&\leq \frac{E[\abs{\sigma'_{n}-\sigma_{n}}]}{\alpha}&\text{since $\sigma_{n}\geq \alpha$}\\
		&\leq \frac{D}{\alpha E[\abs{Z_{n-1}}]} E[\abs{X_{n-1}-X'_{n-1}}] & \text{by equation in proof \ref{proof:garchcontraction}}
	\end{align*}
\end{proof}

\begin{proof}[Proof of Lemma \ref{lem:garchinitial}]\label{proof:garchinitial}
	\begin{align*}
		E[\abs{X_1 - X'_1}] &= \abs{\sigma^{2}_1 - \sigma^{'2}_1} E[\abs{Z_1}] &\text{by equation in proof \ref{proof:garchcontraction}}\\
		&= \abs{\sqrt{\alpha^2 +\beta^2 X_0^{2} + \gamma^2 \sigma_0^{2}} - \sqrt{\alpha^2 +\beta^2 X_0^{'2} + \gamma^2 \sigma_0^{'2}}} E[\abs{Z_1}] \\
		&\leq \sqrt{\abs{(\alpha^2 +\beta^2 X_0^{2} + \gamma^2 \sigma_0^{2}) - (\alpha^2 +\beta^2 X_0^{'2} + \gamma^2 \sigma_0^{'2})}} E[\abs{Z_1}] \\
		& &\hspace{-6cm} \text{since $\abs{\sqrt{x}-\sqrt{y}}=\sqrt{(\sqrt{x}-\sqrt{y})^2} = \sqrt{x+y-2\sqrt{x}\sqrt{y}} \leq \sqrt{\abs{x-y}}$} \\
		& \leq \sqrt{\beta^2 \abs{X_0^{2}-X_0^{'2}} + \gamma^2 \abs{\sigma_0^{2} - \sigma_0^{'2}}} E[\abs{Z_0}]
	\end{align*}
\end{proof}

\end{document}